\newtheorem{theorem}{Theorem}
\newtheorem*{theorem*}{Theorem}
\newtheorem{corollary}[theorem]{Corollary}
\newtheorem{remark}[theorem]{Remark}
\newtheorem{lemma}[theorem]{Lemma}
\newtheorem{fact}[theorem]{Fact}
\newtheorem{observation}[theorem]{Observation}
\newtheorem{claim}[theorem]{Claim}
\newtheorem{conjecture}[theorem]{Conjecture}
\newcommand{\abs}[1]{| #1 |}
\newcommand{\sset}[1]{\left\{ #1 \right\} }
\newcommand{\lref}[1]{(\ref{#1})}
\newcommand{\myBox}[3]{\begin{array}{|c|} \hline #1 \\ \hline #2 \\ \hline  \multicolumn{1}{c}{} \\[-0.7em] \multicolumn{1}{c}{(#3)} \end{array}}
\newcommand{\myBoxtwo}[2]{\begin{array}{|c|} \hline \\ {\raisebox{0.5em}[0cm][0cm]{\ensuremath{#1}}} \\ \hline \multicolumn{1}{c}{} \\[-0.7em]\multicolumn{1}{c}{(#2)} \end{array}}
\newcommand{\Allocation}[6]{\myBox{#1}{#2}{1}\hspace{2em}\myBox{#3}{#4}{2}\hspace{2em}\myBox{#5}{#6}{3}}
\newcommand{\Allocationfive}[5]{\myBox{#1}{#2}{1}\hspace{2em}\myBox{#3}{#4}{2}\hspace{2em}\myBoxtwo{#5}{3}}
\newcommand{\Allocationfivetwo}[5]{\myBox{#1}{#2}{1}\hspace{2em}\myBoxtwo{#3}{2}\hspace{2em}\myBox{#4}{#5}{3}}
\newcommand{\Allocationfiveone}[5]{\myBoxtwo{#1}{1}\hspace{2em}\myBox{#2}{#3}{2}\hspace{2em}\myBox{#4}{#5}{3}}
\newcommand{\VAL}{\mathit{VAL}_3}
\title{EFX Exists for Three Agents}
\author{Bhaskar Ray Chaudhury\thanks{MPI for Informatics, Saarland Informatics Campus, Graduate School of Computer Science, Saarbr\"ucken, Germany} \and Jugal Garg\thanks{University of Illinois at Urbana-Champaign} \and Kurt Mehlhorn\thanks{MPI for Informatics, Saarland Informatics Campus, Germany}}
\author{Bhaskar Ray Chaudhury\thanks{MPI for Informatics, Saarland Informatics Campus, Graduate School of Computer Science, Saarbr\"ucken, Germany}\\ \texttt{\small braycha@mpi-inf.mpg.de} \and Jugal Garg\thanks{University of Illinois at Urbana-Champaign. Supported by NSF Grants CCF-1755619 (CRII) and CCF-1942321 (CAREER)}\\ \texttt{\small jugal@illinois.edu}  \and Kurt Mehlhorn\thanks{MPI for Informatics, Saarland Informatics Campus, Germany}\\ \texttt{\small mehlhorn@mpi-inf.mpg.de}}
\date{}
\newcommand{\NSW}{{\mathit{NSW}}}
\begin{document}
\maketitle

\begin{abstract} 
We study the problem of distributing a set of indivisible items among agents with additive valuations in a \emph{fair} manner. The fairness notion under consideration is Envy-freeness up to \emph{any} item (EFX). 
Despite significant efforts by many researchers for several years, the existence of EFX allocations has not been settled beyond the simple case of two agents. In this paper, we show constructively that an EFX allocation always exists for three agents. Furthermore, we falsify the conjecture by Caragiannis et al.~\cite{CaragiannisGravin19} by showing an instance with three agents for which there is a partial EFX allocation (some items are not allocated) with higher Nash welfare than that of any complete EFX allocation. 

  \end{abstract}

\section{Introduction}
Discrete fair division of resources is a fundamental problem in many multi-agent settings. Here, the goal is to distribute a set $M$ of $m$ \emph{indivisible} items among $n$ agents in a \emph{fair} manner. Each agent $i$ has a valuation function $v_i \colon 2^{M} \rightarrow \mathbb{R}_{\geq 0}$ that quantifies the amount of utility agent $i$ derives from each subset of items. In case of \emph{additive} valuation functions, $v_i(S) := \sum_{j\in S} v_i(\{j\}),\ \forall S\subseteq M$. Let $X=\langle X_1, X_2, \dots, X_n\rangle$ denote a partition of $M$ into $n$ bundles such that $X_i$ is allocated to agent $i$. Among various choices, \emph{envy-freeness} is the most natural fairness concept, where no agent $i$ envies another agent $j$'s bundle, i.e., for all agents $i$, $j$ with $i \neq j$ we have $v_i(X_i) \ge v_i(X_j)$. However, an envy-free allocation does not always exist, e.g., consider allocating a single valuable item among $n\ge 2$ agents. This necessitates the study of relaxed notions of envy-freeness:

\paragraph{Envy-freeness up to \emph{one} item (EF1):} This relaxation was introduced by Budish~\cite{budish2011combinatorial}. An allocation $X$ is said to be EF1 if no agent $i$ envies another agent $j$ after the removal of \emph{some} item in $j$'s bundle, i.e., $v_i(X_i) \geq v_i(X_j \setminus g)$ for \emph{some} $g \in X_j$. So we allow $i$ to envy $j$, but the envy must disappear after the removal of some valuable item (according to agent $i$) from $j$'s bundle. Note that there is no actual removal: This is simply to assess how agent $i$ values his own bundle when compared to $j$'s bundle. It is well known that an EF1 allocation always exists, and it can be obtained in polynomial time using the famous envy-cycles procedure by Lipton et al.~\cite{LiptonMMS04}. However, an EF1 allocation may be unsatisfactory: Intuitively, EF1 insists that envy disappears after the removal of the \emph{most valuable item} according to the envying agent from the envied agent's bundle---however, in many cases, the most valuable item might be the primary reason for very large envy to exist in the first place. Therefore, stronger notions of fairness are desirable in many circumstances.

\paragraph{Envy-freeness up to \emph{any} item (EFX):} This relaxation was introduced by Caragiannis et al.~\cite{CaragiannisKMP016}. An allocation $X$ is said to be EFX if no agent $i$ envies another agent $j$ after the removal of \emph{any} item in $j$'s bundle, i.e., $v_i(X_i) \geq v_i(X_j \setminus g)$ for \emph{all} $g \in X_j$. Unlike EF1, in an EFX allocation, the envy between any pair of agents disappears after the removal of the \emph{least valuable item} (according to agent $i$) from $j$'s bundle. Note that every EFX allocation is an EF1 allocation, but not \textit{vice-versa}. Consider a simple example of two agents with additive valuations and three items $\{a, b, c\}$ from~\cite{CKMS20}, where the agents valuation for individual items are as follows, 
\begin{center}
\begin{minipage}[b]{0.3\linewidth}
\centering
\begin{eqnarray*}
  \setlength{\arraycolsep}{0.5ex}\setlength{\extrarowheight}{0.25ex}
\begin{array}{@{\hspace{1ex}}c@{\hspace{1ex}}||@{\hspace{1ex}}c@{\hspace{1ex}}|@{\hspace{1ex}}c@{\hspace{1ex}}|@{\hspace{1ex}}c@{\hspace{1ex}}|@{\hspace{1ex}}c@{\hspace{1ex}}}
    \  & g_1 \ & g_2 \ & g_3 \ \\[.5ex] \hline
    Agent~1 \ & 1 \ & 1 \ & 2 \ \\[.5ex] \hline
    Agent~2 \ & 1 \ & 1 \ & 2 \ \\[.5ex] 
\end{array}
\end{eqnarray*}
\end{minipage}
\end{center}

Observe that $g_3$ is twice as valuable than $g_1$ or $g_2$ for both agents. An allocation where one agent gets $\{g_1\}$ and the other gets $\{g_2,g_3\}$ is  EF1 but not EFX. The only possible EFX allocation is where one agent gets $\left\{g_3\right\}$ and the other gets $\left\{g_1,g_2\right\}$, which is clearly fairer than the given EF1 allocation. This example also shows how EFX helps to rule out some unsatisfactory EF1 allocations. Caragiannis et al.~\cite{CaragiannisGravin19} remark that 

\begin{quote}
``\textit{Arguably, EFX is the best fairness analog of envy-freeness of indivisible items.}'' 
\end{quote}

While an EF1 allocation is always guaranteed to exist, very little is known about the existence of EFX allocations. Caragiannis et al.~\cite{CaragiannisKMP016} state that 

\begin{quote}
``\textit{Despite significant effort, we were not able to settle the question of whether an EFX allocation always exists (assuming all items must be allocated), and leave it as an enigmatic open question.}'' 
\end{quote}

Plaut and Roughgarden~\cite{TimPlaut18} show two scenarios for which EFX allocations are guaranteed to exist: $(i)$ All agents have identical valuations (i.e., $v_1=v_2=\dots =v_n$), and $(ii)$ Two agents (i.e., $n=2$). Unfortunately, starting from three agents, even for the well studied class of \emph{additive valuations}, it is open whether EFX allocations exist. Plaut and Roughgarden ~\cite{TimPlaut18} also remark that: 

\begin{quote}
``\textit{The problem seems highly non-trivial even for three players with different additive valuations.}'' 
\end{quote}

Furthermore, it is also suspected in~\cite{TimPlaut18} that EFX allocations may not exist in general settings: 

\begin{quote}
``\textit{We suspect that at least for general valuations, there exist instances where no EFX allocation exists, and it may be easier to find a counterexample in that setting.}'' 
\end{quote}

Contrary to this suspicion, we show that 

\begin{theorem*}
	EFX allocations always exist for three agents with additive valuations.
\end{theorem*}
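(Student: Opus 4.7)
The strategy is to work with \emph{partial} EFX allocations. Call a tuple $(X_1, X_2, X_3; P)$ a partial EFX allocation if $X_1, X_2, X_3, P$ partition $M$ and the triple $(X_1, X_2, X_3)$ satisfies the EFX condition (ignoring the pool $P$ of unallocated items). Such allocations always exist trivially (take $X_i = \emptyset$ for all $i$). The aim is to show that one exists with $P = \emptyset$.

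First, I would fix a potential $\Phi$ on partial EFX allocations and pick a maximizer. A natural choice is a lexicographic vector of the three own-bundle values; importantly, because the paper itself shows that partial EFX allocations can Nash-dominate complete ones, the potential cannot simply be Nash welfare — it must be carefully designed so that transferring pool items to an existing bundle never decreases it. Among all $\Phi$-maximizers, take one with $|P|$ minimum. Assume for contradiction $P \neq \emptyset$ and fix some $g \in P$.

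The central concept is that of a \emph{champion}. For distinct agents $i, j$ and a pool item $g$, call $i$ a champion against $j$ (via $g$) if there exists a proper subset $S \subsetneq X_j \cup \{g\}$ such that $v_i(S) > v_i(X_i)$ and $v_i(S \setminus \{x\}) \ge v_i(X_i)$ for every $x \in S$ — i.e., $i$ EFX-prefers $S$ over its own bundle. When a champion exists, one can try the reassignment: give $S$ to $i$, pool $X_i$, and redistribute the residual $X_j \setminus S$. The key lemma to prove is that at least one champion-based reassignment yields a partial EFX allocation whose potential is strictly larger, or whose pool is strictly smaller, contradicting the choice of $(X_1, X_2, X_3; P)$. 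The easy subcase is when no champion exists at all against any bundle via $g$: then the pool item $g$ can be appended to some $X_k$ (the agent most envious of $g$) without violating EFX, shrinking $|P|$ and contradicting minimality.

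The main obstacle is the remaining case, where several champions exist simultaneously across the three agents and any single local swap breaks EFX between some other pair. Here the three-agent restriction is essential: the champion digraph has only three vertices and a bounded number of directed-edge configurations, so the proof can proceed by exhaustive case analysis. For each configuration I would exhibit an explicit ``rotation'' — either a direct swap, a two-step transfer in which one agent takes a subset of another's bundle and the displaced agent takes from the third, or a cyclic exchange among all three — and verify that the new tuple is still partial EFX and strictly increases $\Phi$. Combining these reassignments with the no-champion case covers all possibilities, forcing $P = \emptyset$ and producing the desired complete EFX allocation. The combinatorial explosion of champion configurations beyond three agents is precisely what makes the general case so much harder, and I expect this case analysis — specifically, handling the cyclic champion configuration without destroying EFX somewhere else — to be the most delicate part of the argument.
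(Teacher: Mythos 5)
Your skeleton matches the paper's: partial EFX allocations, a lexicographic potential on the value vector (correctly ruling out Nash welfare because of the monotonicity counterexample), a champion notion tied to an unallocated good $g$, and an exhaustive case analysis over champion-graph configurations for three agents. But there are two genuine gaps. First, your champion definition is too weak: you call $i$ a champion against $j$ whenever $i$ has \emph{some} minimal envied proper subset $S \subsetneq X_j \cup \{g\}$, i.e.\ whenever $i$ envies $X_j \cup \{g\}$ at all. The paper's champion is the \emph{most envious} agent, the one minimizing over all agents the cardinality $\kappa_X(i, X_j \cup g)$ of a smallest envied subset. This minimality over agents is exactly what guarantees that when the champion receives its reduced bundle $(X_j \cup g)\setminus G_{ij}$, \emph{no} agent strongly envies it (every agent's threshold cardinality is at least $\kappa_X(X_j\cup g)$, so removing one more item kills all envy). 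With your definition, handing $S$ to $i$ can create strong envy from a third agent, and the reassignment step you describe does not go through.

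Second, and more fundamentally, the case analysis you defer is the entire mathematical content, and it is not a routine verification. The paper does not merely ``rotate bundles'': it must split each bundle into an upper half $X_j \setminus G_{ij}$ and a lower half $G_{ij}$ determined by the champion structure, derive orderings of these half-bundles across agents (e.g.\ each agent's favorite upper half is the one belonging to the agent it champions, and $G_{ij} <_i g \leq_i G_{ki}$), and then perform multi-step exchanges --- a cyclic shift of upper halves, followed by replacing a lower half with $g$, followed by extracting a minimal envied subset $Z$ of a recombined bundle --- with a separate EFX verification after each step. Moreover, your stated design principle for the potential (``transferring pool items to an existing bundle never decreases it'') is not what is actually needed: in the two-source configurations the paper must accept that two agents' values strictly \emph{decrease}, and only the fixed lexicographically-first agent $a$ improves; the argument then has to be run separately for $a=1$, $a=2$, and $a=3$, each requiring a different construction. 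None of this is implied by the existence of ``an explicit rotation,'' so as written the proposal is a plausible plan rather than a proof.
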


\paragraph{EFX with \emph{charity}:} Quite recently there have been studies~\cite{CaragiannisGravin19,CKMS20} that consider relaxations of EFX, called ``EFX with charity''. Here we look for partial EFX allocations, where not all items need to be allocated (some of them remain unallocated). There is a trivial such allocation where no item is allocated to any agent.  Therefore, the goal is to determine allocations with some \emph{qualitative} or \emph{quantitative} bound on the set of unallocated items. For instance, Chaudhury et al.~\cite{CKMS20} show how to determine a partial EFX allocation $X$ and a pool of unallocated items $P$ such that no agent envies the pool (i.e. for any agent $i$, we have $v_i(X_i) \geq v_i(P)$), and $P$ has less than $n$ items (i.e., $\lvert P \rvert < n$), even in the case of \textit{general valuations}. In case of additive valuations, Caragiannis et al. ~\cite{CaragiannisGravin19} show the existence of a partial EFX allocation $X = \langle X_1,X_2, \dots , X_n \rangle$, where every agent gets at least half the value of his bundle in the allocation that maximizes the \emph{Nash welfare} i.e., the geometric mean of agents' valuations. (suggesting that unallocated items are not too valuable).

The Nash welfare of a fair allocation is often considered as a measure of its \emph{efficiency} ~\cite{CaragiannisGravin19}: Intuitively, it captures how much \emph{average} welfare the allocation achieves while still remaining fair. The result of Caragiannis et al.~\cite{CaragiannisGravin19} imply that there are \emph{efficient} partial EFX allocations (partial EFX allocations with a 2-approximation of the maximum possible Nash welfare). Indeed, it is a natural question to ask whether there are complete EFX allocations (all items are allocated) with good efficiency. To this end, Caragiannis et al.~\cite{CaragiannisGravin19} conjecture: 

\begin{quote}
``\textit{In particular, we suspect that adding an item to an allocation problem (that provably has an EFX allocation) yields another problem that also has an EFX allocation with at least as high Nash welfare as the initial one.}''\footnote{This was posed as a monotonicity conjecture in their presentation at EC'19.} 
\end{quote}

If this conjecture is true, it implies the existence of an efficient complete EFX allocation. We show (in Section~\ref{counter_example_monotonicity}) that 

\begin{center}
The above conjecture is false. 
\end{center}

To disprove the conjecture  we exhibit an instance where there exists a partial EFX allocation with higher Nash welfare than the Nash welfare of any complete EFX allocation. This also highlights an inherent  barrier in the current techniques to determining EFX allocations: Several of the existing algorithms for approximate EFX allocations (\cite{TimPlaut18}) and EFX allocations with charity (\cite{CKMS20}) start with a inefficient partial  EFX allocation and make it more efficient iteratively by cleverly allocating some of the unallocated items and unallocating some of the allocated items. However, our instance in Section~\ref{counter_example_monotonicity} shows that such approaches will not help if our goal is to determine a  complete EFX allocation. 

A large chunk of our work in this paper develops better tools to overcome this particular barrier, and we consider the tools introduced here as the most innovative technical contribution of our work. We also feel that these tools and the instance may help resolving the major open problem of the existence of EFX allocations for more than three agents and more general valuations (positively or negatively).           

\subsection{Our Contributions}
Our major contribution in this paper is to prove that an EFX allocation always exists when there are three agents with additive valuations. The proof is algorithmic. To discuss our techniques, we first briefly highlight how we overcome two barriers in the current techniques.

\paragraph{Splitting bundles:} We first sketch the simple algorithm of Plaut and Roughgarden~\cite{TimPlaut18} that determines an EFX allocation when all agents have the same valuation function, say $v$. Let us restrict our attention to the special case where there is no zero marginals, i.e., for any $S \subseteq M$ and $g \notin S$ we have  $v(S\cup g) > v(S)$. Also, note that since agents have the same valuation function, if $v(X_i) < v(X_j \setminus g)$ for two agents $i$ and $j$ for some $g \in X_j$ then we have $v(X_{i_{\mathit{min}}}) < v(X_j \setminus g)$ where $i_{\mathit{min}}$ is the agent with the lowest valuation. The algorithm in ~\cite{TimPlaut18} starts off with an arbitrary allocation (not necessarily EFX), and as long as there are agents $i$ and $j$ such that $v(X_i) < v(X_j \setminus g)$ for some $g \in X_j$, the algorithm takes the item $g$ away from $j$ ($j$'s new bundle is $X_j \setminus g$) and adds it to $i_{\mathit{min}}$'s bundle ($i_{\mathit{min}}$'s new bundle is $X_{i_{\mathit{min}}} \cup g$). Also, note that after re-allocation the only changed bundles are that of $i_{\mathit{min}}$ and $j$, and both of them have valuations still higher than $i_{\mathit{min}}$'s initial valuation: $v(X_{i_{\mathit{min}}} \cup g) > v(X_{i_{\mathit{min}}})$ and $v(X_j \setminus g) > v(X_{i_{\mathit{min}}})$. Observe that such an operation increases the valuation of an agent with the lowest valuation. Thus, after finitely many applications of this re-allocation we must arrive at an EFX allocation. Note that this crucially uses the fact that the agents have identical valuations. In the general case, the valuation of agent $j$ may drop significantly after removing $g$ and $j$'s current valuation may be even less than $i_{\mathit{min}}$'s initial valuation.  \emph{Therefore, it is important to understand how agents value item(s) that we move across the bundles.} To this end, we carefully split every bundle into \emph{upper} and \emph{lower} half bundles (see~\eqref{split into upper and lower} in Section~\ref{prelim}). We systematically quantify the agent's relative valuations agents have for these upper and lower half bundles, and in most cases, we are able to move these bundles from one agent to the other, improve the valuation of some of the agents, and  while still guaranteeing EFX property. This idea is detailed in Sections~\ref{threesources} and ~\ref{twosources}.

\paragraph{A new potential function:}  We need to show that there is progress after every swap of half bundles. The typical method here is to show improvement of the valuation vector on the Pareto front (see [CKMS20] and [PR18]). However, there are limitations to this approach: In particular, we show an instance and a partial EFX allocation such that the valuation vector of any complete EFX allocation does not Pareto dominate the valuation vector of the existing partial EFX allocation. To overcome this barrier, we first pick an arbitrary agent $a$ at the beginning and show that whenever we are unable to improve the valuation vector on the Pareto front, we can strictly increase $a$'s valuation. In other words, the valuation of a particular agent $a$ never decreases throughout re-allocations, and it improves after finitely many re-allocations, showing convergence. A more elaborate discussion on this technique is presented in Section~\ref{prelim}. 

\subsection{Further Related Work}
Fair division has received significant attention since the seminal work of Steinhaus~\cite{Steinhaus48} in the 1940s, where he introduced the cake cutting problem among $n>2$ agents. Perhaps the two most crucial notions of fairness properties that can be guaranteed in case of divisible items are \emph{envy-freeness} and \emph{proportionality}. In a proportional allocation, each agent gets at least a $1/n$ share of all the items. In case of indivisible items, as mentioned earlier, none of these two notions can be guaranteed. While EF1 and EFX are fairness notions that relax envy-freeness, the most popular notion of fairness that relaxes proportionality for indivisible items is \emph{maximin share} (MMS), which was introduced by Budish~\cite{budish2011combinatorial}. While MMS allocations do not always exist~\cite{KPW18}, but there has been extensive work to come up with approximate MMS allocations~\cite{budish2011combinatorial,BL16,AMNS17,BK17,KPW18,GhodsiHSSY18,JGargMT19,GargT19}. 

While much research effort goes into finding fair allocations, there has also been a lot of interest in guaranteeing \emph{efficient} fair allocations. A standard notion of efficiency is \emph{Pareto-optimality}\footnote{An allocation $X = \langle X_1,\dots ,X_n \rangle$ is Pareto-optimal if there is no allocation $Y = \langle Y_1, \dots , Y_n \rangle$ where $v_i(Y_i) \geq v_i(X_i)$ for all $i \in [n]$ and $v_j(Y_j) > v_j(X_j)$ for some $j$.}.  Caragiannis et al.\ \cite{CaragiannisKMP016} showed that any allocation that has the maximum Nash welfare is guaranteed to be Pareto-optimal (efficient) and EF1 (fair). Therefore, the Nash welfare of an allocation is also considered as a measure of efficiency and fairness of an allocation. However, finding an allocation with the maximum Nash welfare is APX-hard~\cite{Lee17}, and its approximation has received a lot of attention recently, e.g.,~\cite{ColeG18,ColeDGJMVY17,AnariGSS17,GargHM18,AnariMGV18,BKV18,ChaudhuryCGGHM18,GargKK20}.  Barman et al.~\cite{BKV18} give a pseudopolynomial algorithm to find an allocation that is both EF1 and Pareto-optimal. Other works try to approximate MMS with Pareto-optimality~\cite{GargM19} or explore relaxations of EFX with high Nash welfare~\cite{CaragiannisGravin19}. 

\paragraph{Applications:} There are several real-world scenarios where resources need to be divided fairly and efficiently, e.g., splitting rent among tenants, dividing inheritance property in a family, splitting taxi fares among riders, and many more. One examples of fair division techniques used in practice is Spliddit (\url{http://www.spliddit.org}).  Since its launch in 2014, Spliddit has had several thousands of users~\cite{CaragiannisKMP016}. For more details on Spliddit, we refer the reader to \cite{GP14,TimPlaut18}. Another example is \emph{Course Allocate}, which  is  used by the Wharton School at the University of Pennsylvania to fairly allocate 350 courses to 1700 MBA students~\cite{TimPlaut18, BudishCKO17}. Kurokawa et al.\ \cite{KurokawaPS18} used \emph{leximin fairness} to allocate unused classrooms in public schools to charter schools in California. The best part of the allocations determined in all these applications is that they yield results that not only \emph{seem} fair on most instances, but also come with mathematical guarantees.

\section{Preliminaries and Technical Overview}\label{prelim}
An instance $I$ of fair allocation problem is a triple $\langle [3], M, \mathcal{V} \rangle$, where we have three agents $1$, $2$, and $3$, a set $M$ of $m$ indivisible items {(or goods)}, and a set of valuation functions $\mathcal{V} = \left\{v_1,v_2,v_3\right\}$, where each $v_i \colon 2^M \rightarrow \mathbb{R}_{\geq 0}$ captures the utility agent $i$ has for all the different subsets of goods that can be allocated. We assume that the valuation functions are \emph{additive} ($v_i(S) = \sum_{g \in S} v_i(\left\{g\right\})$) and \emph{normalized} ($v_i(\emptyset)= 0$). For ease of notation, we write $v_i(g)$ for $v_i(\left\{g\right\})$. Further, we write $S \oplus_i T$ for $v_i(S) \oplus v_i(T)$ with $\oplus \in \left\{\leq, \geq ,< , > \right\}$. Given an allocation $X = \langle X_1,X_2, \dots ,X_n \rangle$ we say that $i$ \emph{strongly envies} a bundle $S \subseteq M$ if $X_i <_i S \setminus g$ for some $g \in S$, and we say that $i$ \emph{weakly envies} $S$ if $X_i <_i S$ but $X_i \geq_i S \setminus g$ for all $g \in S$. From this perspective an allocation is an EFX allocation if and only if no agent strongly envies another agent.

\paragraph{Non-degenerate instances:} We call an instance $I = \langle [3], M, \mathcal{V} \rangle$ non-degenerate if and only if no agent values two different sets equally, i.e., $\forall i \in [3]$ we have $v_i(S) \neq v_i(T)$ for all $S \neq T$. We first show that it suffices to deal with non-degenerate instances. Let $M = \left\{g_1,g_2,\dots,g_m\right\}$. We perturb any instance $I$ to $I(\varepsilon) = \langle [3],M ,\mathcal{V}(\varepsilon) \rangle$, where for every $v_i \in \mathcal{V}$ we define $v'_i \in \mathcal{V}(\varepsilon)$, as

$$ v'_i(g_j) = v_i(g_j) + \varepsilon 2^{j}.$$ 

\begin{lemma}
  \label{non-degeneracy-technical} \label{non-degeneracy-main}
  Let $\delta = \min_{i \in [3]} \min_{S,T \colon v_i(S) \neq v_i(T)} \abs{ v_i(S) - v_i(T)}$ and let $\varepsilon > 0$ be such that $\varepsilon \cdot 2^{m+1}  < \delta$. Then
  \begin{enumerate}
  \item For any agent $i$ and $S,T \subseteq M$ such that $v_i(S) > v_i(T)$, we have $v'_i(S) > v'_i(T)$.
  \item $I(\varepsilon)$ is a non-degenerate instance. Furthermore, if $X = \langle X_1,X_2,X_3 \rangle$ is an EFX allocation for $I(\varepsilon)$ then $X$ is also an EFX allocation for $I$.
    \end{enumerate}
\end{lemma}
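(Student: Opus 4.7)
The plan is to handle part 1 by bounding the total perturbation, and then to split part 2 into two subclaims: non-degeneracy via a binary-representation argument, and the EFX transfer via the contrapositive of part 1.

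For part 1, I would write
\[
v'_i(S) - v'_i(T) = \bigl(v_i(S) - v_i(T)\bigr) + \varepsilon \Bigl(\sum_{g_j \in S} 2^j - \sum_{g_j \in T} 2^j\Bigr).
\]
Since $v_i(S) > v_i(T)$, by definition of $\delta$ we have $v_i(S) - v_i(T) \geq \delta$. The perturbation term is bounded in absolute value by $\varepsilon \sum_{j=1}^{m} 2^{j} < \varepsilon \cdot 2^{m+1} < \delta$ by the hypothesis on $\varepsilon$. Thus $v'_i(S) - v'_i(T) > 0$.

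For non-degeneracy in part 2, fix an agent $i$ and sets $S \neq T$. If $v_i(S) \neq v_i(T)$, say $v_i(S) > v_i(T)$, then part 1 gives $v'_i(S) > v'_i(T)$, hence $v'_i(S) \neq v'_i(T)$. If instead $v_i(S) = v_i(T)$, then
\[
v'_i(S) - v'_i(T) = \varepsilon \Bigl(\sum_{g_j \in S} 2^j - \sum_{g_j \in T} 2^j\Bigr),
\]
and since $S \neq T$, the symmetric difference $S \triangle T$ is a nonempty subset of $\{g_1,\dots,g_m\}$; by uniqueness of binary representation (applied to the positive and negative contributions coming from $S \setminus T$ and $T \setminus S$), the parenthesized integer is nonzero, so $v'_i(S) \neq v'_i(T)$.

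For the EFX transfer, suppose $X$ is EFX under $I(\varepsilon)$. Fix $i \neq j$ and $g \in X_j$; EFX gives $v'_i(X_i) \geq v'_i(X_j \setminus g)$. If we had $v_i(X_i) < v_i(X_j \setminus g)$, then applying part 1 to $S := X_j \setminus g$ and $T := X_i$ would yield $v'_i(X_j \setminus g) > v'_i(X_i)$, a contradiction. Hence $v_i(X_i) \geq v_i(X_j \setminus g)$, so $X$ is EFX for $I$.

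I do not expect a serious obstacle here; the only subtlety is justifying why the integer $\sum_{g_j \in S} 2^j - \sum_{g_j \in T} 2^j$ is nonzero whenever $S \neq T$, which is exactly the uniqueness of binary expansions applied to the indicator vectors of $S$ and $T$, and the calibration $\varepsilon \cdot 2^{m+1} < \delta$ ensures the additive perturbation can never flip a strict inequality in the original instance.
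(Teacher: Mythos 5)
Your proposal is correct and follows essentially the same route as the paper: bound the perturbation by $\varepsilon(2^{m+1}-2)<\delta$ for part 1, split part 2 into the case $v_i(S)\neq v_i(T)$ (handled by part 1) and the case $v_i(S)=v_i(T)$ (handled by uniqueness of binary representations on $S\setminus T$ and $T\setminus S$), and prove the EFX transfer by contradiction via part 1. The only cosmetic difference is that you keep the sums over all of $S$ and $T$ rather than over the set differences, which changes nothing.
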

\begin{proof}
For the first statement of the lemma, observe that 
\begin{align*}
 v'_i(S) - v'_i(T)  &=  v_i(S) - v_i(T)  + \varepsilon(\sum_{g_j \in S \setminus T}2^j - \sum_{g_j \in T \setminus S}2^j) \\
                                   &\geq \delta -  \varepsilon \sum_{g_j \in T \setminus S}2^j\\
                                   &\geq \delta -  \varepsilon \cdot (2^{m+1}-1)\\
                                   &>0 
\end{align*}

For the second statement of the lemma, consider any two sets $S,T \subseteq M$ such that $S \neq T$. Now, for any $i \in [3]$, if $v_i(S) \neq v_i(T)$, we have $v'_i(S) \neq v'_i(T)$ by the first statement of the lemma. If $v_i(S) = v_i(T)$, we have $v'_i(S) - v'_i(T) = \varepsilon(\sum_{g_j \in S \setminus T}2^j - \sum_{g_j \in T \setminus S}2^j) \neq 0$ (as $S \neq T$). Therefore, $I(\varepsilon)$ is non-degenerate.
 
For the final claim, let us assume that $X$ is an EFX allocation in $I(\varepsilon)$ and not an EFX allocation in $I$. Then there exist $i,j$, and $g \in X_j$ such that $v_i(X_j \setminus g) > v_i(X_i)$. In that case, we have $v'_i(X_j \setminus g) > v'_i(X_i)$ by the first statement of the lemma, implying that $X$ is not an EFX allocation in $I(\varepsilon)$ as well, which is a contradiction. 
\end{proof}

\textit{From now on we only deal with non-degenerate instances}. {In non-degenerate instances, all goods have positive value for all agents.}

\paragraph{Overall approach:} An allocation $X'$ \emph{Pareto dominates} an allocation $X$ if $v_i(X_i) \le v_i(X_i')$ for all $i$ with strict inequality for at least one $i$. The existing algorithms for ``EFX with charity'' \cite{CKMS20} or ``approximate EFX allocations'' \cite{TimPlaut18} construct a sequence of EFX allocations in which each allocation Pareto dominates its predecessor. However we exhibit in Section~\ref{counter_example_monotonicity} a partial EFX allocation that is not Pareto dominated by any complete EFX allocation. Thus we need a more flexible approach in the search for a complete EFX allocation. 

We name the agents $a$, $b$, and $c$ arbitrarily and consider the lexicographic ordering of the triples
\[    \phi(X) = (v_a(X_a), v_b(X_b), v_c(X_c)), \]
i.e., $\phi(X) \prec_{\mathit{lex}} \phi(X')$ ($X'$ \emph{dominates} $X$) if (i) $v_a(X_a) < v_a(X'_a)$ or (ii) $v_a(X_a) = v_a(X'_a)$ and $v_b(X_b) < v_b(X'_b)$ or (iii) $v_a(X_a) = v_a(X'_a)$ and $v_b(X_b) = v_b(X'_b)$ and $v_c(X_c) < v_c(X'_c)$.
We construct a sequence of allocations in which each allocation dominates its predecessor. Of course, if $X'$ Pareto dominates $X$, then it also dominates $X$, so we can use all the update rules in~\cite{CKMS20}. 

Our goal then is to iteratively construct a sequence of EFX allocations such that each EFX allocation dominates its predecessor. 

\paragraph{Most envious agent:} We use the notion of a most envious agent, introduced in~\cite{CKMS20}. Consider an allocation $X$, and a set $S \subseteq M$ that is envied by at least one agent. For an agent $i$ such that $S >_i X_i$, we ``measure the envy'' that agent $i$ has for $S$ by $\kappa_X(i,S)$, where $\kappa_X(i,S)$ is the size of a smallest subset of $S$ that $i$ still envies, i.e., $\kappa_X(i,S)$ is the smallest cardinality of a subset $S'$ of $S$ such that $S' >_i X_i$. Thus, the smaller the value of $\kappa_X(i,S)$, the greater the envy of agent $i$ for the set $S$. So let $\kappa_X(S) = \mathit{min}_{i \in [3]} \kappa_X(i,S)$. Naturally, we define the set of the \emph{most envious agents} $A_X(S)$ for a set $S$ as the set of agents with smallest values of $\kappa_X(i,S)$, i.e.,
\[      A_X(S) = \left\{i \mid S >_i X_i \text{ and } \kappa_X(i,S) = \kappa_X(S) \right\}.\]
The following simple observation about the most envious agents of specific kinds of bundles will be useful. 

\begin{observation}
\label{A_X_neverempty}
Given any allocation $X$, and an unallocated good $g$, for any $i \in [3]$, $A_X(X_i \cup g) \neq \emptyset$.
\end{observation}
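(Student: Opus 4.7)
The plan is to exploit the non-degeneracy remark that all goods have strictly positive value for all agents, and then observe that agent $i$ itself envies $X_i \cup g$. More precisely, I would argue as follows. Since $I$ is non-degenerate, $v_i(g) > 0$ for the specific agent $i$ and the unallocated good $g$; by additivity this gives
\[ v_i(X_i \cup g) = v_i(X_i) + v_i(g) > v_i(X_i), \]
so in the notation of the paper, $X_i \cup g >_i X_i$. Hence the set
\[ \{ j \in [3] \mid X_i \cup g >_j X_j \} \]
is nonempty (it contains $i$).

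Next I would unpack the definitions: for each such envying agent $j$, the quantity $\kappa_X(j, X_i \cup g)$ is a well-defined positive integer (the smallest size of a subset $S' \subseteq X_i \cup g$ with $S' >_j X_j$, which exists because $X_i \cup g$ itself is such a subset). Taking the minimum over a nonempty finite set of positive integers,
\[ \kappa_X(X_i \cup g) = \min_{j \in [3]} \kappa_X(j, X_i \cup g) \]
is attained, and the attaining agents form exactly the set $A_X(X_i \cup g)$. Hence $A_X(X_i \cup g) \neq \emptyset$.

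There is essentially no obstacle here: the entire content of the observation is that $A_X(\cdot)$ is guaranteed to be nonempty whenever at least one agent envies the set in question, and the non-degeneracy reduction ensures that $i$ strictly prefers his own augmented bundle $X_i \cup g$ over $X_i$ (since $g$ has strictly positive marginal value for $i$). The only thing to be careful about is invoking the standing assumption, established just after Lemma~\ref{non-degeneracy-main}, that we work exclusively with non-degenerate instances and therefore every good has positive value for every agent.
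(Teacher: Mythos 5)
Your argument is correct and is essentially identical to the paper's own proof: both reduce the claim to exhibiting one agent who strictly prefers $X_i \cup g$ to his own bundle, and both use non-degeneracy (every good has positive value for every agent) to conclude that agent $i$ himself satisfies $X_i \cup g >_i X_i$. The additional unpacking of $\kappa_X$ as a minimum over a nonempty finite set is a fine elaboration but adds nothing beyond what the paper leaves implicit.
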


\begin{proof}
 It suffices to prove that there exists at least one agent who strictly prefers $X_i \cup g$  over his own bundle in allocation $X$. This is guaranteed since we are dealing with non-degenerate instances, in which $X_i \cup g >_i X_i$. 
\end{proof}

\paragraph{Champions and Champion Graph $M_X$:} Let $X$ be the partial EFX allocation at any stage in our algorithm, and let $g$ be an unallocated good. We say that $i$ \emph{champions} $j$ (w.r.t $g$) if $i$ is a most envious agent for $X_j \cup g$, i.e., $ i \in A_X(X_j \cup g)$. We define the \emph{champion graph} $M_X$, where each vertex corresponds to an agent and there is a directed edge $(i,j) \in M_X$ if and only if $i$ champions $j$. 

\begin{observation}
The champion graph $M_X$ is cyclic.
\end{observation}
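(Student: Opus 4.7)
The plan is to deduce the existence of a directed cycle from a uniform lower bound on in-degrees in $M_X$, which in turn comes directly from Observation~\ref{A_X_neverempty}.

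First, I would translate Observation~\ref{A_X_neverempty} into a statement about $M_X$. For every agent $j \in [3]$, the set $A_X(X_j \cup g)$ is non-empty, so by the definition of a champion, there exists some agent $i$ with $i \in A_X(X_j \cup g)$, i.e., an edge $(i,j) \in M_X$. Hence every vertex of $M_X$ has in-degree at least one.

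Next, I would invoke the elementary fact that any finite directed graph in which every vertex has in-degree at least one must contain a directed cycle. The argument is standard: starting from an arbitrary vertex $j_0$ and repeatedly choosing an in-neighbor, one constructs an infinite backward walk $j_0, j_1, j_2, \dots$ in a vertex set of size three, so some vertex $j_k$ must coincide with an earlier $j_\ell$, and the segment $j_\ell, j_{\ell-1}, \dots, j_k = j_\ell$ traces a directed cycle in $M_X$. Self-loops (an agent championing itself) are permitted here and are themselves cycles of length one, so they do not need separate treatment.

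There is really no substantial obstacle; the only subtlety worth flagging is making sure that the convention of "cyclic" in use allows self-loops, and that Observation~\ref{A_X_neverempty} applies to \emph{every} $j$, including $j$ such that $g$ might have been considered "naturally" belonging to $X_j$ — but since non-degeneracy guarantees $v_i(g) > 0$ for all $i$, the observation gives a non-empty $A_X(X_j \cup g)$ uniformly across $j$, and the argument goes through without incident.
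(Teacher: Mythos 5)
Your proof is correct and follows essentially the same route as the paper: both deduce from Observation~\ref{A_X_neverempty} that every vertex of $M_X$ has in-degree at least one, and then conclude that a finite digraph with this property must contain a directed cycle. You simply spell out the standard backward-walk pigeonhole argument that the paper leaves implicit.
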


\begin{proof}
By Observation~\ref{A_X_neverempty}, we have that the set of champions of any agent is never empty. Therefore, every vertex in $M_X$ has at least one incoming edge. Thus $M_X$ is cyclic.
\end{proof}

\emph{If $i$ champions $j$, we define $G_{ij}$ as a largest cardinality subset of $X_j \cup g$ such that $(X_j \cup g) \setminus G_{ij} >_i X_i$.} Since the valuations are additive, note that such a subset can be identified efficiently as the set $K$ of the $k$ least valuable goods for $i$ in $X_j \cup g$ such that $(X_j \cup g) \setminus K >_i X_i$ and $k$ is maximum. Now we make some small observations.

\begin{observation}
\label{nobodyenvieschampion}
 Assume $i$ champions $j$. 
 \begin{enumerate}
   \item We have $((X_j \cup g) \setminus G_{ij}) \setminus h \leq_k X_k$ for all $h \in (X_j \cup g) \setminus G_{ij}$ and all agents $k$ including $i$. 
   \item If agent $k$ does not champion $j$, we have $(X_j \cup g) \setminus G_{ij} \leq_k X_k$.
 \end{enumerate}  
\end{observation}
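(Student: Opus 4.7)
The plan is to reduce both parts to a single cardinality identity, namely $|(X_j \cup g) \setminus G_{ij}| = \kappa_X(X_j \cup g)$. First I would establish this identity. By the maximality of $|G_{ij}|$, removing any extra good $h \in (X_j \cup g) \setminus G_{ij}$ from $(X_j \cup g) \setminus G_{ij}$ destroys $i$'s envy, so $\kappa_X(i, X_j \cup g) \ge |(X_j \cup g) \setminus G_{ij}|$. For the opposite direction, additivity of $v_i$ implies that among subsets of $X_j \cup g$ of a fixed cardinality, the one consisting of the top-valued goods (for $i$) has the largest value; hence the minimum cardinality envied subset is obtained by discarding the $|G_{ij}|$ least valuable goods for $i$, giving equality. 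Because $i$ champions $j$, we have $i \in A_X(X_j \cup g)$, so $\kappa_X(i, X_j \cup g)$ coincides with $\kappa_X(X_j \cup g)$.

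Next I would record a small auxiliary fact that does most of the work: for any agent $k$ and any subset $T \subseteq X_j \cup g$ with $|T| < \kappa_X(k, X_j \cup g)$, we have $T \leq_k X_k$. If $k$ envies $X_j \cup g$ this is immediate from the definition of $\kappa_X(k, \cdot)$; if $k$ does not envy $X_j \cup g$, then by additivity and non-negativity $T \leq_k X_j \cup g \leq_k X_k$, and we can read $\kappa_X(k, X_j \cup g) = \infty$. In every case, the minimum over $k$ still gives $\kappa_X(k, X_j \cup g) \ge \kappa_X(X_j \cup g)$.

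With these two pieces, both parts fall out. For part~(1), every set of the form $((X_j \cup g) \setminus G_{ij}) \setminus h$ has cardinality $\kappa_X(X_j \cup g) - 1 < \kappa_X(k, X_j \cup g)$ for every agent $k$, so the auxiliary fact applies and yields $((X_j \cup g) \setminus G_{ij}) \setminus h \leq_k X_k$. For part~(2), if $k$ does not champion $j$, then either $k$ does not envy $X_j \cup g$ at all (in which case the fact applies directly to $T = (X_j \cup g) \setminus G_{ij}$) or $k$ envies $X_j \cup g$ but with $\kappa_X(k, X_j \cup g) > \kappa_X(X_j \cup g) = |(X_j \cup g) \setminus G_{ij}|$, and again the fact applies.

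There is no serious obstacle here: the observation is essentially a bookkeeping unwinding of the definitions of $G_{ij}$, $\kappa_X$, and $A_X$ combined with additivity. The only delicate point worth stating carefully is the use of strict inequality in part~(2): because $k \notin A_X(X_j \cup g)$, one gets $\kappa_X(k, X_j \cup g) > \kappa_X(X_j \cup g)$ rather than merely $\ge$, and this is what allows the set $(X_j \cup g) \setminus G_{ij}$ itself, and not only its proper subsets, to lie below $k$'s envy threshold.
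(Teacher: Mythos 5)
Your proof is correct and follows essentially the same route as the paper's: both rest on the identity $\lvert (X_j \cup g) \setminus G_{ij}\rvert = \kappa_X(i, X_j\cup g) = \kappa_X(X_j\cup g)$ together with the fact that no agent $k$ envies a subset of $X_j\cup g$ of cardinality below $\kappa_X(k,X_j\cup g)$, applying it with cardinality $\kappa_X(X_j\cup g)-1$ for part (1) and with the strict gap $\kappa_X(k,X_j\cup g) > \kappa_X(X_j\cup g)$ for non-champions in part (2). Your explicit treatment of agents who do not envy $X_j\cup g$ at all (reading $\kappa_X(k,\cdot)$ as $\infty$) is a small bit of extra care that the paper leaves implicit.
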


\begin{proof}
Note that by definition, $G_{ij}$ is a largest cardinality subset of $X_j \cup g$ such that $i$ values $(X_j \cup g) \setminus G_{ij}$ more than $X_i$. This implies that $(X_j \cup g) \setminus G_{ij}$ is a smallest cardinality subset of $X_j \cup g$ that $i$ values more than $X_i$. Thus $\lvert (X_j \cup g) \setminus G_{ij} \rvert = \kappa_X(i,X_j \cup g)$. Since $i$ champions $j$, we have that $ i \in A_X(X_j \cup g)$ and thus $\kappa_X(i,X_j \cup g) = \kappa_X(X_j \cup g)$. Now, no agent $k$ values a subset of $X_j \cup g$ of size less than $\kappa_X(k, X_j \cup g)$ more than $X_k$. Note that $((X_j \cup g) \setminus G_{ij}) \setminus h$ has size $\kappa_X(X_j \cup g) - 1 < \kappa_X(k,X_j \cup g)$ and ,thus, $((X_j \cup g) \setminus G_{ij}) \setminus h \leq_k X_k$.

Now if $k$ did not champion $j$ then $\kappa_X(k,X_j \cup g) < \kappa_X(X_j \cup g)$. Thus, $\lvert (X_j \cup g) \setminus G_{ij} \rvert = \kappa_X(X_j \cup g) < \kappa_X(k,X_j \cup g)$. Since $k$ values any subset of $X_j \cup g$ of size less than $\kappa_X(k,X_j \cup g)$ at most $X_k$, we have $(X_j \cup g) \setminus G_{ij} \leq_k X_k$.
\end{proof}

We next mention two cases where it is known how to obtain a {Pareto} dominating EFX allocation from an existing EFX allocation. For an allocation $X$, we define the \emph{envy graph} $E_X$, whose vertices represent agents, and in which there is a directed edge from $i$ to $j$ if $i$ envies $j$, i.e., $X_j >_i X_i$. We can assume without loss of generality (w.l.o.g.) that $E_X$ is acyclic.

\begin{fact}[\cite{LiptonMMS04}]
\label{acyclic-envy-graph}
Let $X = \langle X_1,X_2,X_3 \rangle $ be an EFX allocation. Then there exists another EFX allocation $Y = \langle Y_1,Y_2,Y_3 \rangle$, where for all $i\in [3]$, $Y_i = X_j$ for some $j \in [3]$, such that $E_Y$ is acyclic and
{$\phi(Y) \succeq_{\mathit{lex}} \phi(X)$} (because $Y$ Pareto dominates $X$).
\end{fact}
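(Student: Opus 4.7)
The plan is to apply the envy-cycle elimination procedure of Lipton, Markakis, Mossel, and Saberi and to verify that the EFX property survives each step. Starting from the EFX allocation $X$, whenever $E_X$ contains a directed cycle $i_1 \to i_2 \to \cdots \to i_\ell \to i_1$ (so that $X_{i_{k+1}} >_{i_k} X_{i_k}$ for each $k$, indices taken modulo $\ell$), I form a new allocation $X'$ by rotating bundles along the cycle: set $X'_{i_k} := X_{i_{k+1}}$ for $k=1,\dots,\ell$ and leave bundles of agents off the cycle unchanged. Each agent on the cycle strictly improves their valuation and the others are unaffected, so $X'$ Pareto dominates $X$.

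The key verification is that $X'$ remains EFX. Fix distinct agents $p, q$ and any $h \in X'_q$, and write $X'_q = X_s$ for the appropriate index $s$. If $p$ is not on the cycle then $X'_p = X_p$, and the EFX property of $X$ gives $X_p \geq_p X_s \setminus h$, i.e., $X'_p \geq_p X'_q \setminus h$. If $p = i_k$ lies on the cycle, then $X'_p = X_{i_{k+1}} >_p X_{i_k}$, while EFX of $X$ gives $X_{i_k} \geq_p X_s \setminus h$; chaining yields $X'_p >_p X'_q \setminus h$. In both cases the EFX inequality holds, so $X'$ is EFX.

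I would then iterate, eliminating one envy cycle per step by such a rotation. Every allocation so produced is a reassignment of the original three bundles $X_1, X_2, X_3$ among the three agents, so there are at most $3! = 6$ distinct allocations available; since each rotation strictly Pareto dominates its predecessor, no allocation is ever revisited and the process terminates at an EFX allocation $Y$ with $E_Y$ acyclic and $Y_i = X_j$ for some $j \in [3]$, for every $i$. Pareto dominance is transitive along the sequence, so $Y$ Pareto dominates $X$; in particular $v_a(Y_a) \geq v_a(X_a)$, $v_b(Y_b) \geq v_b(X_b)$, and $v_c(Y_c) \geq v_c(X_c)$, giving $\phi(Y) \succeq_{\mathit{lex}} \phi(X)$.

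I do not anticipate any real obstacle. The skeleton is Lipton et al.'s envy-cycle elimination applied verbatim, and the only new point is the one-line EFX-preservation check above, which succeeds precisely because the strict envy on a cycle edge is strong enough to absorb the EFX slack inherited from $X$.
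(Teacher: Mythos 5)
Your proof is correct and is exactly the standard envy-cycle elimination argument of Lipton et al.\ that the paper invokes by citation without reproving: rotate bundles along an envy cycle, note that every agent ends up with a bundle at least as valuable as their old one so the EFX inequalities inherited from $X$ still hold, and terminate because strict Pareto improvement forbids revisiting any of the finitely many bundle permutations. No discrepancy with the paper's (implicit) argument.
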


\begin{observation}[\cite{CKMS20}]
\label{champion_in_subtree} 
Consider an EFX allocation $X$. Let $s$ be any agent and let $g$ be an unallocated good. If $i$ champions $s$ and $i$ is reachable from $s$ in $E_X$, then there is an EFX allocation $Y$ Pareto dominating $X$. Additionally, agent $s$ is strictly better off in $Y$, i.e., $Y_s >_s X_s$.
\end{observation}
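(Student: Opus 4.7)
The plan is to construct $Y$ by \emph{rotating} bundles along the envy path from $s$ to $i$ in $E_X$ and then handing $i$ the shrunk champion bundle $(X_s\cup g)\setminus G_{is}$, while returning $G_{is}$ to the pool of unallocated items. Concretely, let $s=v_0\to v_1\to\cdots\to v_k=i$ be a simple path in $E_X$; it exists because $i$ is reachable from $s$ and we may assume $E_X$ acyclic by Fact~\ref{acyclic-envy-graph}. Define $Y_{v_j}:=X_{v_{j+1}}$ for $0\le j\le k-1$, $Y_i:=(X_s\cup g)\setminus G_{is}$, and $Y_l:=X_l$ for every agent not on the path. The degenerate case $s=i$ (so $k=0$) collapses to simply setting $Y_s:=(X_s\cup g)\setminus G_{ss}$ and is handled uniformly below.

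Pareto domination and $Y_s>_s X_s$ are immediate: every edge $v_j\to v_{j+1}$ on the envy path witnesses $X_{v_{j+1}}>_{v_j} X_{v_j}$, so $Y_{v_j}>_{v_j} X_{v_j}$; and $Y_i>_i X_i$ follows from $i\in A_X(X_s\cup g)$ together with the definition of $G_{is}$. The only substantive remaining task is to verify that $Y$ itself is EFX, and this is where the work sits.

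I would do a short case analysis on ordered pairs $(l,m)$ with $l\ne m$, showing that $l$ does not strongly envy $Y_m$. If $Y_m=X_{m'}$ for some agent $m'$ (this covers agents off the path, where $m'=m$, and path agents $v_j$ with $j<k$, where $m'=v_{j+1}$), then the EFX property of $X$ gives $Y_m\setminus h = X_{m'}\setminus h \le_l X_l \le_l Y_l$ for every $h\in Y_m$, using that $Y_l\ge_l X_l$ always (with strict inequality on the path). If instead $m=i$, the required bound $Y_i\setminus h \le_l X_l \le_l Y_l$ is precisely the content of Observation~\ref{nobodyenvieschampion}(1) applied to the hypothesis that $i$ champions $s$: no agent strongly envies $(X_s\cup g)\setminus G_{is}$.

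The main, and only mild, obstacle is the bookkeeping: one has to check that the rotation is well-defined on a path with intermediate vertices, that no bundle is assigned twice, and that every comparison arising in the EFX check reduces either to EFX of $X$ or to Observation~\ref{nobodyenvieschampion}(1). Once the cases are laid out, each reduces to one of these two facts, so there is no genuinely hard step—only the structural care needed to tie the rotation to the champion inequality at the endpoint $i$.
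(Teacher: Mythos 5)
Your proposal is correct and follows essentially the same route as the paper's own proof: rotate bundles along the envy path, give the endpoint $i$ the reduced champion bundle, and reduce every strong-envy check either to the EFX property of $X$ (when the target bundle is some old $X_{m'}$) or to Observation~\ref{nobodyenvieschampion} (part 1) (when the target is $i$'s new bundle). The only cosmetic differences are that you write $i$'s new bundle as $(X_s\cup g)\setminus G_{is}$ rather than the paper's $(X_s\setminus G_{is})\cup g$ (these coincide whenever $g\notin G_{is}$, and yours matches the statement of Observation~\ref{nobodyenvieschampion} verbatim) and that you make the degenerate case $s=i$ explicit.
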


\begin{proof}
 We have that $i$ is reachable from $s$ in $E_X$. Let $t_1 \rightarrow t_2 \rightarrow \dots \rightarrow t_k$ be the path from $t_1 = s$ to $t_k = i$ in $E_X$. We determine a new allocation $Y$ as follows:
\begin{align*}
Y_{t_j} &= X_{t_{j+1}} &\text{ for } j \in [k-1]\\
Y_i &= (X_s \setminus G_{is}) \cup g\\
Y_{\ell} &= X_{\ell} &\text{for all other }\ell 
\end{align*}
Note that every agent along the path has strictly improved his valuation: Agents $t_1$ to $t_{k-1}$ got bundles they envied in $E_X$ and agent $i$ championed $s$ and got $(X_s \setminus G_{is} \cup g)$, which is more valuable to $i$ than $X_i$ (by definition of $G_{is}$). Also, every other agent retained their previous bundles and thus their valuations are not lower than before. Thus $\phi(Y) \succ_{\mathit{lex}} \phi(X)$ and also $Y_s >_s X_s$ ($s$ was an agent along the path). It only remains to argue that $Y$ is EFX. To this end, consider any two agents $j$ and $j'$. We wish to show that $j$ does not strongly envy $j'$ in $Y$. 
\begin{description}
\item[Case ${j' \neq i}$:] Note that $Y_{j'} = X_{\ell}$ for some $\ell \in [3]$ ($j'$ either received a bundle of another agent when we shifted the bundles along the path or  retained the previous bundle). Also, note that $Y_j \geq_j X_j$ (no agent is worse off in $Y$). Therefore,  $Y_j \geq_j X_j \geq_j X_{\ell} \setminus h =_j Y_{j'} \setminus h$ for all $h \in Y_{j'}$ ($j$ did not strongly envy $\ell$ in $X$ as $X$ was EFX).

\item[Case ${j' = i}$:] We have $Y_{j'} = (X_s \setminus G_{is}) \cup g$. Since $i$ championed $s$, by Observation~\ref{nobodyenvieschampion} (part 1) we have that $((X_s \setminus G_{is}) \cup g) \setminus h \leq_j X_j $. Like earlier, $Y_j \geq_j X_j$ (no agent is worse off in $Y$). Thus $j$ does not strongly envy $i$. \qedhere
\end{description}
\end{proof}

Observation~\ref{champion_in_subtree} implies that if there is some unallocated good and (i) if the envy graph $E_X$ has a single source\footnote{A source is a vertex in $E_X$ with in-degree zero.} or (ii) any agent champions himself then there is a {strictly} {Pareto} dominating EFX allocation.

\begin{corollary}
\label{singlesource}
 Let $X$ be an EFX allocation, and $g$ be an unallocated good. If $E_X$ has a single source $s$, or $M_X$ has a $1$-cycle involving agent $s$, then there is an EFX allocation $Y$ that Pareto dominates $X$ in which $Y_s >_s X_s$.
\end{corollary}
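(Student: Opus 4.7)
The plan is to read both hypotheses as instances that feed directly into Observation~\ref{champion_in_subtree}, which already gives a Pareto-dominating EFX allocation $Y$ with $Y_s >_s X_s$ provided we can exhibit an agent $i$ who champions $s$ and is reachable from $s$ in $E_X$. So the task reduces to producing such an $i$ in each of the two hypothesized situations.

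The $1$-cycle case is immediate: if $M_X$ has a $1$-cycle at $s$, then by definition $s$ champions $s$, and $s$ is reachable from itself in $E_X$ by the empty path. Taking $i = s$ in Observation~\ref{champion_in_subtree} gives the desired $Y$. For the single-source case, first use Observation~\ref{A_X_neverempty} applied to $X_s \cup g$: the set $A_X(X_s \cup g)$ is nonempty, so $s$ has at least one champion $i \in [3]$. It then suffices to show that, when $E_X$ has a unique source $s$, every vertex of $E_X$ is reachable from $s$, since then $i$ in particular is reachable and Observation~\ref{champion_in_subtree} applies.

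The reachability claim is a standard property of finite acyclic digraphs with a unique source, and is the only non-bookkeeping step in the proof. I would argue it by contradiction: let $U$ be the set of vertices not reachable from $s$; if $U \neq \emptyset$, then no edge of $E_X$ enters $U$ from outside $U$ (otherwise the endpoint in $U$ would inherit reachability from $s$). The subgraph induced on $U$ is itself a finite DAG, hence contains a vertex of in-degree zero within $U$, which by the no-incoming-edges property is a source of the whole $E_X$; this contradicts uniqueness of $s$. Hence $U = \emptyset$ and every agent, including the champion $i$, is reachable from $s$, which completes the reduction. The main obstacle, such as it is, is recognizing that both hypotheses collapse to the pattern of Observation~\ref{champion_in_subtree}; once that is seen, the only genuine work is this reachability argument.
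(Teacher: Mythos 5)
Your proposal is correct and follows essentially the same route as the paper: both cases are reduced to Observation~\ref{champion_in_subtree} by exhibiting a champion of $s$ reachable from $s$, using Observation~\ref{A_X_neverempty} for existence in the single-source case. The only difference is that you spell out the reachability of every vertex from the unique source of an acyclic $E_X$, which the paper treats as immediate; your DAG argument for that step is sound.
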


\begin{proof}
 If $E_X$ has a single source $s$, the champion of $s$ (which always exist, by Observation~\ref{A_X_neverempty}) is reachable from $s$. If $M_X$ has a $1$-cycle involving agent $s$ then again the champion of $s$ (which is $s$ itself) is reachable from $s$. In both cases, since the champion of $s$ is reachable from $s$ in the envy graph $E_X$, there is a Pareto dominating allocation $Y$ such that $Y_s >_s X_s$ by Observation~\ref{champion_in_subtree}.   
\end{proof}

\textit{Hence, starting from Section~\ref{threesources}, we only discuss the cases where the envy-graph has more than one source and there are no self-champions.}

We start with some simple yet crucial observations.

\begin{observation}
\label{lowerhalflessvaluable}
 If $i$ champions $j$ and $X_i \geq_i X_j$, then $g \notin G_{ij}$, $G_{ij} \subseteq X_j$, and $G_{ij} <_i g$.
\end{observation}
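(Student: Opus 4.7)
The plan is to exploit the defining property of $G_{ij}$---that the residual $(X_j \cup g) \setminus G_{ij}$ is strictly preferred by $i$ to $X_i$---together with the hypothesis $v_i(X_i) \geq v_i(X_j)$. Intuitively, since $i$ does not prefer $j$'s current bundle to his own, the only way $i$ can strictly envy a subset of $X_j \cup g$ after removals is if the unallocated good $g$ remains in that subset and compensates for the removed value; moreover, the total value removed must be strictly less than $v_i(g)$, otherwise nothing envy-inducing is left.

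First I would prove $g \notin G_{ij}$ by contradiction. If $g \in G_{ij}$, then $(X_j \cup g) \setminus G_{ij}$ is a subset of $X_j$, so by non-negativity and additivity of $v_i$,
\[
v_i\bigl((X_j \cup g) \setminus G_{ij}\bigr) \leq v_i(X_j) \leq v_i(X_i),
\]
contradicting the defining inequality $(X_j \cup g) \setminus G_{ij} >_i X_i$. The second assertion then follows at once: since $G_{ij} \subseteq X_j \cup g$ and $g \notin G_{ij}$, we have $G_{ij} \subseteq X_j$.

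For the third claim $G_{ij} <_i g$, I would use additivity again. Since $g \notin G_{ij}$ and $G_{ij} \subseteq X_j$,
\[
v_i\bigl((X_j \cup g) \setminus G_{ij}\bigr) = v_i(X_j) - v_i(G_{ij}) + v_i(g).
\]
By the defining property of $G_{ij}$ combined with the hypothesis, this quantity is strictly greater than $v_i(X_i) \geq v_i(X_j)$, which rearranges to $v_i(g) > v_i(G_{ij})$, i.e., $G_{ij} <_i g$, as desired.

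The argument is essentially a clean unpacking of the definition of $G_{ij}$, so I do not anticipate any real obstacle. The one subtlety worth flagging is that the strict inequality in the third claim is imported entirely from the strict preference $(X_j \cup g) \setminus G_{ij} >_i X_i$ built into the definition of $G_{ij}$; the lemma's hypothesis is only needed as the weak inequality $v_i(X_i) \geq v_i(X_j)$, and no appeal to non-degeneracy or to the maximum-cardinality aspect of $G_{ij}$ is required.
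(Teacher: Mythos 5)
Your proof is correct and follows essentially the same route as the paper's: both arguments unpack the additivity identity $v_i((X_j \cup g) \setminus G_{ij}) = v_i(X_j) + v_i(g) - v_i(G_{ij})$ against the defining inequality $(X_j \cup g) \setminus G_{ij} >_i X_i$ and the hypothesis $X_i \geq_i X_j$. The only (immaterial) difference is the order of deductions --- the paper derives $G_{ij} <_i g$ first and concludes $g \notin G_{ij}$ from non-negativity, whereas you establish $g \notin G_{ij}$ first by a direct contradiction.
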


\begin{proof}
 We have  $i \in A_X(X_j \cup g)$. Since $g \notin X_j$,  $G_{ij} \subseteq X_j \cup g$, and valuations are additive and we have that $v_i((X_j \cup g) \setminus G_{ij}) = v_i(X_j) + v_i(g) - v_i(G_{ij})$. Again since $i \in A_X(X_j \cup g)$, by the definition of $G_{ij}$, $(X_j \cup g)\setminus G_{ij} >_i X_i$, and hence, $v_i(X_i) < v_i(X_j) + v_i(g) - v_i(G_{ij})$. Now we have $X_i \geq_i X_j$, implying that $G_{ij} <_i g$, and therefore, $g \not\in G_{ij}$.
\end{proof}

Observation~\ref{lowerhalflessvaluable} tells us that if $i$ champions $j$, and $i$ does not envy $j$, then $G_{ij} \subseteq X_j$. Therefore, we can split the bundle of agent $j$ into two parts $G_{ij}$ and $X_j \setminus G_{ij}$. We refer to $G_{ij}$ as the \emph{lower-half bundle} of $j$, and to $X_j \setminus G_{ij}$ as the \emph{upper-half bundle} of $j$, and visualize the bundle of agent $j$ as
\begin{equation}\label{split into upper and lower} X_j = \myBox{X_j \setminus G_{ij}}{G_{ij}}{j}   \qquad \parbox{0.5\textwidth}{\text{if $i$ champions $j$ and $i$ does not envy $j$.}} \end{equation}

We collect some more facts about the values of lower and upper half bundles.



\begin{observation}
\label{lowerhalfmorevaluable}
 If $i$ champions $j$ and $j$ does not champion himself (self-champion), then we have $G_{ij} \not= \emptyset$ and $G_{ij} \geq_j g$.
\end{observation}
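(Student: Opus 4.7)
The plan is to show both conclusions by comparing the cardinality of $(X_j\cup g)\setminus G_{ij}$ against $\kappa_X(j,X_j\cup g)$, exploiting the hypothesis that $j$ is not a self-champion.

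First, I would set $k := \kappa_X(X_j\cup g)$. Since $i$ champions $j$, by definition $\kappa_X(i,X_j\cup g) = k$, so the smallest subset $S'\subseteq X_j\cup g$ with $S' >_i X_i$ has size exactly $k$. By the maximality of $G_{ij}$ in the definition (and additivity of $v_i$), $(X_j\cup g)\setminus G_{ij}$ is precisely such a minimum-size envied subset; hence $|(X_j\cup g)\setminus G_{ij}| = k$.

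Next, I would use that $j$ is not a self-champion. Note that $X_j\cup g >_j X_j$ holds in any non-degenerate instance (all goods have positive value), so $j$ does envy $X_j\cup g$ and $\kappa_X(j,X_j\cup g)$ is defined. Since $j\notin A_X(X_j\cup g)$, we must have $\kappa_X(j,X_j\cup g) > k$. By the definition of $\kappa_X(j,\cdot)$, this means that every subset of $X_j\cup g$ of size at most $k$ has $v_j$-value $\le v_j(X_j)$. Applying this to the specific subset $(X_j\cup g)\setminus G_{ij}$, which has size exactly $k$, yields
\[
  v_j\bigl((X_j\cup g)\setminus G_{ij}\bigr) \le v_j(X_j).
\]

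By additivity, the left side equals $v_j(X_j) + v_j(g) - v_j(G_{ij})$ (using $g\notin X_j$ and $G_{ij}\subseteq X_j\cup g$), so rearranging gives $v_j(G_{ij}) \ge v_j(g)$, which is the second claim $G_{ij} \ge_j g$. Finally, since in a non-degenerate instance every good has strictly positive value for every agent, $v_j(g) > 0$, whence $v_j(G_{ij}) > 0$, forcing $G_{ij} \ne \emptyset$.

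I do not anticipate a real obstacle here; the only subtlety is bookkeeping the non-degeneracy assumption twice, once to ensure $j$ genuinely envies $X_j \cup g$ so that $\kappa_X(j, X_j \cup g)$ is defined (and strict non-membership in $A_X$ translates to the strict inequality $\kappa_X(j,\cdot) > k$), and once at the end to deduce $v_j(g) > 0$ and hence $G_{ij}\ne\emptyset$.
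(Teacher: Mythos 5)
Your proof is correct and follows essentially the same route as the paper: both derive $v_j((X_j\cup g)\setminus G_{ij})\le v_j(X_j)$ from the fact that $j$ is not a self-champion, then use additivity to conclude $G_{ij}\ge_j g$ and positivity of $v_j(g)$ to get $G_{ij}\ne\emptyset$. The only difference is presentational: the paper obtains that inequality by citing Observation~\ref{nobodyenvieschampion} (part 2), whereas you re-derive it inline via the cardinality comparison $\kappa_X(j,X_j\cup g)>\kappa_X(X_j\cup g)=|(X_j\cup g)\setminus G_{ij}|$, which is exactly the argument behind that observation.
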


\begin{proof}
Since $j$ does not self-champion, by Observation~\ref{nobodyenvieschampion} (part 2), we have that $(X_j \cup g)\setminus G_{ij} \leq_j X_j$. Since $g \notin X_j$ and $G_{ij} \subseteq X_j \cup g$ we have $v_j((X_j \cup g) \setminus G_{ij}) = v_j(X_j) +v_j( g) - v_j(G_{ij}) \leq v_j(X_j)$, implying that $G_{ij} \geq_j g$. Since the value of $g$ for $j$ is non-zero, $G_{ij}$ is non-empty. 
\end{proof}


\begin{observation}
\label{upperhalfvaluable}
 Let $i$ champion $j$, and $X_i \geq_i X_j$. Let $i'$ champion $k$ and $X_{i'} \geq_{i'} X_k$. If $i$ does not champion $k$, then $X_j \setminus G_{ij} >_i X_{k} \setminus G_{i'k}$.
\end{observation}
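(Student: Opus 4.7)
The plan is to chain together the preceding two observations (Observation~\ref{lowerhalflessvaluable} and part 2 of Observation~\ref{nobodyenvieschampion}) and exploit additivity of the valuations. There is no hard combinatorial content — this is essentially a one-line calculation once the right rewrites are in place.

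First, I would apply Observation~\ref{lowerhalflessvaluable} twice: once to the pair $(i,j)$, using $X_i \ge_i X_j$, to conclude that $g \notin G_{ij}$, and once to the pair $(i',k)$, using $X_{i'} \ge_{i'} X_k$, to conclude that $g \notin G_{i'k}$. This lets me rewrite
\[ (X_j \cup g)\setminus G_{ij} \;=\; (X_j\setminus G_{ij})\cup g \quad\text{and}\quad (X_k \cup g)\setminus G_{i'k} \;=\; (X_k\setminus G_{i'k})\cup g, \]
with $g$ in neither of the removed sets.

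Next, I would extract the two key inequalities in agent $i$'s valuation. Since $i$ champions $j$, the definition of $G_{ij}$ gives $(X_j\cup g)\setminus G_{ij} >_i X_i$. Since $i$ does \emph{not} champion $k$, part 2 of Observation~\ref{nobodyenvieschampion} (applied with the roles in that observation being ``$i'$ champions $k$ and $k=i$ is the non-champion'') yields $(X_k \cup g)\setminus G_{i'k} \le_i X_i$. Chaining through $v_i(X_i)$ gives the strict inequality
\[ v_i\bigl((X_j\setminus G_{ij}) \cup g\bigr) \;>\; v_i\bigl((X_k\setminus G_{i'k}) \cup g\bigr). \]

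Finally, by additivity I subtract $v_i(g)$ from both sides to obtain $v_i(X_j \setminus G_{ij}) > v_i(X_k \setminus G_{i'k})$, i.e., $X_j\setminus G_{ij} >_i X_k\setminus G_{i'k}$, which is the claim. The only conceptual ``obstacle'' is bookkeeping: making sure that the two applications of Observation~\ref{lowerhalflessvaluable} are justified (so that $g$ really is absent from both $G_{ij}$ and $G_{i'k}$ and can be canceled cleanly), and that the non-champion hypothesis is used in the correct direction when invoking Observation~\ref{nobodyenvieschampion}.
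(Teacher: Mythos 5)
Your proposal is correct and follows essentially the same route as the paper's proof: apply Observation~\ref{lowerhalflessvaluable} to both pairs to get $g \notin G_{ij}$ and $g \notin G_{i'k}$, use the definition of $G_{ij}$ for the lower bound and Observation~\ref{nobodyenvieschampion} (part 2) for the upper bound, then cancel $v_i(g)$ by additivity. No gaps.
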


\begin{proof}
Since $i \in A_X(X_j \cup g)$ and $X_i \geq_i X_j$, by Observation~\ref{lowerhalflessvaluable}, we have $g \notin G_{ij}$. Thus, $G_{ij} \subseteq X_j$. By the same reasoning, $g \notin G_{i'k}$ and $G_{i'k} \subseteq X_k$. Therefore, $(X_j \cup g) \setminus G_{ij} = (X_j \setminus G_{ij}) \cup g$, and $(X_k \cup g) \setminus G_{i'k} = (X_k \setminus G_{i'k}) \cup g$. By the definition of $G_{ij}$, we have $ (X_j \setminus G_{ij}) \cup  g >_i X_i$. Since $i \notin A_X(X_k \cup g)$, we have $X_i \geq_i (X_k \setminus G_{i'k}) \cup g$ by Observation~\ref{nobodyenvieschampion} (part 2). Combining the two inequalities, we have $(X_j \setminus G_{ij}) \cup g >_i (X_k \setminus G_{i'k}) \cup g$, which implies $X_j \setminus G_{ij} >_i X_{k} \setminus G_{i'k}$. 
\end{proof}

In the upcoming sections, we show how to derive a dominating EFX allocation from an existing EFX allocation. Corollary~\ref{singlesource} already deals with the cases that $E_X$ has a single source or $M_X$ has a 1-cycle.
\emph{We proceed under the following general assumptions: $E_X$ is cycle-free and has at least two sources and there is no 1-cycle in $M_X$.} We  distinguish the remaining cases by the number of sources in $E_X$.

\section{Existence of EFX: Three sources in $E_X$}
\label{threesources}

If $E_X$ has three sources, the allocation $X$ is envy-free, i.e., $X_i \geq_i X_j$ for all $i$ and $j$.  We make a case distinction by whether or not $M_X$ contains a $2$-cycle.  

\subsection{$2$-cycle in $M_X$}
\label{twocycle_in_M_X}
Assume without loss of generality that agent 2 champions agent 1 and agent 1 champions agent 2. Since $X_1 \geq_1 X_2$ and $X_2 \geq_2 X_1$, the bundles $X_1$ and $X_2$ decompose according to \lref{split into upper and lower}. Since  neither 1 nor 2 self-champion (as $M_X$ has no $1$-cycle), by Observation~\ref{upperhalfvaluable}, we have $X_2 \setminus G_{12} >_1 X_1 \setminus G_{21}$ and $X_1 \setminus G_{21} >_2 X_1 \setminus G_{12}$. We swap the upper-halves of $X_1$ and $X_2$ to obtain
\[ X' = \Allocationfive{X_2 \setminus G_{12}}{{G_{21}}}{X_1 \setminus G_{21}}{G_{12}}{X_3}.\]
Note that agent 3 has the same valuation as before, while 1 and 2 are strictly better off. If $X'$ is EFX we are done. So assume otherwise. We first determine the potential strong envy edges.

\begin{itemize}
    \item \textit{From 1}: We replaced the more valuable (according to 1) $X_2 \setminus G_{12}$ in $X_2$ with the less valuable $X_1 \setminus G_{21}$ and left $X_3$ unchanged. Thus 1 is strictly better off and according to him, the valuations of the bundles of 2 and 3 in $X'$ is at most the valuation of their bundles in $X$. As 1 did not envy 2 and 3 before in $X$, 1 does not envy 2 and 3 in $X'$. 
    \item \textit{From 2}: A symmetrical argument shows that 2 does not envy 1 and 3. 
    \item \textit{From 3}: For agent 3, the sum of the valuations of agents 1 and 2 has not changed by the swap and 3 envied neither 1 nor 2 before the swap. Thus 3 envies at most one of the agents 1 and 2 after the swap. Assume without loss of generality that he envies agent 2. We then replace the lower-half bundle of agent 2 ($G_{12}$) with $g$ to obtain
\[ X'' = \Allocationfive{X_2 \setminus G_{12}}{G_{21}}{X_1 \setminus G_{21}}{g}{X_3}.\]

In $X''$, agent 2 is still strictly better off than in $X$ since by the definition of $G_{21}$, we have $(X_1 \setminus G_{21}) \cup g >_2 X_2$. Thus, $X''$ Pareto dominates $X$. We still need to show that $X''$ is  EFX. To this end, observe that as we have not changed the bundles of agents 1 and 3, there is no strong envy between them. So we only need to exclude strong envy edges to and from agent 2. 
   \begin{itemize}
       \item \textit{Nobody strongly envies agent 2}: Note that 2 championed 1. Thus, $((X_1 \setminus G_{21}) \cup g) \setminus h \leq_1 X_1$ and $((X_1 \setminus G_{21}) \cup g) \setminus h \leq_3 X_3$ for all $h \in (X_1 \setminus G_{21}) \cup g$ by Observation~\ref{nobodyenvieschampion} (part 1). Since both 1 and 3 are not worse off than before, they do not strongly envy 2. 
       \item \textit{Agent 2 does not envy anyone}: We have that $(X_1 \setminus G_{21}) \cup g >_2 X_2$. Also according to 2, the valuation of the current bundles of 1 and 3 is at most their previous one, and 2 did not envy them before (when he had $X_2$). Hence, 2 does not envy 1 and 3.
   \end{itemize}
 \end{itemize}

 We have thus shown that $X''$ is EFX and Pareto dominates $X$. Actually, the strategy described above handles a more general situation. It yields a Pareto dominating EFX allocation as long as 3 envies neither 1 nor 2 initially,  even if 1 and 2 envied (\emph{not strongly envied}) 3  initially: 

\begin{remark}
\label{generalizing-2cycle-technique}
Let $X$ be an EFX allocation, and let $g$ be an unallocated good. If $M_X$ has a $2$-cycle, say involving agents 1 and 2, and agent 3 envies neither 1 nor 2, then there exists an EFX allocation $Y$ Pareto dominating $X$.
\end{remark} 

Remark~\ref{generalizing-2cycle-technique} will be helpful when we deal with certain instances where $E_X$ has two sources later in Section~\ref{twosources}.

\subsection{No $2$-cycle in $M_X$}
\label{notwocycle_in_M_X}

We now consider the case when $M_X$ has no two cycle. Since $M_X$ is cyclic and we neither have a $1$-cycle nor a $2$-cycle, we must have a $3$-cycle. Let us assume w.l.o.g. that agent $i+1$ is the unique champion of agent $i$ (indices are modulo 3, so $i+1$ corresponds to $(i \bmod 3)+1$). Since, in addition, $i+1$ does not envy $i$, all three bundles decompose according to \lref{split into upper and lower} and the current allocation can be written as
\[ X = \Allocation{X_1 \setminus G_{21}}{G_{21}}{X_2 \setminus G_{32}}{G_{32}}{X_3 \setminus G_{13}}{G_{13}}.\]

Let us collect what we know for agent 1's valuation of the upper-half bundles: 1 uniquely champions 3, while 2 and 3 uniquely champion 1 and 2, respectively. Also, the current allocation is envy-free. Thus $X_i \geq X_j$ for all $i,j \in [3]$. By Observation~\ref{upperhalfvaluable}, we know that $X_3 \setminus G_{13} >_1 \max_1(X_1 \setminus G_{21}, X_2 \setminus G_{32})$\footnote{$\max_1(X_1 \setminus G_{21}, X_2 \setminus G_{32})$ is 1's favorite bundle out of $X_1 \setminus G_{21}$ and $X_2 \setminus G_{32}$} ($X_3 \setminus G_{13}$ is 1's favorite upper-half bundle).

Now, let us collect what we know for agent 1's valuation of the lower-half bundles: 1 champions 3 and does not envy 3's bundle. Thus, by Observation~\ref{lowerhalflessvaluable}, $G_{13} <_1 g$ and $g \not\in G_{13}$. Also, 1 does not champion himself, and 3 champions 1. Thus, by Observation~\ref{lowerhalfmorevaluable}, $g \leq_1 G_{21}$.
We can make similar statements about agents 2 and 3. Since $g \not\in G_{21}$, and our instance is assumed to be non-degenerate, we even have $g <_1 G_{21}$. Tables~\ref{orderingupperhalf} and~\ref{orderinglowerhalf} summarize this information. 

\begin{table}
\begin{center}
\begin{tabular}{ | m{5em} | m{6.7cm}|} 
\hline
Agent 1 & $X_3 \setminus G_{13} >_1 \max_1(X_1 \setminus G_{21},X_2 \setminus G_{32})$ \\ 
\hline
Agent 2 & $X_1 \setminus G_{21} >_2 \max_2(X_2 \setminus G_{32},X_3 \setminus G_{13})$ \\ 
\hline
Agent 3 & $X_2 \setminus G_{32} >_3 \max_3(X_3 \setminus G_{13},X_1 \setminus G_{21})$ \\ 
\hline
\end{tabular}
\end{center}
\caption{No 2-cycle in $M_X$: Ordering for the upper half bundles. }
\label{orderingupperhalf}
\end{table}

\begin{table}
\begin{center}
\begin{tabular}{ | m{5em} | m{3cm}|} 
\hline
Agent 1 & $G_{21} >_1 g >_1 G_{13}$\\ 
\hline
Agent 2 & $G_{32} >_2 g >_2 G_{21}$ \\ 
\hline
Agent 3 & $G_{13} >_3 g >_3 G_{32}$ \\ 
\hline
\end{tabular}
\end{center}
\caption{No 2-cycle in $M_X$: Ordering for the lower half bundles. Furthermore, $g \not\in G_{13}$, $g\not\in G_{21}$, and $g\not\in G_{32}$. }
\label{orderinglowerhalf}
\end{table}

We first move to an allocation where everyone gets their favorite upper-half bundle (we achieve this by  performing a cyclic shift of the upper-half bundles). Thus, the new allocation is:
\[ X' = \Allocation{X_3 \setminus G_{13}}{G_{21}}{X_1 \setminus G_{21}}{G_{32}}{X_2 \setminus G_{32}}{G_{13}}\]
Clearly, every agent is strictly better off, and thus, $X'$ Pareto dominates $X$. If $X'$ is  EFX, we are done. So we assume otherwise. What envy edges could exist? We first observe that no agent will envy the agent from whom it took its upper-half during the cyclic shift.

\begin{observation}
In $X'$, agent $i+1$ does not envy agent $i$ for all $i \in [3]$ (indices are modulo 3).
\end{observation}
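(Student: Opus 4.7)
The plan is to fix an arbitrary $i \in [3]$ (indices taken modulo $3$ in $\{1,2,3\}$) and show directly the stronger statement $v_{i+1}(X'_{i+1}) > v_{i+1}(X'_i)$. Reading off the displayed allocation $X'$, the cyclic shift of the upper halves yields
\[
X'_{i+1} \;=\; (X_i \setminus G_{i+1,i}) \cup G_{i+2,i+1} \qquad \text{and} \qquad X'_i \;=\; (X_{i-1} \setminus G_{i,i-1}) \cup G_{i+1,i}.
\]
In words: agent $i+1$ picked up the upper half of $i$'s old bundle while retaining his own original lower half $G_{i+2,i+1}$, whereas agent $i$ ended up fusing the upper half of $(i-1)$'s old bundle with the lower half $G_{i+1,i}$ that $i+1$ stripped away. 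Both bundles therefore split as an upper half piece plus a lower half piece, so additivity of $v_{i+1}$ lets me compare them coordinate by coordinate.

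For the upper halves, the row for agent $i+1$ in Table~\ref{orderingupperhalf} says that $X_i \setminus G_{i+1,i}$ is strictly $v_{i+1}$-preferred to every other upper half, so in particular $X_i \setminus G_{i+1,i} >_{i+1} X_{i-1} \setminus G_{i,i-1}$. For the lower halves, the row for agent $i+1$ in Table~\ref{orderinglowerhalf} gives the chain $G_{i+2,i+1} >_{i+1} g >_{i+1} G_{i+1,i}$, and hence $G_{i+2,i+1} >_{i+1} G_{i+1,i}$. Adding the two strict inequalities by additivity of $v_{i+1}$ delivers $v_{i+1}(X'_{i+1}) > v_{i+1}(X'_i)$, which is precisely the non-envy claim.

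The whole argument reduces to a lookup in Tables~\ref{orderingupperhalf} and~\ref{orderinglowerhalf} followed by addition; there is no substantive obstacle. The only thing that genuinely requires care is the cyclic index bookkeeping, namely matching champion/championed subscripts consistently against the post-shift bundles so that the two tables can be applied to the correct pieces of $X'_{i+1}$ and $X'_i$.
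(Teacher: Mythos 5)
Your proof is correct and is essentially identical to the paper's: the paper also compares the two bundles piecewise, using Table~\ref{orderingupperhalf} for the upper halves and Table~\ref{orderinglowerhalf} for the lower halves, and concludes by additivity (it spells out only the case $i=1$ and appeals to symmetry, whereas you carry the cyclic indices explicitly). Your index bookkeeping checks out in all three cases, so there is nothing to fix.
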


\begin{proof}
We just show the proof for $i=1$, and the other cases follow symmetrically. Note that 2 values its current upper-half more than 1's upper-half (it has its favorite upper-half): $X_1 \setminus G_{21} >_2 X_3 \setminus G_{13}$. Similarly 2's also values its lower-half more than 1's lower-half: $G_{32} \geq_2 g >_2 G_{21}$. Therefore, 2 values his entire bundle more than 1's bundle, and hence does not envy 1. 
\end{proof}

Therefore, the only envy edges (and hence \emph{strong} envy edges) can be from agent $i$ to agent $i+1$ as shown in the following figure.\footnote{In the figures that follow, we use red edges to indicate strong envy, and blue edges to indicate weak envy.} 

\begin{center}
\begin{tikzpicture}
[
agent/.style={circle, draw=green!60, fill=green!5, very thick},
good/.style={circle, draw=red!60, fill=red!5, very thick, minimum size=1pt},
]

\node[agent]      (a1) at (0,0)      {$\scriptstyle{1}$};
\node[agent]      (a2) at (2,0)      {$\scriptstyle{2}$};
\node[agent]      (a3) at (4,0)     {$\scriptstyle{3}$};




\draw[->,red,thick] (a1) -- (a2);
\draw[->,blue,dashed, thick] (a1) -- (a2);

\draw[->,red,thick] (a2) -- (a3);
\draw[->,blue,dashed,thick] (a2) --(a3);

\draw[->,red, thick] (a3) edge[bend left=30] (a1);
\draw[->,blue,dashed, thick] (a3) edge[bend left=30] (a1);

\end{tikzpicture}

\end{center}

\noindent
We now distinguish two cases depending on the number of such strong envy edges.

\paragraph{Three strong envy edges:} In this case, the envy-graph is a 3-cycle. We perform a cyclic shift of the bundles and obtain an EFX allocation Pareto dominating the initial allocation $X$. 

\paragraph{At most two strong envy edges:} Note that in this case, there is a strong envy edge from at least one agent $i \in [3]$ to $i+1$ and there is no strong envy edge from at least one agent $j \in [3]$ to $j+1$. Let us assume without loss of generality that there is a strong envy edge from 1 to 2 , there may or may not be a strong envy edge from 2 to 3, and there is no strong envy edge from 3 to 1.

\begin{center}
\begin{tikzpicture}
[
agent/.style={circle, draw=green!60, fill=green!5, very thick},
good/.style={circle, draw=red!60, fill=red!5, very thick, minimum size=1pt},
]

\node[agent]      (a1) at (0,0)      {$\scriptstyle{1}$};
\node[agent]      (a2) at (2,0)      {$\scriptstyle{2}$};
\node[agent]      (a3) at (4,0)     {$\scriptstyle{3}$};




\draw[->,red,thick] (a1) -- (a2);
\draw[->,red,dashed,thick] (a2) --(a3);

\end{tikzpicture}
\end{center}

\noindent
Note that 1 is strictly better off in $X'$ than in $X$. The existence of envy from 1 and 2, despite this improvement, allows us to say more about the preference ordering of the upper-half and the lower-half bundles.

\begin{observation}
\label{orderingfor1}
If 1 envies 2 in $X'$, $X_1 \setminus G_{21} >_1 X_2 \setminus G_{32}$, and $G_{32} >_1 G_{21}$.
\end{observation}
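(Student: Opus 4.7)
\medskip

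\textbf{Proof plan for Observation~\ref{orderingfor1}.}
My plan is to unpack the hypothesis ``1 envies 2 in $X'$'' using the additive decomposition of the new bundles, then combine the resulting inequality with two facts we already have in hand: the ordering of upper-half bundles for agent 1 from Table~\ref{orderingupperhalf}, and the envy-freeness of the initial allocation $X$. The key observation is that the only ``new'' bundle agent 1 receives in $X'$ is his favourite upper-half $X_3\setminus G_{13}$ (paired with his own $G_{21}$), whereas 2's new bundle consists of the upper-half $X_1\setminus G_{21}$ (which, according to 1, is strictly worse than $X_3\setminus G_{13}$) paired with the foreign lower-half $G_{32}$. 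So if 1 still envies 2, it must be the foreign lower-half that tips the balance.

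Concretely, writing $X'_1 = (X_3\setminus G_{13})\cup G_{21}$ and $X'_2 = (X_1\setminus G_{21})\cup G_{32}$ and using additivity, the envy $X'_1 <_1 X'_2$ becomes
\[
v_1(X_3\setminus G_{13}) + v_1(G_{21}) \;<\; v_1(X_1\setminus G_{21}) + v_1(G_{32}). \qquad (\ast)
\]
To obtain $G_{32} >_1 G_{21}$, I would rearrange $(\ast)$ to
\[
v_1(G_{32}) - v_1(G_{21}) \;>\; v_1(X_3\setminus G_{13}) - v_1(X_1\setminus G_{21}),
\]
and then invoke Table~\ref{orderingupperhalf}, which gives $X_3\setminus G_{13} >_1 X_1\setminus G_{21}$, to conclude the right-hand side is strictly positive.

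For the other inequality $X_1\setminus G_{21} >_1 X_2\setminus G_{32}$, I would bring in envy-freeness of $X$: since $X$ has three sources in $E_X$, $v_1(X_1) \geq v_1(X_2)$, which via the decomposition $X_1 = (X_1\setminus G_{21})\cup G_{21}$ and $X_2 = (X_2\setminus G_{32})\cup G_{32}$ rewrites as
\[
v_1(X_1\setminus G_{21}) - v_1(X_2\setminus G_{32}) \;\geq\; v_1(G_{32}) - v_1(G_{21}).
\]
Combined with the strict positivity of $v_1(G_{32})-v_1(G_{21})$ already proved, this yields the claim.

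I do not expect any real obstacle here: the statement is essentially a routine algebraic consequence of $(\ast)$, Table~\ref{orderingupperhalf}, and the envy-freeness of $X$. The only thing to be careful about is keeping the two chains of inequalities in the right order so that one uses the conclusion $G_{32} >_1 G_{21}$ only after it has been established, and to invoke non-degeneracy (which is assumed globally) implicitly to get strict, rather than weak, inequalities from Table~\ref{orderingupperhalf}.
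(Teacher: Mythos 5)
Your proposal is correct and uses exactly the same ingredients as the paper's proof (additivity of the swapped half-bundles, the fact that $X_3\setminus G_{13} >_1 X_1\setminus G_{21}$ so that agent 1 is strictly better off in $X'$, and envy-freeness of $X$); the paper merely phrases the argument as a proof by contradiction on the two disjuncts, whereas you derive $G_{32} >_1 G_{21}$ directly from the envy inequality and then feed it into the envy-freeness inequality to get $X_1\setminus G_{21} >_1 X_2\setminus G_{32}$. This is just the contrapositive reorganization of the same argument, and your care about the order of the two deductions is exactly right.
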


\begin{proof}
 We argue by contradiction. Therefore, assume that i.e. $X_1 \setminus G_{21} \leq_1 X_2 \setminus G_{32}$ or $G_{32} \leq_1 G_{21}$. If $X_1 \setminus G_{21} \leq_1 X_2 \setminus G_{32}$, then
\begin{align*}
 (X_1 \setminus G_{21}) \cup G_{32} &\leq_1 (X_2 \setminus G_{32}) \cup G_{32} \\ 
                                  &= X_2 \\
                                  &\leq_1 X_1  &(\text{since 1 did not envy 2 before})\\
                                  &<_1 (X_3 \setminus G_{13}) \cup G_{21} &(\text{since 1 is better off than before})
\end{align*} 
 implying that 1 does not envy 2, a contradiction. If $G_{32} \leq_1 G_{21}$, then 
\begin{align*}
 (X_1 \setminus G_{21}) \cup G_{32} &\leq_1 (X_1 \setminus G_{21}) \cup G_{21} \\
                                  &= X_1 \\
                                  &<_1 (X_3 \setminus G_{13}) \cup G_{21} &(\text{since 1 is better off than before})
\end{align*}
again implying that 1 does not envy 2, a contradiction.
\end{proof}

So we now have
\begin{equation}   X_2 \setminus G_{32} <_1 X_1 \setminus G_{21} <_1 X_3 \setminus G_{13}
  \quad\text{and}\quad
  G_{13} <_1 g <_1 G_{21} <_1 G_{32}.
\end{equation}

We replace the lower-half bundle of 2 ($G_{32}$) by $g$ to obtain
\[
X'' = \Allocation{X_3 \setminus G_{13}}{G_{21}}{X_1 \setminus G_{21}}{g}{X_2 \setminus G_{32}}{G_{13}}.\]

Note that agents 1 and 3 are still strictly better off (as we have not changed their bundles after the cyclic shift of the upper-half bundles) than in $X$. Agent 2 championed 1, thus, $X_1 \setminus G_{21} \cup g >_2 X_2$, and agent 2 is also strictly better off. Hence, $X''$ Pareto dominates $X$. If there are no strong envy edges, we are done. So assume otherwise. We first note that the only possible strong envy edge is from 2 to 3:

\begin{itemize}

\item \emph{Agent 1 does not envy anyone}: 1 did not envy 3 in $X'$ and the bundles of 1 and 3 are the same in $X'$ and $X''$. 1 does not envy 2 anymore as he prefers his own upper-half bundle and lower-half bundle to 2's upper-half bundle and lower-half bundle respectively, i.e., $X_3 \setminus G_{13} >_1 X_1 \setminus G_{21}$ (from Table~\ref{orderingupperhalf}) and $ G_{21} \geq_1 g$ (from Table~\ref{orderinglowerhalf}).

\item \emph{Agent 3 does not envy anyone}: We use a similar argument. 3 did not envy 1 in $X'$ and the bundles of 1 and 3 are the same in $X'$ and $X''$. 3 does not envy 2 as well as he prefers his own upper-half bundle and lower-half bundle to 2's upper-half bundle and lower-half bundle respectively, namely $X_2 \setminus G_{32} >_3 X_1 \setminus G_{21}$ (from Table~\ref{orderingupperhalf}) and $G_{13} \geq_3 g $ (from Table~\ref{orderinglowerhalf}). 

\item \textit{Agent 2 does not envy 1:} Note that agent 2 has his favorite upper-half bundle and values it more than 1's upper-half bundle: $X_1 \setminus G_{21} >_2 X_3 \setminus G_{13}$ (from Table~\ref{orderingupperhalf}) and 2 also values his lower-half bundle more than 1's lower-half bundle: $g >_2 G_{21}$ (from Table~\ref{orderinglowerhalf}). 
\end{itemize}

Therefore, the only possible strong envy edge is from 2 to 3 as shown below. 

\begin{center}
\begin{tikzpicture}
[
agent/.style={circle, draw=green!60, fill=green!5, very thick},
good/.style={circle, draw=red!60, fill=red!5, very thick, minimum size=1pt},
]

\node[agent]      (a1) at (0,0)      {$\scriptstyle{1}$};
\node[agent]      (a2) at (2,0)      {$\scriptstyle{2}$};
\node[agent]      (a3) at (4,0)     {$\scriptstyle{3}$};




\draw[->,red,thick] (a2) -- (a3);

\end{tikzpicture}
\end{center}
Similar to Observation~\ref{orderingfor1}, we can now infer more about 2's preference ordering for the bundles:

\begin{observation}
\label{orderingfor2}
If 2 strongly envies 3 in $X''$, we have $X_2 \setminus G_{32} >_2 X_3 \setminus G_{13}$ and $G_{13} >_2 G_{32}$.
\end{observation}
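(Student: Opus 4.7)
The plan is to argue by contraposition, in direct parallel with Observation~\ref{orderingfor1}. Assuming that at least one of the two target inequalities fails, I will show that agent 2 does not even envy agent 3 in $X''$, which contradicts the hypothesis that 2 strongly envies 3 (since if $X''_2 \geq_2 X''_3$, then removing any good from $X''_3$ only decreases its value to 2).

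The argument rests on two facts. First, $X''_2 = (X_1 \setminus G_{21}) \cup g >_2 X_2$: this follows directly from 2 championing 1 with splitter $G_{21}$, together with $g \notin G_{21}$ (Observation~\ref{lowerhalflessvaluable}). Second, $X_3 \leq_2 X_2$ because $X$ is envy-free in the three-sources regime. Chaining them yields the strict inequality $X_3 <_2 X''_2$. Now suppose $X_2 \setminus G_{32} \leq_2 X_3 \setminus G_{13}$; adjoining $G_{13}$ to both sides (additivity is legal because $G_{13} \subseteq X_3$ so it is disjoint from $X_2 \setminus G_{32}$ and from $X_3 \setminus G_{13}$) gives
\[ X''_3 = (X_2 \setminus G_{32}) \cup G_{13} \;\leq_2\; (X_3 \setminus G_{13}) \cup G_{13} \;=\; X_3 \;<_2\; X''_2. \]
Alternatively, if $G_{13} \leq_2 G_{32}$, adjoining $X_2 \setminus G_{32}$ to both sides yields
\[ X''_3 = (X_2 \setminus G_{32}) \cup G_{13} \;\leq_2\; (X_2 \setminus G_{32}) \cup G_{32} \;=\; X_2 \;<_2\; X''_2. \]
In either case, 2 values his own $X''$-bundle strictly more than 3's $X''$-bundle, ruling out even plain envy, hence strong envy.

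The argument is a short algebraic calculation once the two driving facts are pinned down; the only bookkeeping to watch is that $G_{13} \subseteq X_3$ and $G_{32} \subseteq X_2$ (so the set-differences and unions rearrange cleanly), both of which were already established in Section~\ref{notwocycle_in_M_X} via Observation~\ref{lowerhalflessvaluable}. I do not expect any serious obstacle beyond this.
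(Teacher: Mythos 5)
Your proof is correct and takes essentially the same route as the paper's: the same contradiction argument with the same case split on which of the two inequalities fails, and the same two chains of inequalities ending in $X''_3 \leq_2 X_2 <_2 (X_1\setminus G_{21})\cup g = X''_2$, driven by envy-freeness of $X$ and the fact that 2 championed 1. Your explicit check that $G_{13}$ and $G_{32}$ are disjoint from the sets they are adjoined to (so the unions add up) is correct bookkeeping that the paper leaves implicit.
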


\begin{proof}
As in Observation~\ref{orderingfor1}, we argue by contradiction. Therefore, assume that i.e. $X_2 \setminus G_{32} \leq_2 X_3 \setminus G_{13}$ or $G_{13} \leq_2 G_{32}$. If $X_2 \setminus G_{32} \leq_2 X_3 \setminus G_{13}$, then
\begin{align*}
 (X_2\setminus G_{32}) \cup G_{13}  &\leq_2 (X_3 \setminus G_{13}) \cup G_{13} \\ 
                                  &= X_3 \\
                                  &\leq_2 X_2  &(\text{since 2 did not envy 3 before})\\
                                  &<_2 (X_1 \setminus G_{21}) \cup g &(\text{as 2 is better off than before})
\end{align*} 
 implying that 2 does not envy 3, a contradiction. If $G_{13} \leq_2 G_{32}$, then 
\begin{align*}
 (X_2 \setminus G_{32}) \cup G_{13} &\leq_2 (X_2 \setminus G_{32}) \cup G_{32} \\
                                  &= X_2 \\
                                  &<_1 (X_1 \setminus G_{21}) \cup g &(\text{as 2 is better off than before})
\end{align*}
again implying that 2 does not envy 3, a contradiction.
\end{proof}

So we now have
\begin{equation}   X_3 \setminus G_{13} <_2 X_2 \setminus G_{32} <_2 X_1 \setminus G_{21}
  \quad\text{and}\quad
  G_{21} <_2 g <_2 G_{32} < G_{13}.
\end{equation}
We are ready to construct the final allocation. To this end, consider the bundle $(X_1 \setminus G_{21}) \cup G_{13}$. Note that,
\begin{align*}
( X_1 \setminus G_{21}) \cup G_{13} &>_2 (X_1 \setminus G_{21}) \cup G_{32} &(\text{as  }G_{13} >_2 G_{32} \text{ from Observation~\ref{orderingfor2}})\\ 
                                  &\geq_2 (X_1 \setminus G_{21}) \cup g   &(\text{as  }G_{32} \geq_2 g \text{ from Table~\ref{orderinglowerhalf}}) \\ 
                                  &>_2 X_2  &(\text{as 2 championed 1})
\end{align*}
Let $Z$ be a smallest cardinality subset of $(X_1 \setminus G_{21}) \cup G_{13}$ such that $Z >_2 X_2$. Since $g \not\in X_1$ and $g \not\in G_{13}$, $g \not\in Z$. We now give two allocations, depending on how much 3 values $Z$.
\begin{description}
\item[Case $Z >_3 X_3$:] Consider 
\[ X''' = \Allocationfive{X_3 \setminus G_{13}}{g}{X_2 \setminus G_{32}}{G_{32}}{Z}.\]

Since 1 was the champion of 3, we have $(X_3 \setminus G_{13}) \cup g >_1 X_1$. Thus, 1 and 3 are strictly better off, and 2 has the same bundle as in $X$. Therefore, $X'''$ Pareto dominates $X$. We still need to show that $X'''$ is EFX. 

\begin{itemize}
  \item \emph{Nobody strongly envies agent 1}: Since 1 is the champion of 3, we have that $((X_3 \setminus G_{13}) \cup g) \setminus h <_2 X_2$ and $((X_3 \setminus G_{13}) \cup g) \setminus h <_3 X_3$ for all $h \in (X_3 \setminus G_{13}) \cup g$ by Observation~\ref{nobodyenvieschampion} (part 1). As both 2 and 3 are not worse off than in $X$, neither of them strongly envies $(X_3 \setminus G_{13}) \cup g$. 
  
  \item \emph{Nobody envies agent 2}: Both 1 and 3 are strictly better off than in $X$ and they did not envy $X_2$ in $X$. Thus they do not envy $X_2$ now. 
  
  \item \emph{Nobody strongly envies agent 3}:  We first show that 1 does not envy $(X_1 \setminus G_{21}) \cup G_{13}$. This follows from the observation that 1 prefers  his own upper-half bundle to $X_1 \setminus G_{21}$ and lower-half bundle to  $G_{13}$:  $X_3 \setminus G_{13} >_1 X_1 \setminus G_{21}$ (from Table~\ref{orderingupperhalf}) and $g>_1G_{13}$ (from Table~\ref{orderinglowerhalf}). Thus $(X_3 \setminus G_{13}) \cup g >_1 (X_1 \setminus G_{21}) \cup G_{13}$. Therefore, 1 does not envy $Z$ either, as $Z \subseteq (X_1 \setminus G_{21}) \cup G_{13}$.

 Agent 2 does not strongly envy $Z$ since $Z$ is a smallest cardinality subset of $(X_1 \setminus G_{21}) \cup G_{13}$ that 2 values more than $X_2$. Thus $Z \setminus h \le_2 X_2$ for all $h \in Z$.  
\end{itemize}

\item[Case $Z \leq_3 X_3$:] Consider
\[ X''' = \Allocationfivetwo{X_3 \setminus G_{13}}{G_{32}}{Z}{X_2 \setminus G_{32}}{g}.\]

We first show that 1 is strictly better off in $X'''$ than in $X$. Observe that 
\begin{align*}
 (X_3 \setminus G_{13}) \cup G_{32} &>_1 (X_3 \setminus G_{13}) \cup G_{21} &\text{(by Observation~\ref{orderingfor1})} \\ 
                                  &\geq_1 (X_3 \setminus G_{13}) \cup g &(G_{21} \geq_1 g \text{ from Table~\ref{orderinglowerhalf})} \\
                                  &>_1 X_1                                  &\text{(as 1 championed 3)}
\end{align*} 
2 is better off as $Z >_2 X_2$ by definition of $Z$. 3 is also better off than in $X$ as it championed 2  and thus $X_2 \setminus G_{32} \cup g >_3 X_3$. Thus, all agents are strictly better off, and hence $X'''$ Pareto dominates $X$. We next show that $X'''$ is EFX.

\begin{itemize}
  \item \emph{Nobody envies agent 1}: Agent 2 does not envy 1 since 
  \begin{align*}
  (X_3 \setminus G_{13}) \cup G_{32} &<_2 (X_2 \setminus G_{32}) \cup G_{32} &\text{(by Observation~\ref{orderingfor2})}\\
                                   &= X_2\\
                                   &<_2 Z &\text{(by definition of Z).}
  \end{align*}
Agent 3 does not envy 1 either since he prefers his current upper-half bundle to and lower-half bundle to  1's upper-half bundle and lower-half bundle, respectively, i.e., $X_2 \setminus G_{32} >_3 X_3 \setminus G_{13}$ (from Table~\ref{orderingupperhalf}) and $g >_3 G_{32}$ (from Table~\ref{orderinglowerhalf}).
  
\item \emph{Nobody envies agent 2}: Observe that 1 does not envy $(X_1 \setminus G_{21}) \cup G_{13}$ since 1 is strictly better off, $G_{21} \geq_1 g >_1 G_{13}$ from Table~\ref{orderinglowerhalf}, and $G_{32} >_1 G_{21}$ by  Observation~\ref{orderingfor1}. Thus  $(X_3 \setminus G_{13}) \cup G_{32} >_1 (X_1 \setminus G_{21}) \cup G_{21} >_1 (X_1 \setminus G_{21}) \cup G_{13}$. Therefore, 1 does not envy $Z$ either as $Z \subseteq (X_1 \setminus G_{21}) \cup G_{13}$.

  Agent 3 does not envy 2 since $(X_2 \setminus G_{32}) \cup g >_3 X_3$ (see above) and $X_3 \geq_3 Z$.
  
  \item \emph{Nobody strongly envies agent 3}: Since 3 is the champion of 2, we have $((X_2 \setminus G_{32}) \cup g) \setminus h <_2 X_2$ and $((X_2 \setminus G_{32}) \cup g) \setminus h <_1 X_1$ for all $h \in (X_2 \setminus G_{32}) \cup g$ by Observation~\ref{nobodyenvieschampion} (part 1). As both 1 and 2 are strictly better off (in $X'''$) than in $X$, neither of them strongly envies $(X_2 \setminus G_{32}) \cup g$.   
\end{itemize}
\end{description}
 
We have thus shown that given an allocation $X$ such that $E_X$ has three sources and $M_X$ has a $3$-cycle, there exists an EFX allocation $Y$ Pareto dominating $X$. We summarize our main result for this section:

\begin{lemma}
\label{three-sources-mainlemma}
 Let $X$ be a partial EFX allocation and $g$ be an unallocated good. If $E_X$ has three sources, then there is an EFX allocation $Y$ Pareto dominating $X$.
\end{lemma}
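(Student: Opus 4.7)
The plan is to prove the lemma by case analysis on the structure of the champion graph $M_X$. Since $E_X$ has three sources, $X$ is already envy-free, i.e., $X_i \geq_i X_j$ for all $i,j$. By the standing assumptions, $M_X$ has no $1$-cycle, and by Observation~\ref{A_X_neverempty} (together with the fact that $M_X$ is cyclic) it must contain either a $2$-cycle or a $3$-cycle. I would branch on which of these occurs.

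\emph{Case A: $M_X$ contains a $2$-cycle}, say between agents $1$ and $2$. Since $1$ and $2$ champion each other while envying neither, Observation~\ref{lowerhalflessvaluable} gives the upper/lower-half decomposition of \lref{split into upper and lower} for both $X_1$ and $X_2$, and Observation~\ref{upperhalfvaluable} tells us each of them prefers the other's upper half to his own. The plan is to swap the upper halves of $X_1$ and $X_2$; this strictly improves both, and the total valuation agent~$3$ sees across the two bundles is unchanged, so at most one of them becomes envied by~$3$. If~$3$ strongly envies the new bundle of (say)~$2$, replace the lower half $G_{12}$ in $2$'s bundle by the unallocated good~$g$. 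Using Observation~\ref{nobodyenvieschampion} (applied to the pair $1,2$) and the definition of $G_{21}$, one verifies the resulting allocation is EFX and Pareto-dominates~$X$.

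\emph{Case B: $M_X$ has no $2$-cycle}, so it contains a $3$-cycle; WLOG assume $i{+}1$ uniquely champions $i$ (indices mod $3$). Every bundle then decomposes per \lref{split into upper and lower}, and Observations~\ref{lowerhalflessvaluable},~\ref{lowerhalfmorevaluable},~\ref{upperhalfvaluable} yield the preference tables: each agent~$i$ values the upper half of the bundle he champions above the other upper halves, his own lower half above~$g$, and~$g$ above the lower half of the bundle he champions. I would first cyclically shift the upper halves so every agent receives his favorite upper half; this is a strict improvement for everyone, and because $i{+}1$ now holds his favorite upper half, any remaining strong envy can only go from $i$ to $i{+}1$. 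If all three such envies are present, one further cyclic rotation of the bundles produces an EFX Pareto-dominating allocation. Otherwise some edge (say $3 \to 1$) is not strongly envying; I would then extract sharper orderings forced by the persisting envy (e.g.\ the envying agent must prefer both the other's upper and lower halves), replace agent~$2$'s lower half by~$g$, take a smallest subset $Z \subseteq (X_1 \setminus G_{21}) \cup G_{13}$ with $Z >_2 X_2$, and build one of two explicit allocations depending on whether $Z >_3 X_3$ or $Z \leq_3 X_3$. In either sub-case EFX is verified using the ordering tables plus Observation~\ref{nobodyenvieschampion}.

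The main obstacle is the sub-case of Case~B where not all three potential strong-envy edges appear after the upper-half cyclic shift. Verifying EFX there means juggling three sources of inequalities at once: the fixed upper/lower-half preference orderings, the sharper strict inequalities that the persisting envy from $i$ to $i{+}1$ forces, and the champion-based bound from Observation~\ref{nobodyenvieschampion}. The delicate choice is which bundle to hand to agent~$1$ after the surgery — specifically $(X_3 \setminus G_{13}) \cup G_{32}$ — since it must simultaneously beat $X_1$ for~$1$ (via the induced inequality $G_{32} >_1 G_{21}$) while not creating new strong envies from~$2$ or~$3$; the two-way split on $Z$ versus $X_3$ is precisely what is needed to close this gap.
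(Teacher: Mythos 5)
Your proposal follows the paper's own proof essentially step for step: the same split into the $2$-cycle and $3$-cycle cases of $M_X$, the same upper-half swap (resp.\ cyclic shift) justified by Observation~\ref{upperhalfvaluable}, the same substitution of a lower half by $g$, and the same final construction via a minimal set $Z \subseteq (X_1 \setminus G_{21}) \cup G_{13}$ with the two-way branch on whether $Z >_3 X_3$. The approach is correct and matches the paper; the remaining work is only the routine verification of the EFX inequalities you already identify.
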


\section{Existence of EFX: Two sources in $E_X$}
\label{twosources}

Let us assume that agents 1 and 2 are the sources, and let $(1,3) \in E_X$. We have two configurations for $E_X$ now, depending on whether or not $(2,3) \in E_X$. If $(2,3) \in E_X$, it is relatively straightforward to determine a new EFX allocation Pareto dominating $X$. Agent 3 is reachable from both 1 and 2 in $E_X$, and hence, if 3 champions either 1 or 2, we have a Pareto dominating EFX allocation by Observation~\ref{champion_in_subtree}. If 3 champions neither 1 nor 2, 1 and 2 must be champions of each other (Recall that no agent self-champions). Also note that 3 envies neither 1 nor 2. Therefore, by Remark~\ref{generalizing-2cycle-technique}, we have a Pareto  dominating EFX allocation.

\emph{From now on, we assume that $(2,3) \notin E_X$}. 

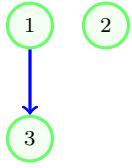
\begin{figure}
\centering
\begin{tikzpicture}
[
agent/.style={circle, draw=green!60, fill=green!5, very thick},
good/.style={circle, draw=red!60, fill=red!5, very thick, minimum size=1pt},
]

\node[agent]      (a1) at (0,0)      {$\scriptstyle{1}$};
\node[agent]      (a2) at (1,0)      {$\scriptstyle{2}$};
\node[agent]      (a3) at (0,-1.5)     {$\scriptstyle{3}$};

\draw[->,blue,very thick] (a1)--(a3);
\end{tikzpicture}
\caption{Envy Graph for two sources when $(2,3) \notin E_X$: Green nodes correspond to the agents. Blue edges are the edges in $E_X$.}
\label{envy-graph-2sources}
\end{figure}

The envy graph of the scenario is now as shown in Figure~\ref{envy-graph-2sources}. Next, we discuss the possible configurations of the champion graph $M_X$. We show that most configurations are easily handled. If 3 champions 1, then by Observation~\ref{champion_in_subtree}, there is a Pareto dominating EFX allocation. If 3 does not champion 1, and since 1 does not self-champion, agent 2 champions 1. If now 1 champions 2, we have a $2$-cycle in $M_X$ involving 1 and 2, and 3 envies neither of them. Therefore by Remark~\ref{generalizing-2cycle-technique}, there is a Pareto dominating EFX allocation. Thus, we may assume that 1 does not champion 2. Since 2 does not self-champion, agent 3 champions 2. \emph{There are only three possible configurations for $M_X$ now, depending on who champions 3 (only 1, only 2, both 1 and 2 as 3 does not self-champion) (see Figure~\ref{chamiongraphstates-2sources})}.

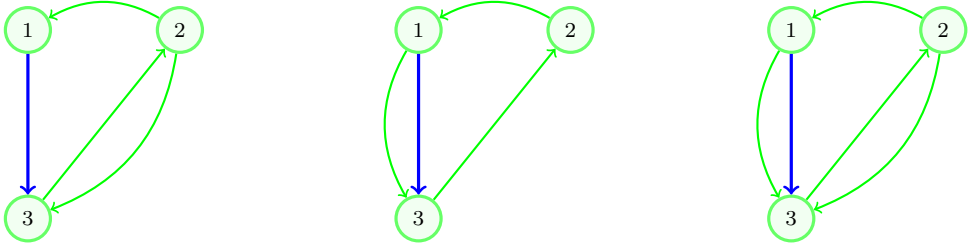
\begin{figure}[htbp]
\centering
\begin{subfigure}[b]{0.3 \textwidth}
 \begin{tikzpicture}
 [
 agent/.style={circle, draw=green!60, fill=green!5, very thick},
 good/.style={circle, draw=red!60, fill=red!5, very thick, minimum size=1pt},
 ]
 
 \node[agent]      (a1) at (0,0)      {$\scriptstyle{1}$};
 \node[agent]      (a2) at (2,0)      {$\scriptstyle{2}$};
 \node[agent]      (a3) at (0,-2.5)     {$\scriptstyle{3}$};
 

 \draw[->,blue,very thick] (a1)--(a3);
 
 \draw[->,green,thick] (a2) edge[bend right=30] (a1);
 \draw[->,green,thick] (a2) edge[bend left=30] (a3);
 \draw[->,green,thick] (a3)--(a2);
 
 
 \end{tikzpicture}		
\end{subfigure}		
\begin{subfigure}[b]{0.3 \textwidth}
	\begin{tikzpicture}
	[
	agent/.style={circle, draw=green!60, fill=green!5, very thick},
	good/.style={circle, draw=red!60, fill=red!5, very thick, minimum size=1pt},
	]
	
	\node[agent]      (a1) at (0,0)      {$\scriptstyle{1}$};
	\node[agent]      (a2) at (2,0)      {$\scriptstyle{2}$};
	\node[agent]      (a3) at (0,-2.5)     {$\scriptstyle{3}$};
	

	\draw[->,blue,very thick] (a1)--(a3);
	
	\draw[->,green,thick] (a2) edge[bend right=30] (a1);
	\draw[->,green,thick] (a3)--(a2);
	\draw[->,green,thick] (a1) edge[bend right=30] (a3);
	
	
	\end{tikzpicture}		
\end{subfigure}	
\begin{subfigure}[b]{0.3 \textwidth}
	\begin{tikzpicture}
	[
	agent/.style={circle, draw=green!60, fill=green!5, very thick},
	good/.style={circle, draw=red!60, fill=red!5, very thick, minimum size=1pt},
	]
	
	\node[agent]      (a1) at (0,0)      {$\scriptstyle{1}$};
	\node[agent]      (a2) at (2,0)      {$\scriptstyle{2}$};
	\node[agent]      (a3) at (0,-2.5)     {$\scriptstyle{3}$};
	

	\draw[->,blue,very thick] (a1)--(a3);
	
	\draw[->,green,thick] (a2) edge[bend right=30] (a1);
	\draw[->,green,thick] (a2) edge[bend left=30] (a3);
	\draw[->,green,thick] (a3)--(a2);
	\draw[->,green,thick] (a1) edge[bend right=30] (a3);
	
	
	\end{tikzpicture}		
\end{subfigure}	

\caption{The possible states of $M_X$ that require further discussion: Green nodes correspond to the agents. Blue edges are the edges in $E_X$ and green edges are the edges in $M_X$. There is a unique configuration of $E_X$ and three different configurations of $M_X$.}
\label{chamiongraphstates-2sources}
\end{figure}

We now show how to deal with these configurations of $M_X$. In Section~\ref{threesources}, we showed how to move from the current allocation $X$ to an allocation that Pareto dominates $X$. In Section~\ref{counter_example_monotonicity}, we show that this is impossible in this particular configuration of $E_X$ and $M_X$. More specifically, we exhibit an EFX allocation $X$ that is not Pareto dominated by any complete EFX allocation. We also show that there is no complete EFX allocation with higher Nash welfare than $X$, thereby falsifying a conjecture of Caragiannis et al.~\cite{CaragiannisGravin19}.

Recall that our potential is $\phi(X) = (v_a(X_a),v_b(X_b),v_c(X_c))$. We move to an allocation in which agent $a$ is strictly better off. We distinguish the cases: $a = 1$, $a = 2$, and $a = 3$.

Also, recall that we are in the scenario where 2 champions 1 and 2 does not envy 1. Similarly 3 champions 2 and 3 does not envy 2. Therefore, by Observation~\ref{lowerhalflessvaluable}, we have that $g \notin G_{21}$ and $g \notin G_{32}$, and hence, the bundles $X_1$ and $X_2$ decompose according to~\lref{split into upper and lower}. Also, since 2 champions 1 and 1 does not self-champion, by Observation~\ref{lowerhalfmorevaluable}, we have that $G_{21} \neq \emptyset$, and  a similar argument also shows that $G_{32} \neq \emptyset$.

\subsection{Agent $a$ is agent 1 or 3}\label{a = 1 or a = 3}

We start from the allocation

\[  X = \Allocationfive{X_1 \setminus G_{21}}{G_{21}}{X_2 \setminus G_{32}}{G_{32}}{X_3}.\]
Our goal is to determine an EFX allocation in which 1 and 3 are strictly better off ($2$ may be worse off). To this end, we consider
 
 \[  X' = \Allocationfiveone{X_3}{X_1 \setminus G_{21}}{G_{32}}{X_2 \setminus G_{32}}{g}.\]
 In $X'$, every agent is better off than in $X$: 1 is better off because $X_3 >_1 X_1$ (1 envied 3 in $E_X$). We now show that $2$ is better off: 2 championed 1 and 3 championed 2. Also, 2 did not self-champion, 2 did not envy 1  and 3 did not envy 2 . Therefore, by Observation~\ref{upperhalfvaluable}, (setting $i=k=2$, $j=1$, $i'=3$), we have that $X_1 \setminus G_{21} >_2 X_2 \setminus G_{32}$. Hence, $(X_1 \setminus G_{21}) \cup G_{32} >_2 (X_2 \setminus G_{32}) \cup G_{32} = X_2$. Thus 2 is also better off. Agent 3 is better off as 3 championed 2, and by the definition of $G_{32}$, we have $(X_2 \setminus G_{32} \cup g) >_3 X_3$. Thus $X'$ Pareto dominates $X$. If $X'$ is EFX, we are done. So assume otherwise. We show that the only possible strong envy edge will be from 1 to 2.

\begin{itemize}
\item \emph{Nobody envies 1}: Note that 1 has $X_3$ and neither 2 nor 3 envied $X_3$ earlier (3 had $X_3$ and 2 did not envy 3). Since both 2 and 3 are better off than before, they do not envy 1.

\item \emph{Nobody strongly envies 3: }{\emph{1 does not strongly envy 3 and 2 does not envy 3:} 3 championed 2 and 1 did not. Therefore, by Observation~\ref{nobodyenvieschampion} (part 1) we have $((X_2 \setminus G_{32}) \cup g) \setminus h \leq_1 X_1$ for all $h \in (X_2 \setminus G_{32}) \cup g$. Since 1 is better off than in $X$, it does not strongly envy 3.  Agent 2 does not envy 3 since its prefers both of its parts over the corresponding part of agent 3. This was argued above for the top part and follows from Observation~\ref{lowerhalfmorevaluable}}

\item \emph{3 does not envy 2}: 3 championed 2 and 3 did not envy 2 earlier. Therefore by Observation~\ref{lowerhalflessvaluable} we have that $G_{32} <_3 g$. Therefore $(X_1 \setminus G_{21}) \cup G_{32} <_3 (X_1 \setminus G_{21}) \cup g$. Since 2 championed 1 and 3 did not, by Observation~\ref{nobodyenvieschampion} (part 2), we have $((X_1 \setminus G_{21}) \cup g) \leq_3 X_3$. Since 3 is better off than in $X$, 3 does not envy 2. 
\end{itemize}

Thus, the only strong envy edge  is from 1 to 2. The current state of the envy-graph is depicted below:

\begin{center}
\begin{tikzpicture}
[
agent/.style={circle, draw=green!60, fill=green!5, very thick},
good/.style={circle, draw=red!60, fill=red!5, very thick, minimum size=1pt},
]

\node[agent]      (a1) at (0,0)      {$\scriptstyle{1}$};
\node[agent]      (a2) at (2,0)      {$\scriptstyle{2}$};
\node[agent]      (a3) at (4,0)     {$\scriptstyle{3}$};



\draw[->,red,thick] (a1)--(a2);


\end{tikzpicture}
\end{center}

Let $Z$ be a smallest cardinality subset of $(X_1 \setminus G_{21}) \cup G_{32}$ that 2 values more than $\max_2((X_2 \setminus G_{32}) \cup g, X_3)$, where $\max_2((X_2 \setminus G_{32}) \cup g, X_3)$ is defined as the more valuable bundle out of $(X_2 \setminus G_{32}) \cup g$ and $X_3$ according to 2. {Note that $\max_2((X_2 \setminus G_{32}) \cup g, X_3) \le_2 (X_1 \setminus G_{21}) \cup G_{32}$ since 2 does not envy neither 1 nor 3 in $X'$. Since the instance is non-degenerate, the inequality is strict, and hence $Z$ exists.} We now consider two allocations depending on 1's value for $Z$.

\begin{description}

\item[Case $Z \leq_1 X_3$:]  We replace 2's current bundle with $Z$ and obtain 
\[ X'' = \myBoxtwo{X_3}{1}\hspace{2em}\myBoxtwo{Z}{2}\hspace{2em}\myBox{X_2 \setminus G_{32}}{g}{3}\]
Agents 1  and 3 have the same bundles as in $X'$ and hence are strictly better off than in $X$. Thus, $X''$ dominates $X$, as $a=1$ or $a=3$ and we improve $a$ strictly. We next show that $X''$ is EFX. Since the only bundle we have changed is that of 2, and there were no strong envy edges between 1 and 3 earlier, it suffices to show that there are no strong envy edges to and from 2.

\begin{itemize}
 \item \emph{Nobody envies 2}: 3 did not envy the set $(X_1 \setminus G_{21}) \cup G_{32}$. As $Z \subseteq (X_1 \setminus G_{21}) \cup G_{32}$, agent 3  does not envy $Z$ either . 1 does not envy $Z$ because we are in the case where $Z \leq_1 X_3$. 

 \item \emph{2 does not envy anyone}: This follows from the definition of $Z$ itself since $Z >_2 \max_2((X_2 \setminus G_{32}) \cup g,X_3)$.
\end{itemize}

\item[Case $Z >_1 X_3$:] In this case, we consider

\[X'' = 
\myBoxtwo{Z}{1}\hspace{2em}\myBoxtwo{\mathit{max}_2((X_2 \setminus G_{32}) \cup g,X_3)}{2}\hspace{2em}\myBoxtwo{\mathit{min}_2((X_2 \setminus G_{32}) \cup g,X_3)}{3}  \]

Agent 1 is still strictly better off than in $X$ as we are in the case $Z >_1 X_3 >_1 X_1$, and agent 3 is not worse off than before as both $X_3$ and $(X_2 \setminus G_{32}) \cup g$ are at least as valuable to him as his previous bundle $X_3$. We first show that $X''$ is EFX.

\begin{itemize}

 \item \emph{1 does not envy anyone}: We are in the case where $Z >_1 X_3$ and 1 did not envy $(X_2 \setminus G_{32}) \cup g$ when he had $X_3$ itself (and now 1 is better off than with $X_3$). Thus, 1 does not envy anyone.
 
 \item \emph{2 does not strongly envy anyone}: Since 2 chooses the better bundle out of $X_3$ and $(X_2 \setminus G_{32}) \cup g$, 2 does not envy 3. Agent 2 does not strongly envy 1 since by the definition of $Z$, we have $Z \setminus h \leq_2 \mathit{max}_2((X_2 \setminus G_{32}) \cup g,X_3)$ for all $h \in Z$. However, note that 2 envies 1. Thus, 2 does not envy 3 and does not strongly envy 1 (but envies 1).
 
 \item \emph{3 does not strongly envy anyone}: 3 did not envy the set $(X_1 \setminus G_{21}) \cup G_{32}$, \footnote{We repeat the argument made earlier: 3 championed 2 and 3 did not envy 2 earlier. Therefore, by Observation~\ref{lowerhalflessvaluable} we have that $G_{32} <_3 g$. Hence, $(X_1 \setminus G_{21}) \cup G_{32} <_3 (X_1 \setminus G_{21}) \cup g$. Since 2 championed 1 and 3 did not, by Observation~\ref{nobodyenvieschampion} (part 2), we have $((X_1 \setminus G_{21}) \cup g) \leq_3 X_3$.} and $X_3 \le X''_3$ as we argued above. Thus, 3 will not envy $Z$ either as $Z \subseteq (X_1 \setminus G_{21}) \cup G_{32}$. We next show that 3 does not strongly envy 2, observe that $(X_2 \setminus G_{32}) \cup g >_3 X_3$. Therefore, if $\min_2((X_2 \setminus G_{32}) \cup g,X_3) = (X_2 \setminus G_{32}) \cup g$, we are done. So assume $\min_2((X_2 \setminus G_{32}) \cup g,X_3) = X_3$. Since 3 championed 2 and from Observation~\ref{nobodyenvieschampion} (part 1), we have that $((X_2 \setminus G_{32}) \cup g) \setminus h \leq_3 X_3$ for all $h \in (X_2 \setminus G_{32}) \cup g$: Thus 3 does not strongly envy 2. 
 \end{itemize}

Now if $a=1$, we are done, as $X''$ is EFX and agent 1 strictly improved. So assume $a=3$. If $\min_2((X_2 \setminus G_{32}) \cup g,X_3) = (X_2 \setminus G_{32}) \cup g$, then agent 3 is strictly better off and we are done. This leaves the case that agent 3 gets $X_3$, and hence 

\[ X'' = 
\myBoxtwo{Z}{1}\hspace{2em}\myBox{X_2 \setminus G_{32}}{g}{2}\hspace{2em}\myBoxtwo{X_3}{3}
\]

The envy graph $E_{X''}$ with respect to allocation $X''$ is a path (shown below): 
1 does not envy anyone, 2 envies 1 (not strongly) and does not envy 3, and 3 envies 2. 

\begin{center}
\begin{tikzpicture}
[
agent/.style={circle, draw=green!60, fill=green!5, very thick},
good/.style={circle, draw=red!60, fill=red!5, very thick, minimum size=1pt},
]

\node[agent]      (a1) at (0,0)      {$\scriptstyle{1}$};
\node[agent]      (a2) at (2,0)      {$\scriptstyle{2}$};
\node[agent]      (a3) at (4,0)     {$\scriptstyle{3}$};



\draw[->,blue,thick] (a2)--(a1);
\draw[->,blue,thick] (a3)--(a2);


\end{tikzpicture}
\end{center}

 Also, note that we have some unallocated goods, e.g., the goods in $G_{21}$. Recall that we argued $G_{21} \not= \emptyset$ in the paragraph just before Section~\ref{a = 1 or a = 3}. Consider any good $g' \in G_{21}$. Since $3$ is the only source in $E_{X''}$, by Corollary~\ref{singlesource}, there is an EFX allocation $X'''$ Pareto dominating $X''$, where $X'''_3 >_3 X''_3 = X_3$. Thus, we have an EFX allocation $X''' $ that dominates $X$ (as agent 3 is strictly better off and $a=3$).
\end{description}

\subsection{Agent $a$ is agent 2}
Recall that we argued just before the beginning of Section~\ref{a = 1 or a = 3} that $g \notin G_{21}$ and $g \notin G_{32}$. Thus, the current EFX allocation $X$ is 

\[ X =  \Allocationfive{X_1 \setminus G_{21}}{G_{21}}{X_2 \setminus G_{32}}{G_{32}}{X_3}\]
Our aim is to determine an EFX allocation, in which agent 2 has a bundle more valuable than $X_2$. First, observe that $(X_1 \setminus G_{21}) \cup g$ is such a bundle. As 2 championed 1, we have $(X_1 \setminus G_{21}) \cup g >_2 X_2$ by the definition of $G_{21}$. We also observe that both agents 1 and 3  value $X_3$ as least as much as $X_2$ and $(X_1 \setminus G_{21}) \cup g$.

\begin{observation}
\label{X_3 is better}
 $X_3 >_i {\max_i}(X_2,( (X_1 \setminus G_{21}) \cup g)$ for $i \in \left\{1,3\right\}$. 
\end{observation}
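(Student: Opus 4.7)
The plan is to split the claim into four separate inequalities, one for each choice of $i \in \{1,3\}$ and each of the two bundles inside the $\max_i$, and derive each from the constraints on $E_X$ and $M_X$ that we have already established for this subcase. Recall that in this configuration of Figure~\ref{chamiongraphstates-2sources}, agents $1$ and $2$ are the two sources of $E_X$, the only edge of $E_X$ is $(1,3)$, agent $2$ champions agent $1$ (with $g \notin G_{21}$), and crucially agent $3$ does not champion $1$, while by general assumption no agent self-champions. I will use these repeatedly together with Observation~\ref{nobodyenvieschampion} and non-degeneracy.

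For the case $i=3$, I would first handle $X_3 >_3 X_2$: since $2$ is a source of $E_X$, agent $3$ does not envy $2$, so $X_3 \ge_3 X_2$, and by non-degeneracy the inequality is strict. For $X_3 >_3 (X_1 \setminus G_{21}) \cup g$, I apply Observation~\ref{nobodyenvieschampion}(2) with $i=2$, $j=1$, $k=3$: since $2$ champions $1$ and $3$ does not champion $1$, we obtain $(X_1 \cup g) \setminus G_{21} \le_3 X_3$. Because $g \notin G_{21}$, the left-hand side equals $(X_1 \setminus G_{21}) \cup g$, and non-degeneracy promotes $\le_3$ to $<_3$.

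For the case $i=1$, I would prove $X_3 >_1 X_2$ by chaining two facts. Agent $1$ envies $3$ (from $(1,3) \in E_X$), so $X_3 >_1 X_1$. Agent $2$ is a source, so $1$ does not envy $2$, giving $X_1 \ge_1 X_2$. Composing yields $X_3 >_1 X_2$. For the last inequality $X_3 >_1 (X_1 \setminus G_{21}) \cup g$, I reuse Observation~\ref{nobodyenvieschampion}(2), this time with $k=1$: since $2$ champions $1$ and $1$ does not champion itself, $(X_1 \cup g) \setminus G_{21} \le_1 X_1$, and as before $g \notin G_{21}$ rewrites the left side as $(X_1 \setminus G_{21}) \cup g$. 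Combining with $X_1 <_1 X_3$ gives the strict inequality.

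The only step that requires any care is ensuring that the non-self-champion hypothesis applies to agent $1$ in the last line; this is just the standing assumption that $M_X$ has no $1$-cycle, which is explicit in this section. The rest is a direct, essentially mechanical assembly of Observation~\ref{nobodyenvieschampion}(2), the hypothesis $g \notin G_{21}$ (noted just before Section~\ref{a = 1 or a = 3}), and the acyclic-source structure of $E_X$, so I do not anticipate any genuine obstacle.
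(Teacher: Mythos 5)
Your proposal is correct and follows essentially the same route as the paper: both arguments derive $X_2 \leq_i X_i$ from the fact that nobody envies agent $2$, derive $(X_1 \setminus G_{21}) \cup g \leq_i X_i$ from Observation~\ref{nobodyenvieschampion} (part 2) using that $2$ is the unique champion of $1$ (so neither $1$ nor $3$ champions $1$), chain through $X_1 <_1 X_3$ for $i=1$, and invoke non-degeneracy for strictness. Your extra care about the non-self-champion hypothesis for agent $1$ and the rewriting via $g \notin G_{21}$ is implicit in the paper but entirely consistent with it.
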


\begin{proof} We argue $\ge_i$; strict inequality then follows from non-degeneracy.
  
 Nobody envies 2 in $X$. Thus, $X_2 \leq_3 X_3$, and $X_2 \leq_1 X_1 <_1 X_3$ (the strict inequality holds as 1 envies 3 in $X$).

 2 is the unique champion of 1 in $X$ (both 1 and 3 do not champion 1). Therefore, by Observation~\ref{nobodyenvieschampion} (part 2), we have $(X_1 \setminus G_{21}) \cup g \leq_3 X_3$ and $(X_1 \setminus G_{21}) \cup g \leq_1 X_1 <_1 X_3$ (the strict inequality holds as 1 envies 3 in $X$). 
\end{proof}

For $i \in \left\{1,3\right\}$, let $\kappa_i$ be the size of a smallest subset $Z_i$ of $X_3$ such that $Z_i >_i \max_i((X_1 \setminus G_{21}) \cup g, X_2)$. We use the relative size of $\kappa_1$ and $\kappa_3$ to differentiate between agents 1 and 3. We use $w$ (winner) to denote the agent with the smaller value of $\kappa_i$, i.e., $w = 1$ if $\kappa_1 \le \kappa_3$ and $w = 3$ if $\kappa_1 > \kappa_3$. We use $\ell$ (loser) for the other agent. 
Consider

\[ X' = 
\myBoxtwo{X_3}{w}\hspace{2em}\myBoxtwo{\mathit{max}_{\ell}(X_2,(X_1 \setminus G_{21}) \cup g)}{\ell}\hspace{2em}\myBoxtwo{\mathit{min}_{\ell}(X_2,(X_1 \setminus G_{21}) \cup g)}{2}\]

In $X'$, the only possible strong envy edge is from $\ell$ to $w$. By Observation~\ref{X_3 is better}, $w$ envies neither $\ell$ nor 2. Note that 2  championed 1 and  therefore, $(X_1 \setminus G_{21}) \cup g >_2 X_2$, but by Observation~\ref{nobodyenvieschampion} (part 1), we have $((X_1 \setminus G_{21}) \cup g) \setminus h \leq_2 X_2$ for all $h \in (X_1 \setminus G_{21}) \cup g$. Thus, 2 gets a bundle worth at least $X_2$ and does not strongly envy $\ell$. 2  also does not envy $w$ (as he did not envy $X_3$ when he had $X_2$). $\ell$ does not envy 2 as he chooses the better bundle out of $X_2$ and $X_1 \setminus G_{21} \cup g$. Thus, the only possible strong envy edge is from $\ell$ to $w$. How we proceed then depends on whether or not $\ell$ strongly envies $w$. 

\paragraph{$\ell$ does not strongly envy $w$:}
Then $X'$ is EFX. If $\min_{\ell}(X_2, (X_1 \setminus G_{21}) \cup g) = (X_1 \setminus G_{21}) \cup g$, we are done as $X'$ dominates $X$ (2 is strictly better off and $a=2$). So assume otherwise. Then 
\[ X' = 
\myBoxtwo{X_3}{w}\hspace{2em}\myBoxtwo{X_1 \setminus G_{21} \cup g}{\ell}\hspace{2em}\myBoxtwo{X_2}{2}
\]
By Observation~\ref{X_3 is better}, $\ell$ envies $w$. Since 
%
%
2 only envies $\ell$, $\ell$ only envies $w$, and $w$ does not envy anyone, the envy graph $E_{X'}$ is a path with source 2. %
\begin{center}
\begin{tikzpicture}
[
agent/.style={circle, draw=green!60, fill=green!5, very thick},
good/.style={circle, draw=red!60, fill=red!5, very thick, minimum size=1pt},
]

\node[agent]      (a1) at (0,0)      {$\scriptstyle{2}$};
\node[agent]      (a2) at (2,0)      {$\scriptstyle{\ell}$};
\node[agent]      (a3) at (4,0)     {$\scriptstyle{w}$};



\draw[->,red,thick] (a1)--(a2);

\draw[->,blue,thick] (a1) -- (a2);
\draw[->,blue,thick] (a2) --(a3);

\end{tikzpicture}
\end{center}
 Also, note that there are unallocated goods, namely the goods in $G_{21}$ (we argued just before the beginning of Section~\ref{a = 1 or a = 3} that $G_{21} \neq \emptyset$). Therefore, by Corollary~\ref{singlesource}, there is an EFX allocation $X''$, in which 2 is strictly better off. Thus, $X''$ dominates $X$ as 2 is strictly better off and $a = 2$. 

\paragraph{$\ell$ strongly envies $w$:} We keep removing the least valuable good \emph{according to $w$} from $w$'s bundle, until $\ell$ does not strongly envy $w$ anymore. Let $Z$ be the bundle obtained in this way. Consider 

\[ X' = 
\myBoxtwo{Z}{w}\hspace{2em}\myBoxtwo{\mathit{max}_{\ell}(X_2,(X_1 \setminus G_{21}) \cup g)}{\ell}\hspace{2em}\myBoxtwo{\mathit{min}_{\ell}(X_2,(X_1 \setminus G_{21}) \cup g)}{2}
\]

\begin{claim}
 \label{w_enviesnobody}
 $w$ does not envy 2  and $\ell$.
\end{claim}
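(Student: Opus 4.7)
The plan is to show that $v_w(Z) > v_w(X_2)$ and $v_w(Z) > v_w((X_1 \setminus G_{21}) \cup g)$ simultaneously, since the bundles of $\ell$ and $2$ in $X'$ are exactly $X_2$ and $(X_1 \setminus G_{21}) \cup g$ in some order. Because the removal procedure strips the least $w$-valuable good at each step, $Z$ equals the set of the top $|Z|$ goods of $X_3$ under $w$'s valuation. By additivity, once we prove $|Z| \ge \kappa_w$, the set $Z$ contains the witness $Z_w$ of the minimum, so $v_w(Z) \ge v_w(Z_w) > v_w(\max_w(X_2, (X_1\setminus G_{21})\cup g))$ by the definition of $\kappa_w$, which immediately yields the desired strict inequality against both other bundles.

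To establish $|Z|\ge\kappa_w$, the idea is to use the stopping rule of the removal procedure together with the choice of $w$ as the winner. Since we are in the sub-case where $\ell$ strongly envies $w$ in $X'$ before any good is removed, the procedure executes at least one removal. Let $h^\ast$ be the last good removed, and set $Z^+ := Z \cup \{h^\ast\}$; by the very rule that defines the stopping time, $\ell$ still strongly envied $w$ when his bundle was $Z^+$. Therefore there exists $h' \in Z^+$ with
\[
Z^+ \setminus h' \;>_\ell\; \max\nolimits_\ell\bigl(X_2,\, (X_1\setminus G_{21})\cup g\bigr).
\]
This subset $Z^+\setminus h'$ lies inside $X_3$ and has cardinality $|Z|$, so by the minimality in the definition of $\kappa_\ell$ we get $|Z| \ge \kappa_\ell$. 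Since by the choice of winner $\kappa_w \le \kappa_\ell$, this gives $|Z| \ge \kappa_w$ as required.

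The main conceptual obstacle is the asymmetry built into the construction of $Z$: the removals are driven by $w$'s preferences, while the stopping criterion is about $\ell$'s envy. The clean resolution is to realize that the strong-envy witness available immediately before the procedure stops produces a subset of $X_3$ of size exactly $|Z|$ that $\ell$ prefers to his $\max$, thereby converting a fact about $\ell$'s envy into a cardinality lower bound. The winner/loser convention $\kappa_w \le \kappa_\ell$ then relays this bound to $w$, at which point non-degeneracy and additivity finish the job without any further case distinction.
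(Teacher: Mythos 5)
Your proof is correct and follows essentially the same route as the paper: you use that $Z$ consists of $w$'s top $|Z|$ goods of $X_3$, and you bound $|Z| \ge \kappa_\ell \ge \kappa_w$ via the strong-envy witness available just before the last removal. The only cosmetic difference is that you argue the cardinality bound directly, whereas the paper phrases the same step as a contradiction with $|Z| < \kappa_w \le \kappa_\ell$.
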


\begin{proof}
Recall that $\kappa_w$ is the smallest cardinality of a subset of $X_3$ that $w$ still values more than $\max_{w}(X_2,(X_1 \setminus G_{21}) \cup g)$; $\kappa_w$ was defined just after Observation~\ref{X_3 is better}.  Such a set can be obtained by removing $w$'s $\lvert X_3 \rvert - \kappa_w$ least valuable goods from $X_3$. Observe that $Z$ is obtained by removing $\lvert X_3 \rvert - \lvert Z \rvert$ of $w$'s least valuable goods from  $X_3$. If $\abs{Z} \ge \kappa_w$, $w$ will envy neither $2$ nor $\ell$. If $\abs{Z} < \kappa_w \le \kappa_\ell$ (recall that $\kappa_w \le \kappa_\ell$), let $h$ be the last good removed. Then $\ell$ strongly envies $Z \cup h$ (otherwise we would not have removed $h$), meaning that there exists an $h' \in Z \cup h$ such that $(Z \cup h) \setminus h' >_{\ell} \max_{\ell}(X_2,(X_1 \setminus G_{21}) \cup g)$. Thus, there is a subset of $X_3$ of size $\lvert (Z \cup h) \setminus h'  \rvert < \kappa_w +1 -1 =\kappa_w$ that $\ell$ values more than $\max_{\ell}(X_2,(X_1 \setminus G_{21}) \cup g)$, a contradiction to $\kappa_w \le \kappa_\ell$. 
\end{proof}

The allocation $X'$ is EFX: $w$ envies neither $2$ nor $\ell$, $\ell$ does not strongly envy $w$, $\ell$ does not envy $2$, and 2 envies neither $\ell$ nor $w$. If $\min_{\ell}(X_2, (X_1 \setminus G_{21}) \cup g)$ is $X_1 \setminus G_{21} \cup g$, then we are done as $X'$  dominates $X$ (2 is strictly better off and $a= 2$). So assume otherwise. Then

\[X' = 
\myBoxtwo{Z}{w}\hspace{2em}\myBoxtwo{X_1 \setminus G_{21} \cup g}{\ell}\hspace{2em}\myBoxtwo{X_2}{2}
\]
\noindent
In $X'$, $w$ envies nobody (by Claim~\ref{w_enviesnobody}), 2 envies $\ell$, and $\ell$ may or may not envy $w$.
We distinguish cases according to whether or not $\ell$ envies $w$.\medskip

\begin{center}
\begin{tikzpicture}
[
agent/.style={circle, draw=green!60, fill=green!5, very thick},
good/.style={circle, draw=red!60, fill=red!5, very thick, minimum size=1pt},
]

\node[agent]      (a1) at (0,0)      {$\scriptstyle{2}$};
\node[agent]      (a2) at (2,0)      {$\scriptstyle{\ell}$};
\node[agent]      (a3) at (4,0)     {$\scriptstyle{w}$};



\draw[->,red,thick] (a1)--(a2);

\draw[->,blue,thick] (a1) -- (a2);
\draw[->,blue,dashed,thick] (a2) --(a3);

\end{tikzpicture}
\end{center}

\begin{description}
\item[Case $\ell$ envies $w$:] Then, the current envy graph is a path with 2 as the source.

\begin{center}
\begin{tikzpicture}
[
agent/.style={circle, draw=green!60, fill=green!5, very thick},
good/.style={circle, draw=red!60, fill=red!5, very thick, minimum size=1pt},
]

\node[agent]      (a1) at (0,0)      {$\scriptstyle{2}$};
\node[agent]      (a2) at (2,0)      {$\scriptstyle{\ell}$};
\node[agent]      (a3) at (4,0)     {$\scriptstyle{w}$};



\draw[->,red,thick] (a1)--(a2);

\draw[->,blue,thick] (a1) -- (a2);
\draw[->,blue,thick] (a2) --(a3);

\end{tikzpicture}

\end{center}

Since there are unallocated goods, namely the goods in $G_{21}$ (we argued just before the beginning of Section~\ref{a = 1 or a = 3} that $G_{21} \neq \emptyset$), by Corollary~\ref{singlesource}, there is an EFX allocation $X''$ in which agent 2 is strictly better off.  The allocation $X''$ dominates $X$ (as 2 is strictly better off and $a = 2$). 

\item[Case $\ell$ does not envy $w$:] Then the current envy graph has two sources, namely $w$ and 2, and one envy edge from 2 to $\ell$.

\begin{center}
\begin{tikzpicture}
[
agent/.style={circle, draw=green!60, fill=green!5, very thick},
good/.style={circle, draw=red!60, fill=red!5, very thick, minimum size=1pt},
]

\node[agent]      (a1) at (0,0)      {$\scriptstyle{2}$};
\node[agent]      (a2) at (2,0)      {$\scriptstyle{\ell}$};
\node[agent]      (a3) at (4,0)     {$\scriptstyle{w}$};



\draw[->,red,thick] (a1)--(a2);

\draw[->,blue,thick] (a1) -- (a2);

\end{tikzpicture}
\end{center}

There are at least two unallocated goods, the goods in $G_{21}$ (we argued just before the beginning of Section~\ref{a = 1 or a = 3} that $G_{21} \neq \emptyset$) and the goods in $X_3 \setminus Z$ (note that this set is not empty; we definitely have removed at least one good from $X_3$ as $\ell$ strongly envied it in $X'$). Now consider the allocation $X'$ and some $g' \in G_{21}$. If the champion of $2$ is $2$ itself or $\ell$ (definition of champion based on allocation $X'$ and the unallocated good $g'$),  by Observation~\ref{champion_in_subtree} there is an EFX allocation $Y$ where the source, namely 2, is strictly better off and hence $Y$ will dominate $X$. So assume that the champion of 2 is $w$, i.e., $w \in A_{X'}(X'_2 \cup g')$. {Let $g'' \in X_3 \setminus Z$ be the last element that we removed from $X_3$ when we constructed $Z$ from $X_3$. Then $\ell$ strongly envies $Z \cup g''$ and, according to $w$, $g''$ is the least valuable good in $Z \cup g''$.} We observe that $\ell$ is the unique champion of $w$ (definition of champion based on allocation $X'$ and the unallocated good $g''$) ,i.e., $A_{X'}(X'_w \cup g'') =  \left\{\ell\right\}$. 

\begin{observation}
 For any good $g'' \in X_3 \setminus Z$, we have $ A_{X'}(X'_w \cup g'') = \left\{\ell\right\}$.
\end{observation}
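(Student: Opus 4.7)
The plan is to verify $A_{X'}(X'_w \cup g'') = \{\ell\}$ by computing $\kappa_{X'}(i, X'_w \cup g'')$ for each agent $i \in \{w, \ell, 2\}$ and showing that $\ell$ is the unique agent who envies $X'_w \cup g''$ with the smallest $\kappa$-value. Let $g_1, g_2, \ldots, g_k$ denote the sequence of removals that produced $Z$ from $X_3$, so $g_j$ is the $w$-least valuable good of $Z_{j-1} := X_3 \setminus \{g_1,\ldots,g_{j-1}\}$, and write $g''= g_j$ for some $j$.

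First, I would bound $\kappa_{X'}(w, Z \cup g_j)$ from below. By the removal rule and non-degeneracy, $v_w(g_j) < v_w(g)$ for every $g \in Z_{j-1} \setminus g_j \supseteq Z$, so $g_j$ is strictly the $w$-least valuable good in $Z \cup g_j$. Any proper subset $S \subsetneq Z \cup g_j$ either omits $g_j$, giving $S \subseteq Z$ and $v_w(S) \leq v_w(Z) = v_w(X'_w)$, or omits some $g \in Z$, giving $v_w(S) \leq v_w(Z) + v_w(g_j) - v_w(g) < v_w(Z)$. Hence $\kappa_{X'}(w, Z \cup g_j) = |Z|+1$.

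Second, I would rule out agent $2$. Since $2$ is a source of $E_X$, $v_2(X_3) \leq v_2(X_2) = v_2(X'_2)$, and $Z \cup g_j \subseteq X_3$ yields $v_2(Z \cup g_j) \leq v_2(X'_2)$. Thus $2$ does not envy $Z \cup g_j$ and is excluded from $A_{X'}(Z \cup g_j)$.

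Third, I would show $\kappa_{X'}(\ell, Z \cup g_j) \leq |Z|$. The clean case is $g'' = g_k$ (the last-removed good): the stopping rule guarantees that just before removing $g_k$ agent $\ell$ strongly envied $Z_{k-1} = Z \cup g_k$, so there exists $h \in Z \cup g_k$ with $v_\ell((Z \cup g_k) \setminus h) > v_\ell(X'_\ell)$, giving $\kappa_{X'}(\ell, Z \cup g_k) \leq |Z| < |Z|+1 = \kappa_{X'}(w, Z \cup g_k)$. Combined with the first two steps this yields $A_{X'}(Z \cup g_k) = \{\ell\}$. The main obstacle is extending this bound on $\kappa_{X'}(\ell,\cdot)$ to an arbitrary $g_j$ with $j<k$, since the direct fact available about $\ell$ is only that she strongly envied the strictly larger set $Z_{j-1}$. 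I would try to close the argument by distilling an envy-witness subset of $Z \cup g_j$ of size $\leq |Z|$ from $\ell$'s strong envy of $Z_{j-1}$, exploiting that $g_j$ is $w$-minimal in $Z_{j-1}$ and that the process does not stop at step $j$; if the general case proves intractable, note that the subsequent proof in fact only applies the observation with $g''$ equal to the last-removed good, so the clean case established above is what is actually used downstream.
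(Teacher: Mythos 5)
Your proposal follows the paper's proof step for step: you establish $\kappa_{X'}(w, Z\cup g'') = |Z|+1$ from the fact that $g''$ is $w$'s least valuable good in $Z\cup g''$ (it was $w$-minimal in the larger set from which it was removed), you exclude agent $2$ via $X_2 \geq_2 X_3 \supseteq Z\cup g''$, and you get $\kappa_{X'}(\ell, Z\cup g'') \leq |Z|$ from $\ell$'s strong envy of $Z\cup g''$. The only divergence is the point you flag yourself: you prove the third step only when $g''$ is the \emph{last} good removed, where the stopping rule directly supplies $\ell$'s strong envy of $Z\cup g''$. Your caution here is warranted rather than a defect. The paper's own proof simply asserts ``$\ell$ strongly envies $Z\cup g''$'' for arbitrary $g''\in X_3\setminus Z$, but the surrounding text justifies this only for the last-removed good; for a good $g''$ removed earlier, the available fact is that $\ell$ strongly envied the strictly larger set from which $g''$ was deleted, and since in this branch $\ell$ does not envy $Z$ itself, a good that is $w$-cheap (hence removed early) but also $\ell$-cheap need not even make $\ell$ envy $Z\cup g''$, let alone strongly. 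So the ``for any'' in the statement is an overreach of the paper's, not a gap you failed to close; the downstream argument applies the observation only to the last-removed good $g''$, which is exactly the case you proved. In short: correct, same approach, and your restriction to the last-removed good is the honest form of what the paper actually establishes and uses.
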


\begin{proof}
  We have $X'_w = Z$. First we show that $2 \notin A_{X'}(Z \cup g'')$. Note that $Z \cup g'' \subseteq X_3$. Since $X_2 \geq_2 X_3$ (as 2 did not envy 3 in $X$), 2 will not envy $Z \cup g''$ either.

  By the construction of $Z$, $g''$ is  $w$'s least valuable good in $Z \cup g''$. Thus, the  removal of any good from $Z \cup g''$ will result in a bundle whose value for $w$ is no more than the value of $Z$ for $w$. Therefore, $\kappa_{X'}(w,Z \cup g'') =  \lvert Z \cup g'' \rvert$\footnote{Recall that $\kappa_X(i,S)$ is the size of the smallest subset of $S$ which is more valuable to $i$ than $X_i$.}. Note that $\ell$ strongly envies $Z \cup g''$. Hence, there exists $h \in Z \cup g''$ such that $(Z \cup g'') \setminus h >_{\ell} X'_{\ell}$. Therefore, $\kappa_{X'}(\ell,Z \cup g'') \leq \lvert (Z \cup g'')\setminus h \rvert = \lvert Z \cup g'' \rvert -1 < \kappa_X(w,Z \cup g'')$. Thus, $w$ does not self-champion and hence $A_{X'}(Z \cup g'') = \left\{\ell\right\}$.
\end{proof}

Consider 
\[ X'' = 
\myBoxtwo{(X'_2 \cup g') \setminus G_{w2}}{w}\hspace{2em}\myBoxtwo{(X'_w \cup g'') \setminus G_{\ell w}}{\ell}\hspace{2em}\myBoxtwo{X'_{\ell}}{2}
\]
or equivalently
\[ X''= 
\myBoxtwo{(X_2 \cup g') \setminus G_{w2}}{w}\hspace{2em}\myBoxtwo{(Z \cup g'') \setminus G_{\ell w}}{\ell}\hspace{2em}\myBoxtwo{(X_1 \setminus G_{21}) \cup g}{2}.
\]

Note that every agent is strictly better off than in $X'$. $w$ championed 2, and by the definition of $G_{w2}$, we have $(X'_2 \cup g') \setminus G_{w2} >_w X'_w$. Similarly, $\ell$ championed $w$, and by the definition of $G_{ \ell w}$, we have $(X'_w \cup g'') \setminus G_{\ell w} >_{\ell} X'_{\ell}$. 2 is better off as 2 envied $\ell$ in $X'$ i.e. $X'_2 <_2 X'_{\ell}$. Now we have an allocation $X''$ in which agent 2 is strictly better off than it was in $X$. Thus, $X''$ dominates $X$ (as $a=$ 2).  It suffices to show that $X''$ is EFX now. To this end, observe that,

\begin{itemize}
\item \textit{Nobody strongly envies $w$:} $w$ championed 2. Thus, by Observation~\ref{nobodyenvieschampion} (part 1), we have that $((X'_2 \cup g') \setminus G_{w2}) \setminus h \leq_2 X'_2$ and $((X'_2 \cup g') \setminus G_{w2}) \setminus h \leq_{\ell} X'_{\ell}$ for all $h \in ((X'_2 \cup g') \setminus G_{w2})$. Since both 2 and $\ell$ are better off than before (in $X'$), they do not strongly envy $w$.

\item \textit{Nobody strongly envies $\ell$:} The argument is very similar to the previous case. $\ell$ championed 2. Thus, by Observation~\ref{nobodyenvieschampion} (part 1), we have that $((X'_{w} \cup g'') \setminus G_{ \ell w}) \setminus h \leq_2 X'_2$ and $((X'_{w} \cup g'') \setminus G_{\ell w}) \setminus h \leq_{w} X'_{w}$ for all $h \in ((X'_{w} \cup g'') \setminus G_{ \ell w})$. Since both 2 and $w$ are better off than before (than they were in $X'$), they do not strongly envy $w$.

\item \textit{Nobody strongly envies $2$:} Both $w$ and $\ell$ did not envy $X'_{\ell}$ ($\ell$ had $X'_{\ell}$ and $w$ did not envy $\ell$) when they had $X'_w$ and $X'_{\ell}$ itself. Both $w$ and $\ell$ are strictly better off than they were in $X'$. Therefore, they also do not envy 2.
\end{itemize} 
\end{description}

We conclude that there is an EFX allocation dominating $X$ in the case, $a = 2$ as well.

This allows us to summarize our main result for this section as follows,

\begin{lemma}
\label{two-sources-mainlemma}
 Let $X$ be a partial EFX allocation, and let $g$ be an unallocated good, where the envy graph $E_X$ has two sources. Then there is an  EFX allocation $Y$ dominating $X$. 
\end{lemma}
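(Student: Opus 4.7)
The plan is to proceed as in Section~\ref{threesources}, constructing a sequence of re-allocations until we obtain an EFX allocation that dominates $X$ in the lex order $\phi$. Without loss of generality let agents $1$ and $2$ be the two sources of $E_X$, and assume $(1,3) \in E_X$ so that agent $1$ envies $3$. First I would dispose of the easy case $(2,3) \in E_X$: then agent $3$ is reachable in $E_X$ from both sources, so if $3$ champions $1$ or $2$, Observation~\ref{champion_in_subtree} yields a Pareto dominating EFX allocation; otherwise (since no agent self-champions and every vertex of $M_X$ has an incoming edge) agents $1$ and $2$ must champion each other, and since neither source is envied, in particular $3$ envies neither $1$ nor $2$, so Remark~\ref{generalizing-2cycle-technique} applies.

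From now on I would assume $(2,3) \notin E_X$. Because $1$ and $2$ are sources, the only envy edge is $(1,3)$, so $3$ envies nobody. Next I would narrow down $M_X$: if $3$ champions $1$, Observation~\ref{champion_in_subtree} applied to the source $1$ closes the case; otherwise $2$ champions $1$ (no self-champion). If additionally $1$ champions $2$, then $M_X$ has a $2$-cycle between $1$ and $2$ with $3$ envying neither, and Remark~\ref{generalizing-2cycle-technique} applies; so we may assume $3$ champions $2$. Exactly three configurations of $M_X$ remain, differing in who champions $3$. In all of them, Observations~\ref{lowerhalflessvaluable} and~\ref{lowerhalfmorevaluable} give $g \notin G_{21} \cup G_{32}$ and $G_{21}, G_{32} \neq \emptyset$, and the bundles $X_1, X_2$ split into upper and lower halves according to~\lref{split into upper and lower}. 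The argument then splits on which agent is the distinguished agent $a$ in the potential $\phi$.

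For $a \in \{1,3\}$ the plan is to shift $X_3$ to agent $1$ and swap the two upper halves between $1$ and $2$, producing $X' = \langle X_3,\ (X_1 \setminus G_{21}) \cup G_{32},\ (X_2 \setminus G_{32}) \cup g \rangle$. Using Observation~\ref{upperhalfvaluable} together with the two champion relations, one verifies that $X'$ Pareto dominates $X$ and that the only possible strong envy edge is $1 \to 2$. When this edge is present I would replace $2$'s bundle by a smallest subset $Z$ of $(X_1 \setminus G_{21}) \cup G_{32}$ valued above $\max_2((X_2 \setminus G_{32}) \cup g, X_3)$ by agent $2$, and split on whether $Z \leq_1 X_3$ or $Z >_1 X_3$. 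In the one residual subcase where agent $3$ ends up with $X_3$, the envy graph becomes a path with $3$ as the unique source and unallocated goods still present in $G_{21}$, so Corollary~\ref{singlesource} pushes agent $3$ strictly up and dominates $X$.

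The main obstacle is the case $a = 2$, where only agent $2$ may gain. I would first establish two key inequalities: $(X_1 \setminus G_{21}) \cup g >_2 X_2$ (since $2$ champions $1$) and $X_3 >_i \max_i(X_2, (X_1 \setminus G_{21}) \cup g)$ for $i \in \{1,3\}$ (using that $2$ is a source and the unique champion of $1$, via Observation~\ref{nobodyenvieschampion}). Then for $i \in \{1,3\}$ I would let $\kappa_i$ be the smallest size of a subset of $X_3$ that $i$ still prefers to $\max_i(X_2,(X_1 \setminus G_{21}) \cup g)$, let $w$ be the agent with the smaller $\kappa_i$ and $\ell$ the other, and give $X_3$ to $w$, the better of $\{X_2,(X_1 \setminus G_{21}) \cup g\}$ (by $\ell$) to $\ell$, and the remaining bundle to $2$. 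The only potential strong envy edge is $\ell \to w$; if it is absent we are either done or can invoke Corollary~\ref{singlesource} using $G_{21}$ as unallocated stock. If $\ell$ strongly envies $w$, I would shrink $w$'s bundle to some $Z \subseteq X_3$ by iteratively deleting $w$'s least valuable good until $\ell$ no longer strongly envies; the choice $\kappa_w \le \kappa_\ell$ guarantees via a short pigeonhole argument that $w$ continues to envy neither $\ell$ nor $2$. The delicate remaining subcase is when $2$ still holds $X_2$; here I would use two pieces of unallocated stock, $g' \in G_{21}$ and $g'' \in X_3 \setminus Z$, and argue that either the champion of $2$ with respect to $g'$ is $2$ or $\ell$ (in which case Observation~\ref{champion_in_subtree} applied from source $2$ finishes), or else $w$ champions $2$, in which case I would show that $\ell$ is the unique champion of $w$ with respect to $g''$ (because $g''$ is by construction $w$'s least valuable good in $Z \cup g''$) and then chain the two champion operations to simultaneously improve all three agents, in particular $2$. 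Verifying EFX after this double transfer via Observation~\ref{nobodyenvieschampion} (part~1) is the most intricate step of the whole argument.
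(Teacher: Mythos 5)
Your proposal is correct and follows essentially the same route as the paper: the same reduction of $E_X$ and $M_X$ to the three residual champion configurations, the same case split on the identity of the distinguished agent $a$, the same constructions of $X'$ and $Z$ in both the $a\in\{1,3\}$ and $a=2$ cases (including the winner/loser device via $\kappa_1,\kappa_3$, the shrinking of $X_3$ to $Z$, and the final double champion transfer using $g'\in G_{21}$ and $g''\in X_3\setminus Z$). No substantive differences to report.
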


Having covered all the cases, we arrive at our main result:

\begin{theorem}
\label{main-thm}
 For any instance $I = \langle [3],M, \mathcal{V} \rangle$ where all $v_i \in \mathcal{V}$ are additive, an EFX allocation always exists. 
\end{theorem}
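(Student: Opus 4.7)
The plan is to combine the lemmas established in Sections~\ref{threesources} and~\ref{twosources} into a single iterative algorithm that terminates with a complete EFX allocation. By Lemma~\ref{non-degeneracy-main}, I would first restrict attention to non-degenerate instances, since any EFX allocation for the perturbed instance $I(\varepsilon)$ transfers back to an EFX allocation of the original $I$. I would then fix an arbitrary naming $a, b, c$ of the three agents at the outset, so that the lexicographic potential $\phi(X) = (v_a(X_a), v_b(X_b), v_c(X_c))$ is well-defined throughout the execution.

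The algorithm would start from the trivially EFX empty allocation $X^{(0)} = \langle \emptyset,\emptyset,\emptyset\rangle$ and repeatedly replace the current partial EFX allocation $X$ by a new EFX allocation $X'$ with $\phi(X) \prec_{\mathit{lex}} \phi(X')$. Whenever $X$ still has an unallocated good $g$, I would first invoke Fact~\ref{acyclic-envy-graph} to pass (via bundle permutation) to an EFX allocation with acyclic envy graph $E_X$, then apply the appropriate result according to the number of sources of $E_X$: Corollary~\ref{singlesource} in the single-source case (or whenever $M_X$ has a $1$-cycle), Lemma~\ref{three-sources-mainlemma} in the three-source case, and Lemma~\ref{two-sources-mainlemma} in the two-source case. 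In every case the output is an EFX allocation that lex-dominates $X$.

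Termination is the only remaining ingredient. Since $\phi$ takes values in the finite set of triples $\{(v_a(S), v_b(T), v_c(U)) : S,T,U \subseteq M\}$ of size at most $2^{3m}$, and strictly increases in the lexicographic order at each iteration, the sequence of allocations must be finite. When the algorithm halts, none of the three case lemmas can be invoked, which is only possible when no good is unallocated; the final allocation is therefore a complete EFX allocation, which is exactly the claim of Theorem~\ref{main-thm}.

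The main conceptual obstacle, namely that Pareto dominance alone is insufficient (as demonstrated by the counterexample of Section~\ref{counter_example_monotonicity}), has already been addressed by the choice of the weaker lexicographic potential $\phi$ and by the fact that Lemma~\ref{two-sources-mainlemma} is proved with respect to this weaker notion. So at this final stage there is essentially nothing to do beyond assembling the pieces and verifying that the case distinction on the number of sources of $E_X$ is exhaustive, which follows from Fact~\ref{acyclic-envy-graph} together with the observation that an acyclic digraph on three vertices always has between one and three sources.
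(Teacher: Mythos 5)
Your proposal is correct and follows essentially the same route as the paper: start from the empty allocation, repeatedly obtain a lexicographically dominating EFX allocation via Corollary~\ref{singlesource}, Lemma~\ref{three-sources-mainlemma}, or Lemma~\ref{two-sources-mainlemma} according to the number of sources of the (acyclic) envy graph, and conclude termination because $\phi$ takes only finitely many values. Your added remarks on non-degeneracy and on the exhaustiveness of the source count are exactly the implicit steps the paper relies on.
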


\begin{proof}
We start off with an empty allocation ($X_i = \emptyset$ for all $i \in [3]$), which is trivially EFX. As long as $X$ is not a complete EFX allocation, there is an allocation $Y$ that dominates $X$: If $E_X$ has a single source or $M_X$ has a $1$-cycle, there is a dominating EFX allocation $Y$ by Corollary~\ref{singlesource}. Lemmas~\ref{three-sources-mainlemma} and~\ref{two-sources-mainlemma} establish the existence of $Y$ when $E_X$ has multiple sources and $M_X$ does not have a $1$-cycle. Since $\phi$ is bounded from above, the process must stop. When it stops, we have arrived at a complete EFX allocation.  
\end{proof}

\section{Barriers in Current Techniques}~\label{counter_example_monotonicity}
In this section, we highlight some barriers to the current techniques for computing EFX allocations. We give an instance with three agents and seven goods such that there is a partial EFX allocation for six of the goods that is not Pareto dominated by any complete EFX allocation for the full set of goods. We also generalize this example and give an instance with a partial EFX allocation which has a Nash welfare larger than the Nash welfare of any complete EFX allocation. These examples make it unlikely that there is an iterative algorithm towards a complete EFX allocation that improves the current EFX allocation in each iteration either in the sense of Pareto domination or in the sense of Nash welfare (like the algorithms in ~\cite{TimPlaut18} and ~\cite{CKMS20}). The second example also falsifies the EFX monotonicity conjecture (see Conjecture~\ref{monotonicityconjecture}) by Caragiannis et al.~\cite{CaragiannisGravin19}.

\begin{table}[t]
\begin{center}
 \begin{tabular}{||c c c c c c c c||} 
 \hline
 & $g_1$ & $g_2$ & $g_3$ & $g_4$ & $g_5$ & $g_6$ & $g_7$ \\ [0.5ex] 
 \hline\hline
 $\mathbf{a_1}$ & $8$ & $2$ & $12$ & $2$ & $0$ & $17$ & $1$ \\ 
 \hline
 $\mathbf{a_2}$ & $5$ & $0$ & $9$ & $4$ & $10$ & $0$ &$3$ \\
 \hline
 $\mathbf{a_3}$ & $0$ & $0$ & $0$ & $0$ & $9$ & $10$ & $2$ \\
 \hline
\end{tabular}
\end{center}
\caption{An instance where no complete EFX allocation dominates the EFX allocation $X$ for the first six goods defined in the text. The valuations are assumed to be additive and the entry in row $i$ and column $j$ is the value of good $j$ for agent $i$.}
\label{Example One}
\end{table}

\begin{theorem}
  For the instance given in Table~\ref{Example One}, the partial allocation $X= \langle X_1,X_2,X_3 \rangle$, where 
\[
    X_1 = \left\{g_2,g_3,g_4 \right\} \qquad
    X_2 = \left\{g_1,g_5\right\}\qquad
    X_3 = \left\{g_6\right\},
\]
is an EFX allocation of the first six goods. No complete EFX allocation Pareto dominates $X$.
\end{theorem}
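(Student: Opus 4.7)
My plan is to first verify that $X$ is a partial EFX allocation on $\{g_1,\dots,g_6\}$ by directly computing the cross-valuations. One finds $v_1(X_1)=16$, $v_2(X_2)=15$, $v_3(X_3)=10$, $v_1(X_2)=8$, $v_1(X_3)=17$, $v_2(X_1)=13$, and $v_3(X_2)=9$. The only envy is agent~$1$ toward agent~$3$, and since $X_3=\{g_6\}$, removing $g_6$ leaves value~$0 \le 16$, so the envy is not strong.

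For the non-existence of a Pareto-dominating complete EFX allocation, I would argue by contradiction: suppose $Y=\langle Y_1,Y_2,Y_3 \rangle$ is a complete EFX allocation satisfying $v_1(Y_1)\ge 16$, $v_2(Y_2)\ge 15$, $v_3(Y_3)\ge 10$, with at least one strict inequality. The structural key is that agent~$3$ positively values only $g_5,g_6,g_7$, with values $9,10,2$. The threshold $v_3(Y_3)\ge 10$ therefore forces either (A) $g_6\in Y_3$, or (B) $g_6\notin Y_3$ and $\{g_5,g_7\}\subseteq Y_3$. A recurring lemma I would repeatedly use is a \emph{no-zero-filler} principle: if some $Y_i$ with $i\ne 3$ contains both $g_5$ and $g_7$ together with any good of zero value to agent~$3$ (i.e.\ any of $g_1,g_2,g_3,g_4$), then removing that good still leaves value~$11>10$ to agent~$3$, giving strong envy. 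This forces $g_5$ and $g_7$ into distinct bundles in nearly all subcases.

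Case~(B) is dispatched quickly: $v_2(g_6)=0$, so $g_6$ is useless to agent~$2$; and once $\{g_5,g_7\}$ is committed to agent~$3$, agent~$2$ must reach value~$15$ using $\{g_1,g_2,g_3,g_4,g_6\}$ minus whatever goes to agent~$1$. A brief check shows agent~$2$ needs $g_3$, whence agent~$1$ must rely on $g_6$ for value~$16$; the only assignments meeting all thresholds then leave agent~$1$ strongly envying $Y_2$ (since agent~$1$ values $\{g_1,g_2,g_3,g_4\}$ at~$24$, so removing a low-value good from any large bundle of value~$\ge 15$ to agent~$2$ still exceeds $v_1(Y_1)$). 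Case~(A) with $Y_3\supsetneq\{g_6\}$ is also short: for any $h\in Y_3\setminus\{g_6\}$, EFX demands $v_1(Y_1)\ge v_1(Y_3\setminus h)\ge 17$, and a brief enumeration over $h\in\{g_1,\dots,g_5,g_7\}$ shows that the remaining goods cannot simultaneously satisfy $v_1(Y_1)\ge 17$ and $v_2(Y_2)\ge 15$.

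The main obstacle, and the ``narrow miss'' that makes the example work, is Case~(A) with $Y_3=\{g_6\}$. Agent~$1$ must include $g_3$ (the only remaining good of value~$\ge 12$ to agent~$1$); the no-zero-filler principle forces $g_5$ and $g_7$ into different bundles among $Y_1,Y_2$; and a short enumeration over which of $Y_1,Y_2$ contains $g_5$ and $g_7$ (subject to $v_2(Y_2)\ge 15$, which forces $g_1\in Y_2$ and $g_5\in Y_2$) leaves the unique candidate
\[
Y_1=\{g_2,g_3,g_4,g_7\},\quad Y_2=\{g_1,g_5\},\quad Y_3=\{g_6\},
\]
which strictly improves agent~$1$ from $16$ to $17$ while leaving agents~$2$ and~$3$ at equality. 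I would finish by verifying that this $Y$ fails EFX: agent~$2$ strongly envies $Y_1$ because $v_2(Y_1)=0+9+4+3=16>15=v_2(Y_2)$, and removing $g_2$ (worth~$0$ to agent~$2$) leaves $v_2(\{g_3,g_4,g_7\})=16>15$. This single strong-envy check in the tightest subcase is the punch line delivering the contradiction.
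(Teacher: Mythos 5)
Your overall strategy is sound and, despite the different bookkeeping, is essentially the paper's argument: the paper also fixes the thresholds $v_1\ge 16$, $v_2\ge 15$, $v_3\ge 10$ and does an exhaustive case analysis, organized around which goods land in $X_1'$ (first $g_6\in X_1'$, then $g_3,g_1$), whereas you organize around $Y_3$; both routes funnel into the same unique near-miss $\langle\{g_2,g_3,g_4,g_7\},\{g_1,g_5\},\{g_6\}\rangle$ and kill it with the same computation $v_2(\{g_3,g_4,g_7\})=16>15$. Your ``no-zero-filler'' principle is a nice packaging of the recurring strong-envy check that the paper performs ad hoc in each bullet. The EFX verification of $X$ itself is correct.

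There is, however, one step that fails as written. In Case~(A) with $Y_3\supsetneq\{g_6\}$ you claim that for every choice of $h\in Y_3\setminus\{g_6\}$ the thresholds $v_1(Y_1)\ge 17$ and $v_2(Y_2)\ge 15$ are already unsatisfiable. This is false for $h=g_2$: taking $Y_3=\{g_2,g_6\}$, $Y_1=\{g_1,g_3\}$, $Y_2=\{g_4,g_5,g_7\}$ gives $v_1(Y_1)=20\ge 17$, $v_2(Y_2)=17\ge 15$, $v_3(Y_3)=10$, so all thresholds (and indeed Pareto domination of $X$) are met. That allocation is ruled out only because agent~$3$ strongly envies $Y_2$: it contains $g_5$ and $g_7$ together with the zero-filler $g_4$, so $v_3(Y_2\setminus g_4)=11>10=v_3(Y_3)$. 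In other words, this subcase must be closed by your no-zero-filler principle, not by threshold arithmetic; the principle is already in your toolkit, so the gap is easily patched, but the enumeration claim as stated is wrong for this one value of $h$. (A similar caveat applies to the main case $Y_3=\{g_6\}$, where $g_1\in Y_2$ is not forced by $v_2(Y_2)\ge 15$ alone --- $Y_2=\{g_4,g_5,g_7\}$ also reaches $17$ --- and again needs the no-zero-filler argument to exclude; there you do seem to intend to invoke it, so that passage is merely imprecise rather than broken.)
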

\begin{proof} Note that $v_1(X_1) = 16$, $v_2(X_2) = 15$, and $v_3(X_3) = 10$. We will show that there is no complete {EFX} allocation $X'$ with $v_1(X'_1) \ge 16$, $v_2(X'_2) \ge 15$ and $v_3(X'_3) \ge 10$. To this end, we systematically consider potential bundles $X'_1$ that can keep $a_1$'s valuation at or above 16.

Let us first assume ${g_6 \in X'_1 }$, and hence, $v_1(X'_1) \geq 17$. Now, to ensure $v_3(X'_3) \geq 10$, we need to allocate $g_5$ and $g_7$ to $a_3$. We are left with goods $g_1$, $g_2$, $g_3$ and $g_4$. In order to ensure $v_2(X'_2) \geq 15$, we definitely need to allocate $g_1$, $g_3$ and $g_4$ to $a_2$. Now even if we allocate the remaining good $g_2$ to $a_1$, we will have $v_1(X'_1) = v_1(\sset{g_2,g_6}) = 19 < 20 =v_1(\sset{g_1,g_3}) \leq {v_1}(X'_2 \setminus g_4)$. Therefore, $a_1$ will strongly envy $a_2$. Thus $g_6 \notin X'_1$.

If $g_6 \notin X'_1$ and $v_1(X'_1) \geq 16$, {$X'_1$ must contain $g_3$} (the total valuation for $a_1$ of all the goods other than $g_3$ and $g_7$ is less than 16). We need to consider several subcases. 

Assume ${g_1 \in X'_1}$ first. Since $X'_1$ already contains $g_1$ and $g_3$, the goods that can be allocated to $a_2$ and $a_3$ are $g_2$, $g_4$, $g_5$, $g_6$, and $g_7$. In order to ensure $v_2(X'_2) \geq 15$ we need to allocate $g_4$, $g_5$, and $g_7$ to $a_2$. Even if we allocate all the remaining goods ($g_2$ and $g_6$) to $a_3$, we have $v_3(X'_3) = v_3(\sset{g_3,g_6}) = 10 < 11 = v_3(\sset{g_5,g_7}) \leq v_3(X'_2 \setminus g_4)$. Therefore, $a_3$ will strongly envy $a_2$. 

Thus $g_1 \notin X'_1$. Since neither $g_1$ nor $g_6$ belongs to $X'_1$, the only way to ensure $v_1(X'_1) \geq 16$ is to at least allocate $g_2$, $g_3$, and $g_4$ to $a_1$(we can allocate more). Similarly, given that the goods not allocated yet are $g_1$, $g_5$, $g_6$, and $g_7$, the only way to ensure $v_1(X'_2) \geq 15$ is to allocate at least $g_1$ and $g_5$ to  $a_2$. Similarly, the only way to ensure $v_3(X'_3) \geq 10$ now is to allocate at least $g_6$ to $a_3$. We next show that adding $g_7$ to any one of the existing bundles will cause a violation of the EFX property.

\begin{itemize}
    \item Adding $g_7$ to $X'_1$: $a_2$ strongly envies $a_1$ as $v_2(X'_2)=15 < 16 = v_2(\left\{g_3,g_4,g_7\right\}) = v_2(X'_1 \setminus g_2)$.
    \item Adding $g_7$ to $X'_2$: $a_3$ strongly envies $a_2$ as $v_3(X'_3)=10 < 11 = v_3(\left\{g_5,g_7\right\}) = v_3(X'_2 \setminus g_1)$.
    \item Adding $g_7$ to $X'_3$: $a_1$ strongly envies $a_3$ as $v_1(X'_1)=16 < 17 = v_1(g_6) = v_1(X'_3 \setminus g_7)$.
    \end{itemize}
Thus, there exists no complete EFX allocations Pareto dominating $X$.  
\end{proof}

We now move on to the second example. We will modify the example in Table~\ref{Example One} to highlight some barriers in the existence of ``efficient" EFX allocations. There has been quite a lot of recent work aiming to compute fair allocations that are also efficient. The common measures of efficiency in economics are ``Pareto optimality" (where we cannot make any single agent strictly better off without harming another agent) and ``Nash welfare" (the geometric mean of the valuations of the agents). Quite recently, Caragiannis et al.~\cite{CaragiannisGravin19} showed that there exist partial EFX allocations that are efficient (with good guarantees on Nash welfare). In particular, they show,

\begin{theorem}[\cite{CaragiannisGravin19}]
 \label{EFXwithNSW}
  Let $X^{*} = \langle X^{*}_1,X^{*}_2, \dots, X^{*}_n \rangle$ be an allocation that maximizes the Nash welfare. Then, there exists a partial allocation $Y = \langle Y_1,Y_2,\dots, Y_n \rangle$ such that 
  \begin{itemize}
      \item For all $i \in N$ we have $Y_i \subseteq X^{*}_i$.
      \item $Y$ is EFX.
      \item $v_i(Y_i) \geq \tfrac{1}{2}v_i(X^{*}_i)$.
  \end{itemize}
\end{theorem}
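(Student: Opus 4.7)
The plan is to derive $Y$ from the Nash welfare maximizing allocation $X^{*}$ by carefully removing items from its bundles, using two classical properties of $X^{*}$: (a) it is EF1 (Caragiannis et al.), and (b) it is locally optimal under single-item reassignments, which yields for every $i\ne j$ and every $g\in X^{*}_j$
\[
  v_i(g)\cdot v_j(X^{*}_j)\;\le\;v_j(g)\cdot\bigl(v_i(X^{*}_i)+v_i(g)\bigr),
\]
because moving $g$ from $j$ to $i$ cannot strictly increase the Nash product. From this swap inequality I would extract the crucial corollary that any item $g\in X^{*}_j$ with $v_i(g)\ge v_i(X^{*}_i)$ satisfies $v_j(g)\ge \tfrac12 v_j(X^{*}_j)$, which immediately constrains the ``dangerous'' items: at most one such item per envier $i$ can sit inside $X^{*}_j$.

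Next I would construct $Y$ bundle-by-bundle. For each bundle $X^{*}_j$ that is EFX-envied by some agent $i$, the envier identifies via EF1 of $X^{*}$ a specific ``big'' item $g^{*}_{ij}\in X^{*}_j$ whose removal restores envy-freeness of $j$ for $i$. The construction retains precisely these big items together with enough other goods from $X^{*}_j$ to keep $j$'s own value high, and discards the remaining items of $X^{*}_j$ that would otherwise trigger EFX violations. The challenge is to coordinate removals across different enviers so that what remains in $Y_j$ simultaneously satisfies EFX for every $i\ne j$ and carries at least half of the value for $j$. Termination and EFX-ness of the output are straightforward: items are only removed, and we stop exactly when every remaining bundle is EFX-safe.

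The hard part is the value guarantee $v_j(Y_j)\ge \tfrac12 v_j(X^{*}_j)$. My plan is to amortise the lost mass $v_j(X^{*}_j)-v_j(Y_j)$ against the surviving big items in $Y_j$: whenever a sizeable chunk is removed from $X^{*}_j$ to placate some envier $i$, the corresponding big item $g^{*}_{ij}$ that is kept already accounts for at least $\tfrac12 v_j(X^{*}_j)$ by the MNW corollary above. Making this charging argument rigorous, especially when several enviers simultaneously demand trimming from the same bundle and when the big items they identify differ, is the main technical obstacle; this is precisely where full MNW optimality (and not merely EF1) is essential, and where the $1/2$ factor is tight, matching saturation of the swap inequality at $v_i(g)=v_i(X^{*}_i)$.
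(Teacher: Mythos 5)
First, a point of reference: the paper does not prove this statement at all --- it is imported verbatim from Caragiannis, Gravin, and Huang~\cite{CaragiannisGravin19} as background for Section~\ref{counter_example_monotonicity}, so there is no in-paper proof to compare against. I can only assess your sketch on its own merits.

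Your two starting ingredients are sound and are indeed the right ones: the EF1 property of Nash-welfare maximizers and the single-swap local-optimality inequality, from which your corollary ``$v_i(g)\ge v_i(X^{*}_i)$ implies $v_j(g)\ge\tfrac12 v_j(X^{*}_j)$'' follows correctly. But there is a genuine gap exactly where you flag ``the main technical obstacle'': the construction of $Y$ and the charging argument are never actually specified, and the specific plan you sketch fails as stated, for two concrete reasons. (i) \emph{Circularity in the EFX condition.} EFX of $Y$ requires $v_i(Y_j\setminus g)\le v_i(Y_i)$, and $v_i(Y_i)$ may itself drop to $\tfrac12 v_i(X^{*}_i)$. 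Hence the ``dangerous'' items are not only those with $v_i(g)\ge v_i(X^{*}_i)$: an item with $v_i(g)$ slightly above $\tfrac12 v_i(X^{*}_i)$ can trigger EFX-envy in $Y$ even though it triggers none in $X^{*}$, and for such items your corollary gives no lower bound on $v_j(g)$, so the amortisation has nothing to charge against. Breaking this circularity requires more than the singleton corollary; the standard route is the \emph{set} version of the swap inequality --- for every $S\subseteq X^{*}_j$ with $v_i(S)>v_i(X^{*}_i)$ one has $v_j(S)>\tfrac12 v_j(X^{*}_j)$, obtained by moving all of $S$ from $j$ to $i$ --- combined with a construction in which each trimmed $Y_j$ is itself such an envied set $S$, so that the same certificate that witnesses minimality also witnesses the value bound. (ii) \emph{The ``keep the big item'' step is self-defeating.} If $Y_j$ retains $g^{*}_{ij}$ together with any other good $r$, then removing $r$ leaves a set containing $g^{*}_{ij}$; whenever $v_i(g^{*}_{ij})>v_i(Y_i)$ this already violates EFX, so $Y_j$ is forced down to the singleton $\{g^{*}_{ij}\}$ and the ``enough other goods to keep $j$'s value high'' cannot be retained. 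Termination and EFX-ness of an iterative trimming procedure are indeed easy; the entire content of the theorem is the value guarantee, and that part is missing.
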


In the same paper, the authors mention that if the following conjecture is true, then there exist complete EFX allocations that are efficient as well.\footnote{In their talk at EC'19 they explicitly mention this as the ``Monotonicity Conjecture".}  

\begin{conjecture}
 \label{monotonicityconjecture}
  Adding an item to an instance that admits an EFX allocation results in another instance that admits an EFX allocation with Nash welfare at least as high as that of the partial allocation before.
\end{conjecture}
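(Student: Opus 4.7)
The plan is to falsify Conjecture~\ref{monotonicityconjecture} by exhibiting an explicit counterexample that generalizes the instance of Table~\ref{Example One}. For that instance we already know the partial EFX allocation $X_1 = \{g_2,g_3,g_4\}$, $X_2 = \{g_1,g_5\}$, $X_3 = \{g_6\}$ (with $g_7$ unallocated) is not Pareto-dominated by any complete EFX allocation; its Nash welfare is $(16 \cdot 15 \cdot 10)^{1/3} = 2400^{1/3}$. The target is to exhibit an instance --- either Table~\ref{Example One} itself or a carefully perturbed or blown-up variant of it --- in which \emph{every} complete EFX allocation of all seven goods has Nash welfare strictly less than this value.

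The key steps are as follows. First, I would reuse the case analysis carried out in the previous theorem to enumerate all complete EFX allocations of the seven-good instance, branching on which bundle contains $g_6$ and which contains $g_3$ (the high-value items that drive the EFX constraints). Second, for each surviving family of allocations I would tabulate $v_1(X'_1)$, $v_2(X'_2)$, $v_3(X'_3)$ and upper-bound the product $v_1(X'_1)\,v_2(X'_2)\,v_3(X'_3)$. Since no complete EFX allocation can match the triple $(16,15,10)$ coordinatewise, every family must sacrifice at least one coordinate; the aim is to show that in each case the product falls strictly below $2400$. If the raw numbers of Table~\ref{Example One} do not already produce this gap, I would rescale or perturb them (for instance, slightly increasing $v_2(g_5)$ and $v_3(g_6)$, or padding the instance with additional low-value ``filler'' goods) so that the coordinate loss forced by the EFX constraints outweighs any compensating gain by another agent.

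The main obstacle will be that Nash welfare is a strictly weaker criterion than Pareto domination: a complete EFX allocation might compensate a drop in one agent's value by a larger gain for another, and so non-Pareto-domination alone is insufficient. Ruling out such compensating trades requires a detailed argument that the EFX constraints simultaneously cap the gains of any over-performing agent. The numerical design of Table~\ref{Example One} is tailored to this obstacle: because $g_6$ is highly valuable to both $a_1$ and $a_3$, and $g_3$ is highly valuable to both $a_1$ and $a_2$, any reallocation that boosts one of $a_2$ or $a_3$ substantially past their current share will expose that agent, or agent $a_1$, to a strong envy edge, pushing the other agent's value down. I therefore expect the case analysis to convert this structural tension into the required multiplicative gap $v_1(X'_1)\,v_2(X'_2)\,v_3(X'_3) < 2400$ for every complete EFX allocation $X'$, thereby refuting the monotonicity conjecture.
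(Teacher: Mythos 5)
Your overall strategy --- refute the conjecture by exhibiting an instance generalizing Table~\ref{Example One} in which every complete EFX allocation has Nash welfare strictly below that of the given partial allocation, via a case analysis on where $g_3$, $g_5$, $g_6$ land --- is exactly the route the paper takes (Theorem~\ref{barrier} and Lemmas~\ref{everyonehas10}--\ref{onlyg6}). You also correctly diagnose the central obstacle, namely that non-Pareto-domination does not rule out compensating trades in the Nash product. However, the proposal stops precisely where the real work begins: you never produce the instance. This is a genuine gap, not a presentational one, because your fallback option of using Table~\ref{Example One} unmodified provably fails. Consider $\hat{X}_1 = \{g_6\}$, $\hat{X}_2 = \{g_3,g_4,g_7\}$, $\hat{X}_3=\{g_1,g_2,g_5\}$ in that instance. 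One checks directly that $\hat X$ is a complete EFX allocation (agent $3$ envies agent $1$, but only weakly, since $\hat X_1$ is a singleton), and
\[
v_1(\hat X_1)\,v_2(\hat X_2)\,v_3(\hat X_3) \;=\; 17\cdot 16\cdot 9 \;=\; 2448 \;>\; 2400 \;=\; 16\cdot 15\cdot 10 .
\]
So the seven-good instance of Table~\ref{Example One} admits a complete EFX allocation with strictly \emph{higher} Nash welfare than the partial allocation, and the conjecture is not refuted by it.

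The missing idea, which the paper supplies in Table~\ref{Example Two}, is how to perturb so that the forced loss always dominates the possible gains multiplicatively. Vague adjustments such as ``slightly increasing $v_2(g_5)$ and $v_3(g_6)$'' or adding filler goods do not obviously achieve this and are not verified. The paper instead flattens all the large valuations to be $10$ up to terms in $\varepsilon$, and encodes the combinatorial structure of Table~\ref{Example One} in distinct powers $\varepsilon, \varepsilon^2,\dots,\varepsilon^6$, arranged so that in every branch of the case analysis some agent's ratio $v_i(X'_i)/v_i(X_i)$ falls below $1$ by a term of strictly lower order in $\varepsilon$ (hence larger) than every other agent's gain above $1$. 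Establishing that this holds in \emph{all} branches is the bulk of the proof (Lemmas~\ref{onlyg5}, \ref{g5g7}, \ref{g6g7}, \ref{onlyg6_part1}, \ref{onlyg6_part2}), and the choice of exponents is delicate: it must simultaneously preserve the EFX-feasibility structure of the original example and create the required asymmetry between losses and gains. Until you exhibit a concrete valuation table and carry that analysis through, the conjecture is not refuted.
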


We will now show that this conjecture is false, which suggests that EFX demands ``too much fairness" and some ``trade-offs with efficiency" may be necessary. In particular, we construct an instance $I'$, such that there exists a partial EFX allocation $X$ with Nash welfare $\NSW(X)$ strictly larger than the Nash welfare $\NSW(X')$ of any complete EFX allocation $X'$. From the example in Table~\ref{Example One}, it is clear that in any complete EFX allocation, we need to decrease the valuation of one of the agents. The high level idea is to modify $I$ to $I'$ such that the decrease in valuation of one of the agents is significantly more than the increase in valuation of the other agents.

\begin{table}[t]
\begin{center}
 \begin{tabular}{||c c c c c c c c||} 
 \hline
 & $g_1$ & $g_2$ & $g_3$ & $g_4$ & $g_5$ & $g_6$ & $g_7$ \\ [0.5ex] 
 \hline\hline
 $\mathbf{a_1}$ & $\varepsilon^{3}+ 6 \varepsilon^{5}$ & $2 \varepsilon^{5}$ & $10-\varepsilon^{3}$ & $\varepsilon^{3}$ & $10-2\varepsilon^{3}$ & $10 + 3\varepsilon^{5}$ & $\varepsilon^{5}$ \\ 
 \hline
 $\mathbf{a_2}$ & $\varepsilon$ & $0$ & $10-\varepsilon^{2}+\varepsilon^{6}$ & $2\varepsilon^{2}$ & $10$ & $0$ &$\varepsilon-\varepsilon^{2}$ \\
 \hline
 $\mathbf{a_3}$ & $0$ & $0$ & $0$ & $0$ & $10-\varepsilon^{4}$ & $10$ & $2\varepsilon^{4}$ \\
 \hline
\end{tabular}
\end{center}
\caption{\label{Example Two} An instance where no complete EFX allocation has larger Nash welfare than the EFX allocation $X$ for the first six goods defined in the text. The valuations are assumed to be additive and the entry in row $i$ and column $j$ is the value of good $j$ for agent $i$; $\varepsilon$ is positive, but infinitesimally small.}
\end{table}

\begin{theorem}
 \label{barrier}
 For the instance $I'$ with three agents and seven goods given in Table~\ref{Example Two}, the allocation $X = \langle X_1,X_2,X_3 \rangle$, where
\[ 
    X_1 = \left\{g_2,g_3,g_4 \right\}\qquad
    X_2 = \left\{g_1,g_5\right\}\qquad
    X_3 = \left\{g_6\right\},\]
  is an EFX allocation of the first six goods whose Nash welfare is larger than the Nash welfare of any complete EFX allocation.\footnote{The reader is encouraged to keep an eye on Table~\ref{Example Two} for the entire proof of Theorem~\ref{barrier}.}
\end{theorem}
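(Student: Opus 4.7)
The plan is to first verify that $X$ is an EFX allocation of $\{g_1,\ldots,g_6\}$ by a direct check of the nine inequalities $v_i(X_i) \ge v_i(X_j \setminus g)$; the tightest case, that $a_2$ does not strongly envy $X_1$, reduces to $v_2(\{g_3,g_4\}) = 10 + \varepsilon^2 + \varepsilon^6 \le 10 + \varepsilon = v_2(X_2)$, valid for $\varepsilon$ below some threshold $\varepsilon_0 > 0$. Fixing such an $\varepsilon$, the baseline to beat is
\[
  \NSW(X)^3 \;=\; (10+2\varepsilon^5)(10+\varepsilon)(10) \;=\; 1000 + 100\varepsilon + O(\varepsilon^5),
\]
so every complete EFX allocation must have $\NSW$ cube at most this quantity.

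Next I would pare the set of candidate complete EFX allocations down to a manageable list. The goods $g_3, g_5, g_6$ are the only ones worth $\Theta(1)$ to any agent, whereas $g_1, g_2, g_4, g_7$ each contribute $O(\varepsilon)$. Hence in any EFX allocation every agent must hold at least one of $g_3, g_5, g_6$, since otherwise her bundle is $O(\varepsilon)$ and she strongly envies any agent who holds a large good. Because $v_3(g_3)=0$ and $v_2(g_6)=0$, the only feasible assignments of the three large goods are the three skeletons
\[
  S_1\colon (a_1,a_2,a_3)\mapsto(g_3,g_5,g_6),\qquad S_2\colon (a_1,a_2,a_3)\mapsto(g_6,g_3,g_5),\qquad S_3\colon (a_1,a_2,a_3)\mapsto(g_5,g_3,g_6).
\]
For each skeleton I would enumerate the distributions of the four small goods $g_1,g_2,g_4,g_7$ that preserve EFX, reusing the case analysis of the previous theorem on $S_1$ (the three direct additions of $g_7$ to $X_1$, $X_2$, $X_3$ all fail EFX by the same arguments, and further reshufflings of $g_1, g_2, g_4$ fall under the same umbrella), and performing the analogous case analysis for $S_2$ and $S_3$.

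For each EFX completion $X'$ that survives this analysis I would then expand $\NSW(X')^3$ to leading order in $\varepsilon$ and compare with $\NSW(X)^3$. The exponents $\varepsilon, \varepsilon^2, \varepsilon^3, \varepsilon^4, \varepsilon^5, \varepsilon^6$ in Table~\ref{Example Two} are layered so that the $+100\varepsilon$ perturbation in $\NSW(X)^3$, which originates from assigning $g_1$ to $a_2$, can only be reproduced at the cost of either a strong envy or a compensating $\Theta(\varepsilon^2)$ loss elsewhere. A representative EFX completion under $S_2$ is $X'_1=\{g_4,g_6\}$, $X'_2=\{g_1,g_3\}$, $X'_3=\{g_2,g_5,g_7\}$, which yields
\[
  \NSW(X')^3 \;=\; 1000 + 100\varepsilon - 100\varepsilon^2 + 100\varepsilon^3 + O(\varepsilon^4) \;<\; \NSW(X)^3.
\]
The main obstacle will be the $\varepsilon$-order bookkeeping across all subcases: the EFX-defining inequalities and the NSW comparison are sharp at comparable scales, and one must track both simultaneously. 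I would organize the verification as a table recording, for every skeleton and every EFX-compatible distribution of the small goods, the leading NSW perturbation, and confirm that each is strictly below $+100\varepsilon$ for all $\varepsilon<\varepsilon_0$.
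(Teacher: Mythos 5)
Your outline does match the paper's strategy: isolate $g_3,g_5,g_6$ as the only $\Theta(1)$-valued goods, show that any relevant complete EFX allocation must give $a_2$ one of $\{g_3,g_5\}$ and $a_3$ one of $\{g_5,g_6\}$ (the paper's Lemma~\ref{everyonehas10}), and then compare $\NSW$ order by order in $\varepsilon$ over the remaining cases. Your baseline expansion $\NSW(X)^3 = 1000+100\varepsilon+O(\varepsilon^5)$ and your representative $S_2$ computation are both correct. The problem is that the proposal stops exactly where the proof begins. The theorem is a universally quantified statement over all complete EFX allocations, and everything after ``I would enumerate the distributions of the four small goods'' is a promise rather than an argument. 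In the paper this enumeration occupies Lemmas~\ref{onlyg5}--\ref{onlyg6_part2}, each of which first uses EFX constraints to pin down tight upper bounds on $v_1(X'_1)$ and $v_2(X'_2)$ (e.g.\ ``$X'_2$ cannot contain both $g_1$ and $g_4$, else $a_1$ strongly envies $a_2$'') before the product comparison can be closed; in several sub-cases the margin is as fine as an $\varepsilon^6$ gain against an $\varepsilon^5$ gain, so ``confirm that each is strictly below $+100\varepsilon$'' is not a formality but the actual content of the theorem. A single worked example cannot stand in for it.

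Two of your proposed shortcuts would also fail as stated. First, the claim that an agent holding none of $g_3,g_5,g_6$ ``strongly envies any agent who holds a large good'' is false when that large good sits in a singleton bundle (removing it leaves $\emptyset$) and when the deprived agent values that particular large good at $0$ (e.g.\ $a_3$ and $g_3$, or $a_2$ and $g_6$); one can patch this with further case analysis, but the paper avoids the issue entirely by taking $X'$ to be an $\NSW$-maximizing complete EFX allocation and comparing against the explicit allocation $\hat X$, so that any agent left with an $O(\varepsilon)$ bundle already forces $\NSW(X')^3 = O(\varepsilon) \ll \NSW(\hat X)^3$. Second, ``reusing the case analysis of the previous theorem'' does not go through: that theorem rules out EFX completions that \emph{Pareto dominate} $X$, on a different valuation table. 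Here you must also dispose of EFX completions that exist but do not Pareto dominate $X$ --- and they do exist (the paper's $\hat X$, and your own $S_2$ example) --- by showing their Nash welfare is strictly smaller even though one agent's valuation may have increased. That is a strictly larger set of allocations to control and requires the trade-off bookkeeping between $a_1$'s $O(\varepsilon^3)$ or $O(\varepsilon^5)$ gains and $a_2$'s $O(\varepsilon)$ or $O(\varepsilon^2)$ losses that the paper's five lemmas carry out explicitly.
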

\begin{proof}
  Observe that $\NSW(X) = ((10+2\varepsilon^{5}) \cdot (10+\varepsilon) \cdot(10))^{1/3}$. Let $X'$ be a complete EFX allocation with maximum Nash welfare. 
   
\begin{lemma}
\label{everyonehas10}
 $X'$ allocates the goods $g_3$, $g_5$ and $g_6$ to distinct agents. Additionally, 
 \begin{itemize}
     \item $X'_2$ contains exactly one good from $\left\{g_3,g_5\right\}$.
     \item $X'_3$ contains exactly one good from $\left\{g_5,g_6\right\}$.
 \end{itemize}
\end{lemma}

\begin{proof}
 Consider the following complete EFX allocation $\hat{X} = \langle \hat{X}_1,\hat{X}_2,\hat{X}_3 \rangle$:
 \[ \hat{X}_1 = \left\{g_6\right\}\qquad
    \hat{X}_2 = \left\{g_3,g_4,g_7\right\}\qquad
    \hat{X}_3 = \left\{g_1,g_2,g_5\right\}\]
 It is easy to verify that $\hat{X}$ is EFX and $\NSW(\hat{X}) = ((10 + 3\varepsilon^{{5}})(10+\varepsilon+\varepsilon^6)(10 - \varepsilon^4))^{{1/3}}$. Since $X'$ is a complete EFX allocation with maximum Nash welfare, we have $\NSW(X') \geq \NSW(\hat{X})$. If $g_3$, $g_5$, and $g_6$ are not allocated to distinct agents, there is an agent $a_i$ who does not get any of these goods. The valuation of this agent is at most $4 \varepsilon $ (since $\varepsilon$ is the maximum valuation of any agent for any good outside the set $\left\{g_3,g_5,g_6\right\}$). The valuation of the other two agents can be at most $3 \cdot (10 + \varepsilon) + 4\varepsilon = 30 + 7\varepsilon$ (since $\varepsilon$ is the maximum valuation of any agent for any good outside the set $\left\{g_3,g_5,g_6\right\}$, and $10 + \varepsilon$ upper bounds the maximum valuation of any good in $\left\{g_3,g_5,g_6\right\}$). Thus $\NSW(X') \leq ((4 \varepsilon) \cdot (30+7\varepsilon)^2)^{1/3} < \NSW(\hat{X})$ for sufficiently small $\varepsilon$. 
 
 A similar argument shows that $X'_2$ contains at least one good from $\left\{g_3,g_5\right\}$ and $X'_3$ contains at least one good from $\left\{g_5,g_6\right\}$ (since these are the only goods that the agents value close to $10$). Since the goods $g_3$, $g_5$, and $g_6$ are allocated to distinct agents, $a_2$ will get exactly one good from $\left\{g_3,g_5\right\}$ and $a_3$ will get exactly one good from $\left\{g_5,g_6\right\}$.   
\end{proof}

Let us denote the set $\left\{g_5,g_6,g_7\right\}$ as $\VAL$, the goods valuable for agent $a_3$. Note that $v_3(X'_3) = v_3(X'_3 \cap \VAL)$. We will now prove our claim by studying the cases that arise depending on $X'_3 \cap \VAL$. By Lemma~\ref{everyonehas10}, $X'_3 \cap \VAL$ is non-empty and contains exactly one of $g_5$ and $g_6$. Thus, $X'_3 \cap \VAL$ can be $\left\{g_5\right\}$, $\left\{g_6\right\}$, $\left\{g_5,g_7\right\}$,  or $\left\{g_6,g_7\right\}$ only. 

\begin{lemma}
\label{onlyg5}
 If $X'_3 \cap \VAL = \left\{g_5\right\}$, then $\NSW(X') <\NSW(X)$.
\end{lemma}

\begin{proof}
We have that $v_3(X'_3) = v_3(X'_3 \cap \VAL) = 10 -\varepsilon^{4}$. Lemma~\ref{everyonehas10} implies that $X'_2$ contains $g_3$ and $X'_1$ contains $g_6$. Note that $X'_1$ cannot contain any additional good other than $g_6$ as this would lead to $a_3$ strongly envying $a_1$ (note that $v_3(g_6) = 10 > 10 - \varepsilon^{4} =v_3(X'_3)$). Therefore $v_1(X'_1) = 10 + 3\varepsilon^{5}$. Now we distinguish two cases depending on whether or not $X'_2$ contains $g_1$. 
\begin{itemize}
    \item $g_1 \in X'_2$: In this case, $X'_2 = \sset{g_1,g_3}$, as otherwise $a_1$ strongly envies $a_2$ (note that $v_1(X'_1) = 10+3\varepsilon^{5} < 10+6\varepsilon^{5} = v_1(\left\{g_1,g_3\right\}$), and hence,  $v_2(X'_2) = v_2(\left\{g_1,g_3\right\}) = 10 + \varepsilon + \varepsilon^{6} -\varepsilon^{2}$. Thus,
    \[ \frac{v_1(X'_1)}{v_1(X_1)} = 1 + \frac{\varepsilon^{5}}{10+2\varepsilon^{5}}, \qquad \frac{v_2(X'_2)}{v_2(X_2)} = 1 - \frac{\varepsilon^{2}-\varepsilon^{6}}{10+\varepsilon}, \text{ and}\qquad\frac{v_3(X'_3)}{v_3(X_3)} \le 1,\]
    and hence, ${\NSW(X')}/{\NSW(X)} < 1$.   
    \item $g_1 \notin X'_2$: Then $v_2(X'_2) \leq v_2(\textup{remaining items}) = v_2(\left\{g_2,g_3,g_4,g_7\right\}) = 10 + \varepsilon + \varepsilon^{6}$, and hence,  \[ \frac{\NSW(X')}{\NSW(X)} = ((1 + \frac{\varepsilon^{5}}{10+2\varepsilon^{5}})(1 + \frac{\varepsilon^{6}}{10+\varepsilon})(1 - \frac{\varepsilon^{4}}{10}))^{{1}/{3}} < 1\]. \qedhere
\end{itemize}
\end{proof}

\begin{lemma}
\label{g5g7}
If $X'_3 \cap \VAL = \left\{g_5,g_7\right\}$, then $\NSW(X') <\NSW(X)$.
\end{lemma}

\begin{proof}
This proof follows the proof of Lemma~\ref{onlyg5} closely. We have $v_3(X'_3) = v_3(X'_3 \cap \VAL) = 10 +\varepsilon^{4}$. Lemma~\ref{everyonehas10} implies that $X'_2$ contains $g_3$ and $X'_1$ contains $g_6$. We now distinguish two cases depending on whether or not $\left\{g_1,g_4\right\} \subseteq X'_2$.
\begin{itemize}
    \item $\left\{g_1,g_4\right\} \subseteq X'_2$: Then $a_1$ strongly envies $a_2$ as $v_1(X'_1) \leq v_1(\textup{remaining items}) = v_1(\left\{g_2,g_6\right\}) = 10 +5\varepsilon^{5} < 10 + 6\varepsilon^{5} = v_1(\left\{g_1,g_3\right\}) \leq v_1(X'_2 \setminus g_4)$.
    \item $\left\{g_1,g_4\right\} \not \subseteq X'_2$. Then $v_2(X'_2) \leq v_2(\left\{g_1,g_2,g_3\right\}) =  10 + \varepsilon - \varepsilon^{2} + \varepsilon^{6}$ (not giving the less valuable $g_4$ and giving everything else that remains). Also, $v_1(X'_1) \leq v_1(\left\{g_1,g_2,g_4,g_6\right\}) = 10 + 2\varepsilon^{3} + 11\varepsilon^{5} $. Thus,
    \[ \frac{v_1(X'_1)}{v_1(X_1)} = 1 + \frac{2\varepsilon^{3} + 9\varepsilon^{5}}{10+2\varepsilon^{5}}, \qquad 
    \frac{v_2(X'_2)}{v_2(X_2)} = 1 - \frac{\varepsilon^2 - \varepsilon^6}{10 + \varepsilon},\text{ and} \qquad                 \frac{v_3(X'_3)}{v_3(X_3)} = 1 + \frac{\varepsilon^{4}}{10}                                               \],
    and hence, $\NSW(X') <\NSW(X)$. \qedhere
\end{itemize}
\end{proof}

\begin{lemma}
\label{g6g7}
If $X'_3 \cap \VAL = \left\{g_6,g_7\right\}$, then $\NSW(X') <\NSW(X)$.
\end{lemma}

\begin{proof}
We have $v_3(X'_3) = v_3(X'_3 \cap \VAL) = 10 + 2\varepsilon^{4}$. By Lemma~\ref{everyonehas10}, one of $g_3$ and $g_5$ will be allocated to each of $a_2$ and $a_1$. We argue that $g_1 \in X'_1$. If $g_1 \notin X'_1$, then
\begin{align*}
   v_1(X'_1) &\leq \mathit{max}(v_1(g_3),v_1(g_5)) + v_1(\left\{g_2,g_4\right\})\\
             &= (10-\varepsilon^{3}) + \varepsilon^{3} + 2\varepsilon^{5}\\
             &< 10 + 3\varepsilon^{5}\\
             &=v_1(g_6)\\
             &=v_1(X'_3 \setminus g_7),
\end{align*}
and hence, $a_1$ strongly envies $a_3$. 

Therefore $g_1 \in X'_1$. But we still have $v_1(X'_1) \leq \mathit{max}(v_1(g_3),v_1(g_5)) + v_1(\left\{g_1,g_2,g_4\right\}) = (10-\varepsilon^{3}) + (2\varepsilon^{3}+8\varepsilon^{5}) = 10 + \varepsilon^{3} + 8\varepsilon^{5}$. However, since $g_1 \in X'_1$, we have that $v_2(X'_2) \leq \mathit{max}(v_2(g_3),v_2(g_5)) + v_2(\left\{g_2,g_4\right\}) = 10 + 2\varepsilon^{2}$. Thus,
\[\frac{v_1(X'_1)}{v_1(X_1)} = 1 + \frac{\varepsilon^{3} + 6\varepsilon^{5}}{10+2\varepsilon^{5}},\qquad
\frac{v_2(X'_2)}{v_2(X_2)} \leq 1 - \frac{\varepsilon-2\varepsilon^{2}}{10+\varepsilon},\text{ and}\qquad
 \frac{v_3(X'_3)}{v_3(X_3)} = 1 + \frac{2\varepsilon^{4}}{10}\],
 and hence, $\NSW(X') < \NSW(X)$. 
\end{proof}

\begin{lemma}
\label{onlyg6_part1}
If $X'_3 \cap \VAL = \left\{g_6\right\}$ and $g_3 \in X'_2$, then $\NSW(X') <\NSW(X)$.
\end{lemma}

\begin{proof}
 We have $v_3(X'_3) = v_3(X'_3 \cap \VAL) = 10$. Since $g_3$ and $g_5$ are allocated to $a_1$ and $a_2$, respectively, and $g_3 \in X'_2$, we have $g_5 \in X'_1$ by Lemma ~\ref{everyonehas10}. We now distinguish two cases depending, on whether or not $g_1 \in X'_2$.
  \begin{itemize}
     \item $g_1 \in X'_2$: Then $X'_2$ cannot contain any other goods than $g_1$ and $g_3$, else $a_1$ will strongly envy $a_2$: $v_1(X'_1) \leq v_1(\textup{remaining items}) \leq v_1(\left\{g_2,g_4,g_5,g_7\right\}) = 10 - \varepsilon^{3} + 3\varepsilon^{5} < 10 + 6\varepsilon^{5} = v_1(\left\{g_1,g_3\right\})$. Therefore $v_2(X'_2) = v_2(\left\{g_1,g_3\right\}) = 10 + \varepsilon - \varepsilon^{2} + \varepsilon^{6}$. Also, note that $v_1(X'_1) \leq v_1(\left\{g_2,g_4,g_5,g_7\right\}) = 10 - \varepsilon^{3} + 3\varepsilon^{5}$. In that case, the valuations of both $a_1$ and $a_2$ decrease, and that of $a_3$ does not increase. Thus $\NSW(X') < \NSW(X)$.
    \item $g_1 \notin X'_2$: Then $X'_2$ cannot contain both of $g_4$ and $g_7$, else $a_1$ will strongly envy $a_2$: $v_1(X'_1) \leq v_1(\textup{remaining goods}) = v_1(\left\{g_1,g_2,g_5\right\}) = 10 - \varepsilon^{3} + 8\varepsilon^{5} < 10 = v_1(\left\{g_3,g_4\right\}) = v_1(X'_2 \setminus g_7)$. Therefore, $v_2(X'_2) \leq \mathit{max}(v_2(g_4), v_2(g_7)) + v_2(\textup{remaining items}) \leq \mathit{max}(v_2(g_4),$ $v_2(g_7)) + v_2(\left\{g_2,g_3\right\}) = 10 + \varepsilon - 2\varepsilon^{2} + \varepsilon^{6}$ and $v_1(X'_1) \leq v_1(\left\{g_1,g_2,g_4,g_5,g_7\right\}) = 10 + 9\varepsilon^{5}$.
    Thus,
    \[ \frac{v_1(X'_1)}{v_1(X_1)} = 1 + \frac{7\varepsilon^{5}}{10+2\varepsilon^{5}}, \qquad \frac{v_2(X'_2)}{v_2(X_2)} \leq 1 - \frac{2\varepsilon^{2} - \varepsilon^{6}}{10+\varepsilon}, \text{ and} \qquad \frac{v_3(X'_3)}{v_3(X_3)} = 1\],
    and hence, $\NSW(X') < \NSW(X)$.\qedhere
\end{itemize}
\end{proof}

\begin{lemma}
\label{onlyg6_part2}
If $X'_3 \cap \VAL = \left\{g_6\right\}$ and $g_3 \notin X'_2$, then $\NSW(X') <\NSW(X)$.
\end{lemma}

\begin{proof}
 We have $v_3(X'_3) = v_3(X'_3 \cap \VAL) = 10$. Since $g_3 \notin X'_2$, we have $g_5 \in X'_2$ and $g_3 \in X'_1$ by Lemma ~\ref{everyonehas10}. We now distinguish two cases depending on whether or not $g_7 \in X'_2$.
  \begin{itemize}
     \item $g_7 \in X'_2$: Then $X'_2$ cannot contain any other goods than $g_5$ and $g_7$, else $a_3$ will strongly envy $a_2$: $v_3(X'_3) =10 < 10 + \varepsilon^{4} = v_3(\left\{g_5,g_7\right\})$. Therefore, $v_2(X'_2) = v_2(\left\{g_5,g_7\right\}) = 10 + \varepsilon - \varepsilon^{2}$ and $v_1(X'_1) \leq v_1(\textup{remaining items}) =  v_1(\left\{g_1,g_2,g_3,g_4\right\}) = 10 + \varepsilon^{3} +  8\varepsilon^{5}$. Thus,
     \[\frac{v_1(X'_1)}{v_1(X_1)} = 1 + \frac{\varepsilon^{3} + 6\varepsilon^{5}}{10+2\varepsilon^{5}},\qquad 
     \frac{v_2(X'_2)}{v_2(X_2)} \leq 1 - \frac{\varepsilon^{2}}{10+\varepsilon},\text{ and}\qquad
     \frac{v_3(X'_3)}{v_3(X_3)} = 1\],
    and hence, $\NSW(X') < \NSW(X)$.
    \item $g_7 \notin X'_2$: Then $X'_2$ cannot contain both of $g_1$ and $g_4$ else $a_1$ will strongly envy $a_2$: $v_1(X'_1) \leq v_1(\textup{remaining goods}) = v_1(\left\{g_2,g_3,g_7\right\}) = 10 - \varepsilon^{3} + 3\varepsilon^{5} < 10 -\varepsilon^3 + 6\varepsilon^{5} = v_1(\left\{g_1,g_5\right\}) = v_1(X'_2 \setminus g_4)$. Now we consider two cases depending on whether or not $g_1 \in X'_2$.
     \begin{itemize}
         \item $g_1 \in X'_2$: Then $X'_2$ cannot have $g_4$. Thus $v_2(X'_2) \leq v_2(g_1) + v_2(\textup{remaining items}) = v_2(g_1) + v_2(\left\{g_2,g_5\right\}) = 10 + \varepsilon = v_2(X_2)$. Note that $X'_1$ cannot have all of the remaining goods $g_2,g_3,g_4,g_7$, else $a_2$ will strongly envy $a_1$: $v_2(X'_2) \leq  10 + \varepsilon < 10+ \varepsilon + \varepsilon^{6} = (10-\varepsilon^{2}+\varepsilon^{6}) + (2\varepsilon^{2}) + (\varepsilon-\varepsilon^{2}) = v_2(\left\{g_3,g_4,g_7\right\}) = v_2(\left\{g_2,g_3,g_4,g_7\right\} \setminus g_2)$. Therefore, $X'_1$ is a strict subset of  $\left\{g_2,g_3,g_4,g_7\right\}$, and it should contain $g_7$ (as we are in the case where neither $X'_2$ nor $X'_3$ can have $g_7$). Since $a_1$'s valuation for $g_7$ is strictly less than his valuation for any of $g_2$, $g_3$, and $g_4$, we have that $v_1(X'_1) < v_1(\left\{g_2,g_3,g_4\right\}) = v_1(X_1)$. Since we are in the case where $v_2(X'_2) \leq v_2(X_2)$ and $v_3(X'_3) = v_3(X_3)$, we have $\NSW(X') < \NSW(X)$.
         
         \item $g_1 \notin X'_2$: Then $v_2(X'_2) \leq v_2(\textup{remaining items}) = v_2(\left\{g_2,g_4,g_5\right\}) = 10 + 2\varepsilon^{2}$ and $v_1(X'_1) \leq v_1(\left\{g_1,g_2,g_3,g_4,g_7\right\}) = 10 + \varepsilon^{3} + 9\varepsilon^{5}$. Thus,
         \[\frac{v_1(X'_1)}{v_1(X_1)} = 1 + \frac{\varepsilon^{3}+ 7\varepsilon^{5}}{10+2\varepsilon^{5}},\qquad  \frac{v_2(X'_2)}{v_2(X_2)} \leq 1 - \frac{\varepsilon-2\varepsilon^{2}}{10+\varepsilon},\text{ and}\qquad \frac{v_3(X'_3)}{v_3(X_3)} = 1\],
         and hence, $\NSW(X') < \NSW(X)$. \qedhere 
     \end{itemize}
\end{itemize}
\end{proof}

Lemmas~\ref{onlyg6_part1} and~\ref{onlyg6_part2} immediately imply the following:

\begin{lemma}
 \label{onlyg6}
 If $X'_3 \cap \VAL = \left\{g_6\right\}$, then $\NSW(X') <\NSW(X)$.
\end{lemma}
We are now ready to complete the proof. 
 Lemma~\ref{everyonehas10} implies that $a_3$ gets exactly one good from $\left\{g_5,g_6\right\}$. Thus, $X'_3 \cap \VAL \neq \emptyset$, and $ \left\{g_5,g_6\right\} \not \subseteq X'_3 \cap \VAL $. So $X'_3 \cap \VAL \in \left\{\left\{g_5\right\},\left\{g_6\right\},\left\{g_5,g_7\right\},\left\{g_6,g_7\right\}\right\}$. However, Lemmas~\ref{onlyg5},~\ref{g5g7},~\ref{g6g7}, and~\ref{onlyg6} imply that in all of these cases, $\NSW(X') < \NSW(X)$.
\end{proof}

\section{Conclusion}
In this paper, we have shown that EFX allocations always exist when we have three agents with additive valuations. Our proof is constructive and leads to a pseudo-polynomial algorithm.  We have identified some crucial barriers in the current techniques and have overcome them with novel techniques. We feel that this is step towards resolving the bigger question whether EFX allocations always exist when we have $n$ agents. 

Our proofs crucially use additivity and do not work for more general valuation functions like submodular or subadditive. Therefore, an ideal next step would be to investigate EFX allocations with three agents, but more general valuations.

We also showed some barriers to finding \emph{efficient} EFX allocations (EFX allocations with high Nash social welfare). While {efficient} approximate EFX allocations or {efficient} EFX allocations with bounded charity exist, it is unclear how much efficiency we can guarantee for complete EFX allocations---i.e., what trade-off with efficiency is required to guarantee fairness.

\section*{Acknowledgements}
We would like to thank Hannaneh Akrami, Corinna Coupette, Kavitha Telikepalli and Alkmini Sgouritsa for helpful discussions. We thank Corinna Coupette also for a careful reading of the manuscript. This work is partially supported by NSF Grants CCF-1755619 (CRII) and CCF-1942321 (CAREER).


\newcommand{\etalchar}[1]{$^{#1}$}

\end{document}